\documentclass[11pt, a4paper]{article}
\usepackage{amsmath, amsfonts, amssymb, amsthm,mathtools}
\usepackage{enumitem}
\usepackage{cite}
\usepackage{graphicx}
\usepackage{color}
\usepackage[top=1in, bottom=1in, left=1in, right=1in]{geometry}
\usepackage{comment}
\usepackage{url}
\usepackage{booktabs}
\usepackage{array}
\newtheorem{theorem}{Theorem}[section]
\newtheorem{lemma}[theorem]{Lemma}
\newtheorem{corollary}[theorem]{Corollary}
\newtheorem{proposition}[theorem]{Proposition}
\newtheorem{example}[theorem]{Example}
\newtheorem{remark}[theorem]{Remark}
\newtheorem{definition}[theorem]{Definition}

\newcommand{\F}{\mathbb{F}}
\newcommand{\R}{\mathcal{R}}
\newcommand{\K}{\mathbb{K}}
\newcommand{\eps}{\varepsilon}
\newcommand{\ip}[2]{\langle #1,#2\rangle}
\newcommand{\ipf}[2]{\langle #1,#2\rangle_{f}}

\newcommand{\wt}{\operatorname{wt}}
\newcommand{\swt}{\operatorname{swt}}

\newcommand{\dpe}{\perp_{E}}
\DeclareMathOperator{\Hom}{Hom}

\title{Annihilator and twisted Euclidean duality for quasi-polycyclic codes}

\author{Tushar Bag$^{1}$,  Edgar Martínez-Moro$^{2}$, Daniel Panario$^3$}

\date{
\small{
1. Quantum Technology Institute \& Department of Mathematics, SRM University-AP,\\ Amaravati 522240, Andhra Pradesh, India \\
2.  IMUVa-Institute of Mathematics, University of Valladolid, Spain\\
3. School of Mathematics and Statistics, Carleton University, Ottawa, Ontario K1S 5B6, Canada.
}
\today}

\begin{document}

\maketitle

\begingroup
\renewcommand\thefootnote{}
\footnotetext{
Email: tusharbag2011@gmail.com (T. Bag) [corresponding author], edgar.martinez@uva.es (E. Mart\'inez-Moro),
daniel@math.carleton.ca (D. Panario)
}
\endgroup

\begin{abstract}
Let $f\in\F_q[x]$ be a monic polynomial of degree $m$ with $f(0)\ne0$,
and let $\R=\F_q[x]/\langle f\rangle$. Under coefficient expansion, a
quasi-polycyclic (QP) code of index $n$ corresponds to an $\R$-submodule
of $\R^n$. In this paper, we study QP codes with respect to the
annihilator duality.  
We show that this form is non-degenerate and that the
annihilator dual of a QP code is again a QP code. We also give an
equivalent description of the dual in terms of the $\R$-valued dot
product, which leads to self-orthogonality criteria.
We determine the Gram matrix of the annihilator form and obtain an
explicit formula for its determinant. In coefficient coordinates, this
shows that the annihilator dual can be viewed as a twisted Euclidean
dual. Using this description, we characterize when a coordinatewise
$\F_q$-linear map converts annihilator duality into ordinary Euclidean
duality.
For squarefree $f$, we show that annihilator duality decomposes into
ordinary Euclidean duality on the components arising from the Chinese
Remainder Theorem. This gives simple criteria for self-orthogonal,
self-dual, dual-containing, and complementary-dual QP codes. We show how the annihilator dual interacts with the Hamming weight enumerator and compute the MacWilliams transform associated with that duality.  Finally, we apply these results to Calderbank--Shor--Steane
 and Steane-enlarged quantum-code constructions over $\R$ and, when
a suitable duality-preserving coordinate map exists, over $\F_q$. This gives binary and ternary stabilizer codes with minimum-distance lower bounds matching the best known bounds, most of which arise from rings $\R$ that are not fields.

\end{abstract}

\medskip\noindent\textbf{Keywords:} quasi-polycyclic code, annihilator
dual,  Gram determinant, duality-preserving Gray map,
CSS construction.

\medskip\noindent\textbf{MSC (2020):} 94B15, 81P70, 13M10, 11T71.

\section{Introduction}
A central theme in algebraic coding theory is the interaction between a
code and its dual. Many of the most useful families of codes are defined
by an invariance property. Cyclic codes are invariant under the cyclic
shift, and constacyclic codes under a twisted shift. One then asks
whether the dual code is invariant in the same way, so that the family is
closed under duality. When it is, self-orthogonal codes, linear
complementary dual (LCD) codes and quantum codes can all be studied
within the same family.

\emph{Polycyclic codes} provide a  common generalization of those families. Fix a
finite field $\F_q$, where $q$ is a power of a prime; a linear code of length $m$ and dimension $k$ is just a $ k$-dimensional subspace of $\mathbb F_q^m$. If we fix a monic polynomial $f\in\F_q[x]$ of degree $m$, a
polycyclic code associated with $f$ is a linear code closed
under the \emph{$f$-polycyclic shift}, which moves each coordinate one
step to the right and feeds the overflow back according to the
coefficients of $f$.

Under the usual coefficient identification between $\F_q^m$ and
$\R=\F_q[x]/\langle f\rangle$, an $f$-polycyclic code corresponds to an
ideal of $\R$. We shall use these two equivalent descriptions
interchangeably when no confusion can arise.
Taking $f=x^m-1$ recovers cyclic codes and
$f=x^m-\lambda$ recovers constacyclic codes. The name \emph{polycyclic}
was used by L\'opez-Permouth et al. \cite{LPS}. Berger and El Amrani~\cite{BEA} studied codes over finite quotients of
polynomial rings; in the present notation, a quasi-polycyclic (QP) code
of index $n$ corresponds, under coordinatewise coefficient expansion,
to an $\R$-submodule of $\R^n$, so that the same $f$-polycyclic shift
acts simultaneously on the $n$ blocks.

The difficulty appears when one considers duality. Alahmadi et al. \cite{ADLS}
observed that the Euclidean dual of a polycyclic code is in general
\emph{not} polycyclic: it is closed under the adjoint (``sequential'')
shift instead, so it leaves the domain in which one is working. 
Their solution was to change the bilinear form used to define the dual
rather than the code itself. They introduced the annihilator form
$\langle g,h\rangle_f=\eps(gh),$
where $\eps:\R\to\F_q$ returns the constant term of the reduction of
$gh$ modulo $f$. The dual taken with respect to this form, called the
annihilator dual and written $C^\circ$, is again polycyclic.
Fotue-Tabue et al. \cite{FMB} developed this duality over finite chain rings, and in \cite{BMS,BMSSY} this duality was related to the Mattson-Solomon transform domain. Wu et al. \cite{WSS} considered
one-generator QP codes over finite chain rings, and Ou-azzou et al.
\cite{ONA} studied the algebraic structure of QP codes together with
quantum-code constructions related to them. Related recent work includes QP and skew
QP codes over $\F_q$ by Bag and Panario \cite{BP}, polycyclic codes over
serial rings and their annihilator CSS construction by Bajalan and
Mart\'inez-Moro \cite{BM}, and multi-generator generalized QP codes and
their constituent decompositions by Suxena et al. \cite{SPD}.
\vskip 3pt

Our starting point is the bilinear form $(g,h)\longmapsto \eps(gh)$
on $\R$. Under the standing assumption $f(0)\ne0$, this form is
non-degenerate. 
For $u=(u_1,\ldots,u_n)$ and $v=(v_1,\ldots,v_n)$ in $\R^n$,  if we consider the
$\R$-valued dot product in $\R^n$ given by
$
u\cdot v=\sum_{i=1}^n u_i v_i$, 
since $u\cdot v\in\R$, the constant-term map $\varepsilon$ can be applied
to it, giving the annihilator form
$\langle u,v\rangle=\varepsilon(u\cdot v).$
For a QP code $C\subseteq\R^n$, the $\R$-module structure together with
the non-degeneracy of the form shows that
$\varepsilon(u\cdot v)=0~\text{for all }u\in C$
is equivalent to
$u\cdot v=0~\text{for all }u\in C.$
Thus  we  denote by $ C^\circ=C^{\perp_{\mathcal R}}$, the  Euclidean orthogonal over $\mathcal R$, for an arbitrary index and any number of generators, and therefore $C^\circ$ is again a QP code.  This is consistent with the  index 1 case, where $C^\circ=C^{\perp_{\mathcal R}}=\text{Ann}_{\mathcal R}(\langle g\rangle)$,  for a
code generated by $g$; see \cite{BMS} for a proof.   We note in
Remark~\ref{rem:wss} that the proof of \cite[Prop.~4.1]{WSS} appears to
use a scalar-extraction identity that does not hold for a general
multiplier; the argument above avoids that identity.

A second objective is to make the annihilator form explicit in
coordinates. Its Gram matrix $G$ is a Hankel matrix, and we prove that its determinant is given by
$
\det G=(-1)^{(m-1)(m-2)/2}f_0^{\,m-1}.
$
The coordinate description that we provide also shows that the annihilator dual can be
viewed as a twisted Euclidean dual in $\mathbb F_q^{mn}$. The determinant gives the
discriminant needed to determine when a coordinatewise change of basis
converts annihilator orthogonality into Euclidean orthogonality.
\vskip 3pt
The matrix problem that arises here is closely related to classical
factorizations of symmetric matrices over finite fields, as in Seroussi
and Lempel \cite{SL}. Duality-preserving maps from codes over
rings to codes over smaller rings were later studied through suitable
bases by Szabo and Ulmer \cite{SU}. In this paper, a
\emph{coordinatewise duality-preserving Gray map} means a map induced by
one matrix $S\in \mathrm{GL}_m(\F_q)$ on each $\R$-coordinate.  No Hamming
isometry is included in the definition.  We prove that such a map exists
if and only if $G$ is congruent to a scalar multiple of $I_m$ (the $m\times m$ identity matrix).  Thus it
always exists when $q$ is even or $m$ is odd; for $q$ odd and $m$ even,
it exists exactly when $\det G$ is a square in $\F_q^\ast$.

When $f$ is squarefree, the Chinese Remainder decomposition turns
$C^\circ$ into the direct sum of the ordinary Euclidean duals of the
constituents.  The identity $C^\circ=C^{\perp_{\mathcal R}}$  also gives the generator
criterion $MM^{\mathsf T}=0$ for annihilator self-orthogonality and, for a one-generator code generated by   $g=(g_1,\dots,g_n)$ it gives
$\sum_{i=1}^n g_i(x)^2\equiv0\pmod f.$
We also apply these duality results to CSS quantum-code constructions.
For background on stabilizer codes over finite fields, we refer to
\cite{CRSS,KKKS}.

The paper is organized as follows.  Section~\ref{sec:prelim} fixes the
module and CRT notation.  Section~\ref{sec:form} develops the annihilator form and proves $C^\circ=C^{\perp_{\mathcal R}}$.
Section~\ref{sec:gram} computes the Gram matrix and its determinant and describes the
annihilator dual as a twisted Euclidean dual.
Section~\ref{sec:gray} classifies coordinatewise duality-preserving
maps.  Section~\ref{sec:duals} gives the syzygy, CRT, and
self-orthogonality consequences.  Section~\ref{sec:mw} presents the
corresponding MacWilliams transform. Section~\ref{sec:css} gives the
quantum-code constructions, and Section~\ref{sec:conclusion} concludes the paper.

\section{Preliminaries}\label{sec:prelim}
Let $\F_q$ be the finite field with $q=p^t$ elements ($p$ a prime number) and fix a monic
polynomial $f\in\mathbb F_q[x]$ of degree $m\ge1$,
\begin{equation}\label{eq:f}
f(x)=x^{m}-\bigl(f_{m-1}x^{m-1}+\cdots+f_1x+f_0\bigr)\in\F_q[x],
\qquad f_0\ne0 .
\end{equation}
We  assume  $f_0\ne0$ (equivalently, $f(0)\ne0$)   throughout  the paper.
We consider the principal ideal ring $\R=\F_q[x]/\langle f\rangle$. Every element of $\R$ has a unique representative of degree smaller than $m$, so $\R$
is  an $\F_q$-algebra of dimension $m$ with basis
$\mathcal B=\{1,x,\dots,x^{m-1}\}$.

In $\R$ the relation
\eqref{eq:f} reads
\begin{equation}\label{eq:xm}
x^{m}=f_{m-1}x^{m-1}+\cdots+f_1x+f_0 ,
\end{equation}
which is the rule used to reduce a product of two representatives of a class in $\R$ back to a representative of  degree smaller than $m$. The following elementary observation is the only place where $f_0\ne0$ is
used directly, and it will be needed repeatedly.

\begin{lemma}\label{lem:xunit}
The class of $x$ is a unit in $\R$, with
\[
x^{-1}=f_0^{-1}\bigl(x^{m-1}-f_{m-1}x^{m-2}-\cdots-f_1\bigr).
\]
\end{lemma}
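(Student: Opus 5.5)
The plan is a direct verification. We exhibit the claimed inverse and multiply it by $x$ in $\R$, using only the reduction rule \eqref{eq:xm}. Set
\[
h(x)=f_0^{-1}\bigl(x^{m-1}-f_{m-1}x^{m-2}-\cdots-f_2x-f_1\bigr),
\]
which makes sense because $f_0\ne0$ is assumed in \eqref{eq:f}. Note that $h$ has degree $m-1$, so it is the canonical representative of its class in $\R$.

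The key step is the computation
\[
x\,h(x)=f_0^{-1}\bigl(x^{m}-f_{m-1}x^{m-1}-\cdots-f_1x\bigr).
\]
Substituting $x^m=f_{m-1}x^{m-1}+\cdots+f_1x+f_0$ from \eqref{eq:xm}, every term of positive degree cancels. What remains is $f_0^{-1}f_0=1$.

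Equivalently, at the polynomial level we have $x\,h(x)-1=f_0^{-1}f(x)$, which lies in $\langle f\rangle$. Since $\R$ is commutative, this one-sided identity already shows that the class of $x$ is a unit with inverse $h$. Uniqueness of inverses in a commutative ring then gives the displayed formula for $x^{-1}$.

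There is no real obstacle. The only care needed is in the boundary cases of the index bookkeeping. When $m=1$, the bracket reduces to the constant $1$, since the terms $f_{m-1}x^{m-2},\dots,f_1$ are then absent. This gives $x^{-1}=f_0^{-1}$, which is consistent with $x=f_0$ in $\R$.
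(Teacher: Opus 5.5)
Your proposal is correct and follows essentially the same argument as the paper: multiply the candidate inverse by $x$ and substitute the relation \eqref{eq:xm} so that everything except $f_0^{-1}f_0=1$ cancels. Your polynomial-level identity $x\,h(x)-1=f_0^{-1}f(x)$ and your reading of the $m=1$ case (empty sum, $x^{-1}=f_0^{-1}$) are both valid additions.
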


\begin{proof}
We write $g(x)=f_0^{-1}(x^{m-1}-f_{m-1}x^{m-2}-\cdots-f_1)$; this is
well-defined because $f_0\ne0$. Multiplying by $x$ gives
$
x\,g(x)=f_0^{-1}\bigl(x^{m}-f_{m-1}x^{m-1}-\cdots-f_1x\bigr)$, and
substituting \eqref{eq:xm} for $x^m$, the terms
$f_{m-1}x^{m-1},\dots,f_1x$ cancel and only $f_0$ survives, so
$x\,g(x)=f_0^{-1}f_0=1$.
\end{proof}

Let $\psi:\R\to\F_q^m$ be the coefficient expansion with respect to the basis
$\mathcal B=\{1,x,\ldots,x^{m-1}\}$. Thus, if $a=\sum_{\ell=0}^{m-1}a_\ell x^\ell\in\R,$
then $\psi(a)=(a_0,\ldots,a_{m-1}).$
Applying $\psi$ coordinatewise, we obtain the map $\Psi:\R^n\longrightarrow\F_q^{mn}.$
More precisely, if $u=(u_1,\ldots,u_n)\in\R^n$ with $u_i=\sum_{\ell=0}^{m-1}u_{i,\ell}x^\ell,$
then $\Psi(u)=
(u_{1,0},\ldots,u_{1,m-1},\,
u_{2,0},\ldots,u_{2,m-1},\ldots,
u_{n,0},\ldots,u_{n,m-1}).$

\begin{definition}\label{def:QP}
An \emph{$f$-quasi-polycyclic (QP) code of index $n$} is an
$\F_q$-linear code of the form
$
\Psi(C)\subseteq\F_q^{mn},
$
where $C\subseteq\R^n$ is an $\R$-submodule.\end{definition}

Thus, via the coordinate
map $\Psi$, $f$-quasi-polycyclic codes are in one-to-one correspondence
with $\R$-submodules of $\R^n$. When no confusion  arises, we identify
these two descriptions and also refer to the submodule $C$ itself as a
QP code.
The map $\Psi$ is an isomorphism of $\F_q$-vector spaces. If we write
$|C|$ for the cardinality of $C$, then
$
|C|=q^{\dim_{\F_q}\Psi(C)}$.

In order to check explicitly why Definition~\ref{def:QP} matches the
description of QP codes by an $f$-polynomial shift, it is enough  to see that multiplication by $x$ on $\R$ is an
$\F_q$-linear map; in the basis $\mathcal B$ its  companion
matrix is
\[
T_f=
\begin{pmatrix}
0&0&\cdots&0&f_0\\
1&0&\cdots&0&f_1\\
0&1&\cdots&0&f_2\\
\vdots& &\ddots& &\vdots\\
0&0&\cdots&1&f_{m-1}
\end{pmatrix}.
\]
The matrix $T_f$ acts on coordinate columns, so that,
$\Psi(u)$ being a row vector, $\Psi(xu)=\Psi(u)T_f^{\mathsf T}$.  Thus
multiplication by $x$ on $\R^n$ corresponds, under $\Psi$, to
right multiplication by the block-diagonal matrix
$I_n\otimes T_f^{\mathsf T}$ on row vectors of
$\F_q^{mn}$: it applies the $f$-polycyclic shift to each of the $n$
blocks of length $m$ simultaneously.

\begin{proposition}
A subset $C\subseteq\R^n$ is an $\R$-submodule if and only if $\Psi(C)$ is
an $\F_q$-linear subspace of $\F_q^{mn}$ that is invariant under the
simultaneous $f$-polycyclic shift of its $n$ blocks.
\end{proposition}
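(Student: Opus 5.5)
The plan is to use two facts. Since $\Psi$ is an $\F_q$-linear bijection, it carries $\F_q$-subspaces to $\F_q$-subspaces. By the computation just above the statement, $\Psi(xu)=\Psi(u)\,(I_n\otimes T_f^{\mathsf T})$, so the simultaneous $f$-polycyclic shift on $\F_q^{mn}$ is exactly multiplication by $x$ transported through $\Psi$. The proposition then amounts to saying that an $\R$-submodule of $\R^n$ is the same thing as an additive subgroup closed under multiplication by the scalars of $\F_q$ and by $x$.

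For the forward direction, let $C$ be an $\R$-submodule. It is closed under addition and under multiplication by the constants $\F_q\subseteq\R$, so $C$ is an $\F_q$-subspace of $\R^n$. Because $\Psi$ is $\F_q$-linear, $\Psi(C)$ is then a subspace of $\F_q^{mn}$. Since $xC\subseteq C$, for each $u\in C$ we get $\Psi(u)(I_n\otimes T_f^{\mathsf T})=\Psi(xu)\in\Psi(C)$. This is precisely invariance of $\Psi(C)$ under the simultaneous shift.

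For the converse, suppose $\Psi(C)$ is an invariant $\F_q$-subspace. Pulling back through the linear bijection $\Psi$, the set $C$ is an $\F_q$-subspace of $\R^n$, and the shift-invariance gives $xC\subseteq C$. By induction on $k$, this yields $x^kC\subseteq C$ for all $k\ge0$. Every $r\in\R$ has a representative $r=\sum_{\ell=0}^{m-1}r_\ell x^\ell$ with $r_\ell\in\F_q$. Hence for $u\in C$ we have $ru=\sum_\ell r_\ell(x^\ell u)$, which lies in $C$ because $C$ is an $\F_q$-subspace containing each $x^\ell u$. So $C$ is closed under the $\R$-action, and it is an $\R$-submodule.

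The only point that needs care is the identity $\Psi(xu)=\Psi(u)(I_n\otimes T_f^{\mathsf T})$. I would check it blockwise. For a single coordinate $a=\sum a_\ell x^\ell$ we have $xa=\sum_{\ell<m-1}a_\ell x^{\ell+1}+a_{m-1}x^m$. Reducing $x^m$ via \eqref{eq:xm} turns the last term into $a_{m-1}(f_0+f_1x+\cdots+f_{m-1}x^{m-1})$. Reading off coefficients, the $j$-th coordinate of $\psi(xa)$ is $a_{j-1}+a_{m-1}f_j$, with $a_{-1}=0$. This is the $j$-th entry of $T_f\,\psi(a)^{\mathsf T}$, and transposing gives the row-vector form. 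Once this is in place, the argument is entirely formal, and the hypothesis $f_0\neq0$ plays no role.
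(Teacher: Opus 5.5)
Your proof is correct and follows the paper's argument essentially step for step. In the forward direction you use closure under $\F_q$ and under multiplication by $x$ together with the $\F_q$-linearity of $\Psi$; in the converse you pull back through the bijection $\Psi$, iterate to get $x^ju\in C$, and expand $r=\sum_\ell r_\ell x^\ell$. Your blockwise check of $\Psi(xu)=\Psi(u)(I_n\otimes T_f^{\mathsf T})$, with coefficient $a_{j-1}+a_{m-1}f_j$, is correct and supplies what the paper only asserts before the statement. You are also right that $f_0\neq0$ plays no role here.
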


\begin{proof}
Suppose first that $C$ is an $\R$-submodule. Then $C$ is closed under
addition and under multiplication by every element of $\R$. Since
$\F_q\subseteq\R$, it follows that $C$ is an $\F_q$-subspace of
$\R^n$.

The map $\Psi:\R^n\to\F_q^{mn}$ is an $\F_q$-linear isomorphism.
Therefore, if $u,v\in C$ and $a\in\F_q$, then
$\Psi(u+v)=\Psi(u)+\Psi(v)$ and
$\Psi(au)=a\Psi(u)$. Since $u+v,au\in C$, we obtain
$\Psi(u)+\Psi(v)\in\Psi(C)$ and
$a\Psi(u)\in\Psi(C)$. Hence $\Psi(C)$ is an $\F_q$-linear code of
length $mn$.

Also, since $x\in\R$ and $C$ is an $\R$-submodule, $xu\in C$ for every
$u\in C$. Under $\Psi$, multiplication by $x$ corresponds to the
simultaneous $f$-polycyclic shift of the $n$ blocks. Hence, whenever
$\Psi(u)\in\Psi(C)$, its simultaneous $f$-polycyclic shift
$\Psi(xu)$ also belongs to $\Psi(C)$. Therefore $\Psi(C)$ is invariant
under this shift.

Conversely, suppose that $\Psi(C)$ is an $\F_q$-linear code of length
$mn$ and is invariant under the simultaneous $f$-polycyclic shift.
Since $\Psi$ is an $\F_q$-linear isomorphism, $C$ is an
$\F_q$-subspace of $\R^n$. Let $u\in C$. Then
$\Psi(u)\in\Psi(C)$. By shift invariance, $\Psi(xu)\in\Psi(C)$, and
since $\Psi$ is one-to-one, $xu\in C$. Repeating this gives
$x^ju\in C$ for all $j\geq0$.

Now every $r\in\R$ can be written as
$r=a_0+a_1x+\cdots+a_{m-1}x^{m-1}$ with $a_i\in\F_q$. Hence
$ru=a_0u+a_1xu+\cdots+a_{m-1}x^{m-1}u\in C$, since $C$ is an
$\F_q$-subspace. Therefore $C$ is an $\R$-submodule.
\end{proof}

Thus, Definition~\ref{def:QP} agrees with the classical shift description, and for
$n=1$, the corresponding $\R$-submodules are precisely the ideals of
$\R$, giving the usual polycyclic codes.

\begin{example}{\em
Let $q=2$, $f(x)=x^2+x+1$, so $m=2$ and $\R=\F_2[x]/\langle
x^2+x+1\rangle\cong\F_4$. Take $n=2$ and let $C=\langle(1,x)\rangle$ be
the QP code generated by the single  polynomial vector $g=(1,x)$. As an
$\F_2$-space, $C$ is spanned by $\Psi(g)=(1,0\,|\,0,1)$ and
$\Psi(xg)=\Psi(x,x+1)=(0,1\,|\,1,1)$, so $\Psi(C)$ is a $[4,2]$ binary
code. One  can check that $\Psi(C)$ is invariant under the
simultaneous shift.}
\end{example}

It is well known that the polynomial $f$  can be factored into distinct monic irreducibles,
$f=\prod_{j=1}^{k}p_j^{e_j}$. Since the $p_j^{e_j}$ are pairwise
coprime, the Chinese Remainder Theorem gives an isomorphism of
$\F_q$-algebras
\begin{equation}\label{eq:crt}
\R\;\cong\;\bigoplus_{j=1}^{k}\R_j,\qquad
\R_j=\F_q[x]/\langle p_j^{e_j}\rangle,\quad j=1,\ldots , k.
\end{equation}
Each  component $\R_j$  for $j=1,\ldots , k$ is a finite chain ring: its ideals form the chain
$\R_j\supset\langle p_j\rangle\supset\langle
p_j^{2}\rangle\supset\cdots\supset\langle p_j^{e_j}\rangle=\{0\}$.  Consequently $\R$ is a finite
direct product of finite chain rings and hence a finite commutative
Frobenius ring; see also the serial-ring setting of \cite{BM}.
When $f$ is a squarefree polynomial, i.e.\ $e_j=1$ for all $j$, each factor
$\R_j=\K_j:=\F_q[x]/\langle p_j\rangle$ is a \emph{field}, namely
$\F_{q^{\deg p_j}}$, and $\R$ is semisimple. Applying \eqref{eq:crt}
coordinatewise, an $\R$-submodule $C\subseteq\R^n$ decomposes as
\begin{equation}\label{eq:Cdecomp}
C\;\cong\;\bigoplus_{j=1}^{k}C_j,\qquad C_j\subseteq\R_j^{\,n}
\ \text{an $\R_j$-linear code}, \quad j=1,\ldots , k,
\end{equation}
exactly as in \cite[Theorem~1]{BM}. We call $C_j$ the \emph{$j$-th
constituent} of $C$. The decomposition \eqref{eq:Cdecomp} will be the main
computational tool in Section~\ref{sec:duals}.

\begin{remark}
Definition~\ref{def:QP} is the module description used by Berger--El
Amrani \cite{BEA} and coincides with the QP codes of Bag--Panario
\cite{BP} and, over chain rings, with those of Wu--Shi--Sol\'e
\cite{WSS}. The polycyclic case $n=1$ is the setting of \cite{ADLS,BM,AS}.
\end{remark}

\section{The annihilator form on $\R^n$}\label{sec:form}
In this section we introduce the annihilator form used throughout the paper.  As recalled in
the introduction, the  $\mathbb F_q$-Euclidean dual of a QP code
need not be QP, so we replace the  $\mathbb F_q$-Euclidean form by the the $\mathcal R$-Euclidean form described in terms of the  annihilator.

Let $\eps:\R\to\F_q$ be the $\F_q$-linear functional returning the
constant coefficient of the representative of degree smaller than $m$.  For
$g,h\in\R$ put $\ipf{g}{h}=\eps(gh)$, as in \cite{ADLS}.  The
associativity of multiplication in $\R$ is the basic structural property
used below.

\begin{definition}\label{def:form}
The \emph{annihilator form} on $\R^n$ is
\[
\ip{u}{v}=\eps\!\Bigl(\sum_{i=1}^{n}u_iv_i\Bigr)=\eps(u\cdot v)
=\sum_{i=1}^{n}\ipf{u_i}{v_i}\in\F_q,
\qquad u,v\in\R^n ,
\]
where $u\cdot v=\sum_i u_iv_i\in\R$ denotes the $\R$-valued dot product.
The \emph{annihilator dual} of $C\subseteq\R^n$ is
\[
C^{\circ}=\{v\in\R^n:\ip{u}{v}=0\ \text{for all }u\in C\},
\]
and the \emph{$\mathcal R$}-Euclidean orthogonal of $C$ is
\[
C^{\perp_{\mathcal R}}=\{v\in\R^n:u\cdot v=0\ \text{in}\ \R\ \text{for all }u\in C\}.
\]
\end{definition}
We observe that $C^\circ$ requires only that the constant term of $u\cdot v$ vanish, whereas membership in $C^{\perp_{\mathcal R}}$ requires the entire element $u\cdot v\in\mathcal R$ to vanish.

\begin{lemma}\label{lem:bilinear}
The annihilator form is a symmetric $\F_q$-bilinear form on $\R^n$.
\end{lemma}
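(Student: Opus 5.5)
The plan is to reduce everything to two facts: $\eps$ is $\F_q$-linear, and $\R$ is a commutative $\F_q$-algebra. We write the form as the composite $\ip{u}{v}=\eps(u\cdot v)$ of the $\R$-valued dot product with $\eps$. Symmetry and bilinearity will then be checked on the dot product and transported through $\eps$.

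\textbf{Linearity of $\eps$.} Each class in $\R$ has a unique representative of degree smaller than $m$, and the representative of $a+b$ (resp.\ $\lambda a$, $\lambda\in\F_q$) is the sum (resp.\ the scalar multiple) of the representatives. This is because reduction modulo $f$ does not raise degree and the representatives of degree $<m$ form an $\F_q$-subspace. Hence the constant coefficient depends $\F_q$-linearly on the class, i.e.\ $\eps$ is an $\F_q$-linear functional. The paper already asserts this in defining $\eps$; I would state it once explicitly.

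\textbf{The dot product.} Next I will verify that $(u,v)\mapsto u\cdot v=\sum_i u_iv_i$ is $\F_q$-bilinear and symmetric as a map $\R^n\times\R^n\to\R$. For $u,u',v\in\R^n$ and $\lambda,\mu\in\F_q\subseteq\R$, distributivity in $\R$ gives
\[
(\lambda u+\mu u')\cdot v=\lambda(u\cdot v)+\mu(u'\cdot v).
\]
Here scalars from $\F_q$ commute with everything because $\F_q$ sits inside $\R$ as constants and $\R$ is commutative. Commutativity of $\R$ also gives $u_iv_i=v_iu_i$ for each $i$, hence $u\cdot v=v\cdot u$. Linearity in the second argument then follows from symmetry.

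\textbf{Conclusion.} Composing with the linear map $\eps$ gives
\[
\ip{\lambda u+\mu u'}{v}=\lambda\ip{u}{v}+\mu\ip{u'}{v}
\]
and $\ip{u}{v}=\ip{v}{u}$, which together prove bilinearity and symmetry. The same argument also confirms the identity $\eps(u\cdot v)=\sum_i\ipf{u_i}{v_i}$ stated in Definition~\ref{def:form}, since $\eps$ is additive.

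I do not expect a real obstacle. The only point needing care is the justification that $\eps$ is well defined and linear on classes, which rests on the uniqueness of reduced representatives.
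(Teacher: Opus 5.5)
Your proof is correct and follows essentially the same route as the paper. Symmetry comes from commutativity of $\R$. Bilinearity comes from $\F_q$-bilinearity of the dot product composed with the $\F_q$-linear functional $\eps$, with linearity in the second argument then deduced from symmetry. Your explicit justification that $\eps$ is well defined and linear, via uniqueness of reduced representatives, is a harmless and slightly more careful addition to what the paper takes for granted.
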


\begin{proof}
Since $\R$ is commutative, $u_iv_i=v_iu_i$ for each $i$, hence
$u\cdot v=v\cdot u$ and therefore $\ip{u}{v}=\eps(u\cdot
v)=\eps(v\cdot u)=\ip{v}{u}$; the form is symmetric.
For bilinearity, let $a,b\in\F_q$ and $u,w,v\in\R^n$. Using that the dot
product is $\F_q$-bilinear and that $\eps$ is $\F_q$-linear,
\[
\ip{au+bw}{v}
=\eps\bigl((au+bw)\cdot v\bigr)
=\eps\bigl(a(u\cdot v)+b(w\cdot v)\bigr)
=a\,\eps(u\cdot v)+b\,\eps(w\cdot v)
=a\ip{u}{v}+b\ip{w}{v}.
\]
Linearity in the second argument follows from the symmetry.
\end{proof}


\begin{lemma}\label{lem:selfadj}
For all $r\in\R$ and all $u,v\in\R^n$,
$
\ip{ru}{v}=\ip{u}{rv}.
$ That is, multiplication by any ring element is self-adjoint for the
annihilator form.
\end{lemma}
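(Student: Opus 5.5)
The plan is to reduce everything to the definition of the form and to the fact that $\R$ is commutative and associative. For $r\in\R$ and $u\in\R^n$, write $ru=(ru_1,\ldots,ru_n)$ for the componentwise scalar multiplication that makes $\R^n$ an $\R$-module. The first step is to compute the $\R$-valued dot product:
\[
(ru)\cdot v=\sum_{i=1}^n (ru_i)v_i=\sum_{i=1}^n r(u_iv_i)=r\,(u\cdot v),
\]
using associativity in $\R$. In the same way, commutativity and associativity give
\[
u\cdot(rv)=\sum_{i=1}^n u_i(rv_i)=\sum_{i=1}^n r(u_iv_i)=r\,(u\cdot v).
\]
Hence $(ru)\cdot v=u\cdot(rv)$ already as elements of $\R$.

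The second step is to apply the $\F_q$-linear functional $\eps$ to both sides. Since both sides are the same element of $\R$, their constant terms agree, so
\[
\ip{ru}{v}=\eps\bigl((ru)\cdot v\bigr)=\eps\bigl(u\cdot(rv)\bigr)=\ip{u}{rv}.
\]
Note that we never need to pull $r$ out of $\eps$. The identity holds before $\eps$ is applied, so no scalar-extraction property of $\eps$ is required; that property would be false for nonconstant $r$.

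There is no genuine obstacle. The only care needed is to check that the module action on $\R^n$ is componentwise, which matches the QP identification via $\Psi$, and that the equality holds at the level of $\R$. Equivalently, in coordinates this says that the matrix of multiplication by $r$ is symmetric with respect to the Gram matrix of $\eps$, but the ring-level argument makes that unnecessary.
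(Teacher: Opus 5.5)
Your proof is correct and is essentially the paper's argument: show $(ru)\cdot v=r(u\cdot v)=u\cdot(rv)$ in $\R$ using commutativity and associativity, then apply $\eps$ to both sides. Your observation that $r$ is never pulled out of $\eps$ is the same point the paper makes in its remark on the flawed scalar-extraction identity in \cite{WSS}.
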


\begin{proof}
It is easy to check that
\[
(ru)\cdot v=\sum_{i=1}^{n}(ru_i)v_i=r\sum_{i=1}^{n}u_iv_i=r(u\cdot v)
=\sum_{i=1}^{n}u_i(rv_i)=u\cdot(rv).
\]
Applying $\eps$ to the two ends gives $\ip{ru}{v}=\eps\bigl((ru)\cdot
v\bigr)=\eps\bigl(u\cdot(rv)\bigr)=\ip{u}{rv}$.
\end{proof}


\begin{proposition}\label{prop:nondeg}
The annihilator form is non-degenerate on $\R^n$: if $u\in\R^n$
satisfies $\ip{u}{v}=0$ for all $v\in\R^n$, then $u=0$.
\end{proposition}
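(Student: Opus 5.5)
The plan is to reduce to the case $n=1$ and then use that $x$ is a unit (Lemma~\ref{lem:xunit}) to read off every coefficient of an element from the constant terms of its products with powers of $x$.

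\emph{Reduction to $n=1$.} Suppose $u=(u_1,\dots,u_n)$ satisfies $\ip{u}{v}=0$ for all $v\in\R^n$. Fix an index $i$ and an arbitrary $h\in\R$, and test against the vector $v$ that has $h$ in position $i$ and $0$ elsewhere. Definition~\ref{def:form} then gives $\ipf{u_i}{h}=\eps(u_ih)=0$. So it suffices to prove: if $g\in\R$ satisfies $\eps(gh)=0$ for all $h\in\R$, then $g=0$.

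\emph{The case $n=1$.} Take such a $g$. Since $\R$ is commutative and associative, for every $r\in\R$ and every $h$ we have $\eps((rg)h)=\eps(g(rh))=0$; this is Lemma~\ref{lem:selfadj} with $n=1$. Hence the set $I=\{g:\eps(gh)=0\ \forall h\}$ is an ideal of $\R$. Taking $h=1$ shows $\eps(g')=0$ for every $g'\in I$, so $I$ is an ideal on which $\eps$ vanishes.

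By Lemma~\ref{lem:xunit}, $x$ is a unit, so $x^{-\ell}g\in I$ for every $\ell\ge 0$. The goal is to show that the constant term of $x^{-\ell}g$ equals, up to a nonzero factor, the coefficient $g_\ell$ plus terms in lower coefficients. Then induction on $\ell=0,1,\dots,m-1$ kills all coefficients of $g$.

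Here is the computation I expect to work. Write $g=\sum_{\ell<m}g_\ell x^\ell$. Multiplication by $x^{-1}$ is the inverse of $T_f$. From Lemma~\ref{lem:xunit}, $x^{-1}\cdot x^{j}=x^{j-1}$ for $j\ge1$, and $x^{-1}\cdot 1$ has constant term $-f_0^{-1}f_1$.

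\emph{Alternative for the coefficient bookkeeping.} A cleaner route is to test against $h=x^{j}$ for $0\le j\le m-1$ directly, and show the Gram-type matrix $(\eps(x^{i+j}))_{i,j}$ is invertible. The entry $\eps(x^{k})$ is $1$ if $k=0$, is $0$ for $1\le k\le m-1$, and is $f_0$ for $k=m$. So row $i$ of this matrix has zeros at positions $j<m-i$ with $j\ne0$, and $f_0$ at position $j=m-i$ when $i\ge1$. After moving row $0$, the matrix is triangular with respect to the anti-diagonal, with diagonal entries $1$ and $f_0,\dots,f_0$. Hence it is nonsingular because $f_0\ne0$.

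Then $\eps(gx^j)=\sum_i g_i\eps(x^{i+j})=0$ for all $j$ forces $g=0$. The main obstacle is verifying this triangular structure carefully, i.e.\ that $\eps(x^k)=0$ for $1\le k\le m-1$ and $\eps(x^m)=f_0$ by \eqref{eq:xm}. The entries with $k>m$ lie below the anti-diagonal and do not affect the determinant.
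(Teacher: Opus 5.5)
Your argument is correct. The reduction to $n=1$ is the same as the paper's, but you settle the case $n=1$ differently.

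The paper uses Lemma~\ref{lem:xunit} to exhibit an explicit partner. If $w\ne0$ and $j$ is the least index with $w_j\ne0$, then $x^{-j}w=w_j+w_{j+1}x+\cdots+w_{m-1}x^{m-1-j}$, so $\eps(x^{-j}w)=w_j\ne0$.

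You instead test only against $1,x,\dots,x^{m-1}$. You show that the Hankel matrix $(s_{i+j})$ is nonsingular using $s_0=1$, $s_1=\cdots=s_{m-1}=0$ and $s_m=f_0\ne0$. This is precisely the anti-triangular structure the paper exploits later in Theorem~\ref{thm:det}. To state it cleanly, note that the exception $j\ne0$ is needed only in row $0$, since $s_i=0$ for $1\le i\le m-1$. Hence $G=\mathrm{diag}(1,G')$, where $G'$ is anti-triangular with $f_0$ on its anti-diagonal. Equivalently, your ``moving row $0$'' should mean moving it to the bottom.

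Each route has its own advantage. Yours never inverts $x$ and produces $\det G=\pm f_0^{m-1}$ along the way, effectively merging this proposition with Theorem~\ref{thm:det}. The paper's route is shorter and gives an explicit non-orthogonal element.

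The first sketch you abandoned is in fact the paper's proof, and it closes at once. You have $\eps(x^{-\ell}g)=g_\ell+\sum_{i<\ell}g_i\,\eps(x^{i-\ell})$, with coefficient exactly $1$. Choosing $\ell$ minimal with $g_\ell\ne0$ kills the sum, so neither the induction nor the ideal $I$ is needed.
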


\begin{proof}
First take $n=1$.  Let
$w=\sum_{\ell=0}^{m-1}w_\ell x^\ell\ne0$, and let $j$ be the least
index for which $w_j\ne0$.  By Lemma~\ref{lem:xunit}, $x^{-j}\in\R$.
Since $w_0=\cdots=w_{j-1}=0$,
\[
   x^{-j}w=w_j+w_{j+1}x+\cdots+w_{m-1}x^{m-1-j},
\]
whose displayed exponents all lie between $0$ and $m-1$.  Hence
$\eps(x^{-j}w)=w_j\ne0$.  Thus no nonzero $w\in\R$ is orthogonal to all
of $\R$.

For general $n$, if $u=(u_1,\dots,u_n)$ is orthogonal to every vector,
fix $i$ and choose $v$ with an arbitrary $h\in\R$ in the $i$-th
coordinate and zeros elsewhere.  Then $\eps(u_i h)=0$ for every
$h\in\R$, so the case $n=1$ gives $u_i=0$.  This holds for every $i$.
\end{proof}

\begin{theorem}\label{thm:basic}
Let $C\subseteq\R^n$ be a QP code. Then:
\begin{enumerate}[label=\normalfont(\alph*),leftmargin=2.2em]
\item $C^{\circ}$ is a QP code;
\item $C^{\circ}=C^{\perp_{\mathcal R}}$;
\item if $C$ is generated by a set $S\subset \mathcal R^n$ as an $\R$-module, then $$C^\circ=\{v\in\R^n:u\cdot v=0\ \text{for all}\ u\in S\};$$
\item $|C|\,|C^{\circ}|=q^{mn}$;
\item $(C^{\circ})^{\circ}=C$.
\end{enumerate}
\end{theorem}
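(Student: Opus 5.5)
I would begin with (b), since the other parts follow from it and from the non-degeneracy in Proposition~\ref{prop:nondeg}.

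The inclusion $C^{\perp_{\mathcal R}}\subseteq C^\circ$ is immediate. If $u\cdot v=0$, then $\eps(u\cdot v)=0$.

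For the reverse inclusion, let $v\in C^\circ$ and $u\in C$. For every $r\in\R$ we have $ru\in C$, because $C$ is an $\R$-submodule. Hence
\[
0=\ip{ru}{v}=\eps\bigl(r(u\cdot v)\bigr)=\ipf{r}{u\cdot v},
\]
using the computation in Lemma~\ref{lem:selfadj}. So the element $w=u\cdot v\in\R$ satisfies $\eps(rw)=0$ for all $r\in\R$. The case $n=1$ of Proposition~\ref{prop:nondeg} then forces $w=0$. This proves (b).

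Part (a) follows from (b). The set $C^{\perp_{\mathcal R}}$ is an $\R$-submodule: it is closed under addition, and for $s\in\R$ we have $u\cdot(sv)=s(u\cdot v)=0$.

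For (c), let $S'=\{v\in\R^n: u\cdot v=0 \text{ for all } u\in S\}$. Clearly $C^{\perp_{\mathcal R}}\subseteq S'$. Conversely, if $u\cdot v=0$ for all $u\in S$, then every element of $C$ has the form $\sum r_k u_k$ with $u_k\in S$. Therefore
\[
\Bigl(\sum r_k u_k\Bigr)\cdot v=\sum r_k\,(u_k\cdot v)=0 .
\]
Combining this with (b) gives (c).

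For (d) and (e) I would pass to linear algebra over $\F_q$. The form $\ip{\cdot}{\cdot}$ is a symmetric bilinear form (Lemma~\ref{lem:bilinear}) on the $mn$-dimensional $\F_q$-space $\R^n$. It is non-degenerate by Proposition~\ref{prop:nondeg}. Hence the map
\[
\R^n\to\Hom_{\F_q}(\R^n,\F_q),\qquad v\mapsto\ip{\cdot}{v},
\]
is injective, and therefore an isomorphism by equality of dimensions. Restricting functionals to $C$ gives a surjection $\Hom_{\F_q}(\R^n,\F_q)\to\Hom_{\F_q}(C,\F_q)$. Its composite with the isomorphism above has kernel exactly $C^\circ$. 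Thus
\[
\dim_{\F_q} C^\circ = mn-\dim_{\F_q} C,
\]
and since $|C|=q^{\dim_{\F_q}C}$ this gives (d).

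For (e), the inclusion $C\subseteq (C^\circ)^\circ$ is trivial by symmetry of the form. Applying (d) twice gives
\[
\dim_{\F_q}(C^\circ)^\circ=mn-(mn-\dim_{\F_q}C)=\dim_{\F_q}C,
\]
so the two spaces are equal.

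The only step with real content is the reverse inclusion in (b). It needs both the closure of $C$ under multiplication by all of $\R$ and the scalar (index one) non-degeneracy, and this is where an $\F_q$-subspace that is not an $\R$-submodule would fail. Everything else is routine bookkeeping.
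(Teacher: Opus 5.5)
Your proposal is correct and follows the paper's proof essentially step for step. The key reverse inclusion in (b) is identical: you use $ru\in C$ and the $n=1$ case of Proposition~\ref{prop:nondeg}. The remaining differences are cosmetic. You derive (a) from (b), whereas the paper applies self-adjointness (Lemma~\ref{lem:selfadj}) to $C^\circ$ directly. For (d) you use the isomorphism $\R^n\cong\Hom_{\F_q}(\R^n,\F_q)$, whereas the paper shows the $k$ conditions $\ip{c_i}{\cdot}=0$ are independent; both are the same rank count.
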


\begin{proof}
\begin{enumerate}
\item[(a)] Since the annihilator form is $\F_q$-bilinear,
$C^\circ$ is an $\F_q$-subspace of $\R^n$. Indeed, if
$v,w\in C^\circ$ and $a\in\F_q$, then for every $u\in C$,
$\ip{u}{v+w}=\ip{u}{v}+\ip{u}{w}=0$ and
$\ip{u}{av}=a\ip{u}{v}=0$. Thus $v+w\in C^\circ$ and
$av\in C^\circ$.

It remains to show that $C^\circ$ is closed under multiplication by
elements of $\R$. Let $v\in C^\circ$ and $r\in\R$. For any $u\in C$,
since $C$ is an $\R$-submodule, we have $ru\in C$. Hence, by
Lemma~\ref{lem:selfadj},
$\ip{u}{rv}=\ip{ru}{v}=0$,
because $ru\in C$ and $v\in C^\circ$. Therefore $rv\in C^\circ$.

Hence $C^\circ$ is an $\R$-submodule of $\R^n$, and so it is a
QP code.

\item[(b)] If $v\in C^{\perp_{\mathcal R}}$, then
$u\cdot v=0$ for every $u\in C$. Hence
$\ip{u}{v}=\eps(u\cdot v)=0$, and therefore
$v\in C^\circ$. Thus
$C^{\perp_{\mathcal R}}\subseteq C^\circ$.

For the reverse inclusion, let $v\in C^\circ$. Fix $u\in C$ and set
$w=u\cdot v\in\R$. Since $C$ is an $\R$-submodule, $hu\in C$ for every
$h\in\R$. As $v\in C^\circ$, we have
$\ip{hu}{v}=0$ for every $h\in\R$. Therefore
\[
0=\ip{hu}{v}
=\eps\bigl((hu)\cdot v\bigr)
=\eps\bigl(h(u\cdot v)\bigr)
=\eps(hw)
\]
for every $h\in\R$. By Proposition~\ref{prop:nondeg}, applied with
$n=1$, this implies $w=0$. Hence $u\cdot v=0$.

Since $u\in C$ was arbitrary, $v\in C^{\perp_{\mathcal R}}$. Thus
$C^\circ\subseteq C^{\perp_{\mathcal R}}$, and hence
$C^\circ=C^{\perp_{\mathcal R}}$.

\item[(c)] Suppose that $v\in\R^n$ satisfies
$u\cdot v=0$ for every $u\in S$. Since $S$ generates $C$ as an
$\R$-module, every element $u\in C$ is a finite $\R$-linear
combination of elements of $S$. Thus, for some
$u_1,\ldots,u_t\in S$ and $r_1,\ldots,r_t\in\R$, we have $u=r_1u_1+\cdots+r_tu_t.$
Therefore,
\[
u\cdot v
=
(r_1u_1+\cdots+r_tu_t)\cdot v
=
r_1(u_1\cdot v)+\cdots+r_t(u_t\cdot v)
=
0.
\]
Hence $u\cdot v=0$ for every $u\in C$, so
$v\in C^{\perp_{\mathcal R}}$. By part (b),
$v\in C^\circ$.

Conversely, let $v\in C^\circ$. By part (b),
$v\in C^{\perp_{\mathcal R}}$, and therefore
$u\cdot v=0$ for every $u\in C$. Since $S\subseteq C$, this holds in
particular for every $u\in S$. Hence
\[
C^\circ
=
\{v\in\R^n:u\cdot v=0\ \text{for all }u\in S\}.
\]

\item[(d)] Let $k=\dim_{\F_q}C$, and $\{c_1,\ldots,c_k\}$ be a basis of $C$.
Take $v\in\R^n$. Since $c_1,\ldots,c_k$ form a basis of $C$, we have
$v\in C^\circ$ if and only if
$\ip{c_1}{v}=0,\ldots,\ip{c_k}{v}=0$.

We show that these $k$ conditions are independent. Suppose that
$a_1,\ldots,a_k\in\F_q$ satisfy
$a_1\ip{c_1}{v}+\cdots+a_k\ip{c_k}{v}=0$
for every $v\in\R^n$. By bilinearity,
$\ip{a_1c_1+\cdots+a_kc_k}{v}=0$
for every $v\in\R^n$. Since the annihilator form is non-degenerate, the only vector
orthogonal to every $v\in\R^n$ is the zero vector. Hence
$a_1c_1+\cdots+a_kc_k=0$. Since $c_1,\ldots,c_k$ are linearly
independent, we get $a_1=\cdots=a_k=0$. Thus the $k$ conditions are
independent.

Since $\R^n$ has dimension $mn$ over $\F_q$, it follows that
$\dim_{\F_q}C^\circ=mn-k$. Therefore
$|C|=q^k$ and $|C^\circ|=q^{mn-k}$, so
$|C|\,|C^\circ|=q^{mn}$.

\item[(e)] Let $u\in C$. To show that $u\in(C^\circ)^\circ$, by definition we
need to show that $\ip{v}{u}=0$ for every $v\in C^\circ$.
Now, since $v\in C^\circ$ and $u\in C$, we have
$\ip{u}{v}=0$. As the annihilator form is symmetric,
$\ip{v}{u}=\ip{u}{v}=0$. Therefore
$u\in(C^\circ)^\circ$. Since this holds for every $u\in C$, we obtain
$C\subseteq(C^\circ)^\circ$.

By part (d),
$|C|\,|C^\circ|=q^{mn}$ and
$|C^\circ|\,|(C^\circ)^\circ|=q^{mn}$.
Therefore $|C|=|(C^\circ)^\circ|$.

Since $C\subseteq(C^\circ)^\circ$ and both have the same size,
$(C^\circ)^\circ=C$. \qedhere
\end{enumerate}
\end{proof}

\begin{corollary}
Let $C\subseteq\R^n$ be an $\F_q$-subspace. Then  $C$ is a QP code if and only if $C^{\circ}$ is a QP
code.
\end{corollary}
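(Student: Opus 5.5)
The forward direction is Theorem~\ref{thm:basic}(a): if $C$ is an $\R$-submodule of $\R^n$, then $C^\circ$ is again an $\R$-submodule, hence a QP code. For the converse we use double duality. The subtlety is that Theorem~\ref{thm:basic}(e) is stated only for QP codes $C$, while here $C$ is just an $\F_q$-subspace. The plan is to show that $(C^\circ)^\circ=C$ holds for any $\F_q$-subspace, and then apply part~(a) to the QP code $C^\circ$.

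For an arbitrary $\F_q$-subspace $C\subseteq\R^n$, the proof of Theorem~\ref{thm:basic}(d) used only three facts: $\F_q$-bilinearity (Lemma~\ref{lem:bilinear}), non-degeneracy (Proposition~\ref{prop:nondeg}), and a basis of $C$ over $\F_q$. No $\R$-module structure entered. Hence $\dim_{\F_q}C^\circ=mn-\dim_{\F_q}C$ for every $\F_q$-subspace. Similarly, the proof of Theorem~\ref{thm:basic}(e) used only symmetry of the form and this dimension count. So $C\subseteq (C^\circ)^\circ$ and the dimensions agree, giving $(C^\circ)^\circ=C$ for every $\F_q$-subspace $C$.

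Now suppose $C^\circ$ is a QP code. By Theorem~\ref{thm:basic}(a) applied to the QP code $C^\circ$, the space $(C^\circ)^\circ$ is a QP code. By the previous paragraph, $(C^\circ)^\circ=C$, so $C$ is a QP code. One can also argue directly. Let $u\in C$ and $r\in\R$. For every $v\in C^\circ$ we have $rv\in C^\circ$, so Lemma~\ref{lem:selfadj} gives $\ip{ru}{v}=\ip{u}{rv}=0$. Thus $ru\in(C^\circ)^\circ=C$.

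The only real obstacle is noticing that parts (d) and (e) of Theorem~\ref{thm:basic} do not need the submodule hypothesis. Once this is checked against their proofs, the corollary is immediate.
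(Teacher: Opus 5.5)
Your proof is correct and follows the paper's argument: the forward direction is Theorem~\ref{thm:basic}(a), and the converse applies (a) to $C^\circ$ and uses $(C^\circ)^\circ=C$. Your check that parts (d) and (e) of Theorem~\ref{thm:basic} use only bilinearity, symmetry and non-degeneracy, and so hold for any $\F_q$-subspace, supplies a step the paper's one-line proof leaves implicit, since the paper applies (e) to a $C$ not yet known to be QP.
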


\begin{proof}
If $C$ is QP then so is $C^\circ$ by Theorem~\ref{thm:basic}(a).
Conversely, if $C^\circ$ is QP, then applying the same result to $C^\circ$
shows $(C^\circ)^\circ$ is QP, and $(C^\circ)^\circ=C$ by
Theorem~\ref{thm:basic}.
\end{proof}

\begin{remark}{\em
The restriction to subspaces cannot be dropped. Orthogonality against a
set is orthogonality against its $\F_q$-span, so for $q=2$,
$f(x)=x^{2}+x+1$, $n=2$ and $C=\{(1,0),(x,0)\}$ one has
$C^{\circ}=(\R\times\{0\})^{\circ}=\{0\}\times\R$, a QP code, although
$C$ is not a subspace.}
\end{remark}


\begin{remark}\label{rem:wss}{\em
Proposition~4.1 of \cite{WSS} states that the annihilator dual of a QP
code is again a QP code.  The proof appears to use
\[
   \ipf{u}{\lambda v}=\lambda(0)\ipf{u}{v},
\]
which is not valid for a general $\lambda\in\R$.  For example, over
$\F_2$ with $f(x)=x^2+x+1$, taking $u=1$, $v=x$ and $\lambda=x$ gives
$\ipf{u}{\lambda v}=\eps(x^2)=\eps(x+1)=1$, whereas the right-hand side
is $0$.  The statement itself remains true: if $v\in C^\circ$ and
$\lambda\in\R$, then for every $u\in C$,
$
   \ip{u}{\lambda v}=\ip{\lambda u}{v}=0
$
by Lemma~\ref{lem:selfadj}, since $\lambda u\in C$.  This is 
the argument used in Theorem~\ref{thm:basic}(a).  The same
non-degeneracy argument also gives $C^\circ=C^{\perp_{\mathcal R}}$ for every index.} 
\end{remark}

\section{The Gram matrix and the twisted dual}\label{sec:gram}


For $\ell\ge0$, we define
\[
s_\ell=\eps\bigl(x^\ell\bmod f\bigr).
\]
Then
$s_0=1$ and $s_1=\dots=s_{m-1}=0$. For $\ell\ge m$, using the relation \eqref{eq:xm} gives the linear recurrence
\begin{equation}
s_\ell=\sum_{i=0}^{m-1}f_i\,s_{\ell-m+i}\qquad(\ell\ge m).
\end{equation}
In particular, taking $\ell=m$ and using $s_0=1$, $s_1=\dots=s_{m-1}=0$, we get
\begin{equation}
s_m=f_0 .
\end{equation}
With respect to the basis $\mathcal B=\{1,x,\dots,x^{m-1}\}$, let $G=G_f=(G_{ij})_{0\le i,j\le m-1}$ be the matrix defined by $G_{ij}=s_{i+j}$.
Since the entry $G_{ij}$ depends only on $i+j$, the matrix $G$ is a \emph{Hankel}
matrix, and in particular symmetric.

\begin{lemma}\label{lem:gramR}
For all $g,h\in\R$,
\[\ipf{g}{h}=\psi(g)\,G\,\psi(h)^{\mathsf T}.\]
Consequently, $G$ is the Gram matrix of $\ipf{\cdot}{\cdot}$ relative to
the basis $\mathcal B$.
\end{lemma}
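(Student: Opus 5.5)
The plan is to prove the identity first on basis vectors and then extend it by bilinearity. Write $g=\sum_{i=0}^{m-1}g_ix^i$ and $h=\sum_{j=0}^{m-1}h_jx^j$, so that $\psi(g)=(g_0,\dots,g_{m-1})$ and $\psi(h)=(h_0,\dots,h_{m-1})$. By Lemma~\ref{lem:bilinear} with $n=1$, the map $(g,h)\mapsto\ipf{g}{h}$ is $\F_q$-bilinear. The right-hand side $\psi(g)\,G\,\psi(h)^{\mathsf T}$ is also bilinear, because $\psi$ is $\F_q$-linear. It therefore suffices to check the identity when $g=x^i$ and $h=x^j$ with $0\le i,j\le m-1$.

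In that case $\psi(x^i)=e_i$ and $\psi(x^j)=e_j$ are standard unit row vectors, so $\psi(g)G\psi(h)^{\mathsf T}=G_{ij}=s_{i+j}$. On the other side, $\ipf{x^i}{x^j}=\eps(x^ix^j)=\eps(x^{i+j}\bmod f)$. This is $s_{i+j}$ by the definition $s_\ell=\eps(x^\ell\bmod f)$, where $\ell=i+j$ ranges over $0,\dots,2m-2$.

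The one point needing care is that $\eps$ on $\R$ is evaluated at the reduced representative, and the product $x^i\cdot x^j$ in $\R$ is the class of $x^{i+j}$. This class is the same as the class of $x^{i+j}\bmod f$, so the two agree. Here the recurrence for $s_\ell$ is not needed; it only serves to compute these values.

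Expanding gives
\[
\ipf{g}{h}=\sum_{i,j}g_ih_j\,\eps(x^{i+j})=\sum_{i,j}g_i\,s_{i+j}\,h_j=\psi(g)\,G\,\psi(h)^{\mathsf T}.
\]
The consequence, that $G$ is the Gram matrix with respect to $\mathcal B$, follows because $G_{ij}=\ipf{x^i}{x^j}$ for all $0\le i,j\le m-1$. I expect no real obstacle in this argument.
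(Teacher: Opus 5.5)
Your proposal is correct and follows essentially the same argument as the paper: expand $g$ and $h$ in the basis $\mathcal B$, use $\F_q$-bilinearity of $\ipf{\cdot}{\cdot}$ (Lemma~\ref{lem:bilinear} with $n=1$), and identify $\ipf{x^i}{x^j}=\eps(x^{i+j})=s_{i+j}=G_{ij}$. Your remark about evaluating $\eps$ on the reduced representative is a harmless extra clarification.
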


\begin{proof}
Write $\psi(g)=(g_0,\ldots,g_{m-1})$ and $\psi(h)=(h_0,\ldots,h_{m-1})$,
so that $g=\sum_{i}g_ix^{i}$ and $h=\sum_{j}h_jx^{j}$. By
$\F_q$-bilinearity of the form (Lemma~\ref{lem:bilinear} with $n=1$),
\[
\ipf{g}{h}
=\Bigl\langle\sum_{i=0}^{m-1}g_ix^i,\ \sum_{j=0}^{m-1}h_jx^j\Bigr\rangle_f
=\sum_{i=0}^{m-1}\sum_{j=0}^{m-1}g_ih_j\,\ipf{x^i}{x^j},
\]
and $\ipf{x^i}{x^j}=\eps(x^{i}x^{j})=\eps(x^{i+j})=s_{i+j}=G_{ij}$.
Hence
\[
\ipf{g}{h}=\sum_{i=0}^{m-1}\sum_{j=0}^{m-1}g_i\,G_{ij}\,h_j
=\psi(g)\,G\,\psi(h)^{\mathsf T}.\qedhere
\]
\end{proof}

\begin{proposition}\label{prop:blockgram}
Let $A=I_n\otimes G$, i.e.\ the block-diagonal $mn\times mn$ matrix with
$n$ diagonal blocks equal to $G$. Then for all $u,v\in\R^n$,
\[
\ip{u}{v}=\Psi(u)\,A\,\Psi(v)^{\mathsf T} .
\]
Thus $A$ is the Gram matrix of the annihilator form relative to the basis
of $\R^n$ underlying $\Psi$. Moreover $A$ is symmetric (respectively invertible) if and only if
 $G$ is symmetric (respectively invertible).
\end{proposition}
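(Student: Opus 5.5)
The plan is to reduce the identity to the single-coordinate case, Lemma~\ref{lem:gramR}, and then read off the block structure of $A$. By Definition~\ref{def:form}, $\ip{u}{v}=\sum_{i=1}^n\ipf{u_i}{v_i}$. For each $i$, Lemma~\ref{lem:gramR} gives $\ipf{u_i}{v_i}=\psi(u_i)\,G\,\psi(v_i)^{\mathsf T}$. By the definition of $\Psi$, the row vector $\Psi(u)$ is the concatenation $(\psi(u_1)\,|\,\cdots\,|\,\psi(u_n))$, and similarly for $\Psi(v)$. Multiplying out block by block with the block-diagonal matrix $A=I_n\otimes G$, there are no cross terms between distinct blocks $i\neq j$ because the off-diagonal blocks are zero. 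Hence
\[
\Psi(u)\,A\,\Psi(v)^{\mathsf T}=\sum_{i=1}^n \psi(u_i)\,G\,\psi(v_i)^{\mathsf T},
\]
which proves the displayed identity.

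For the Gram-matrix claim, I will evaluate the identity on basis vectors. The basis underlying $\Psi$ consists of the vectors $e_i x^\ell$, where $e_i$ is the $i$-th standard vector in $\R^n$ and $0\le \ell\le m-1$. Their images under $\Psi$ are the standard basis vectors of $\F_q^{mn}$. Substituting two of these into the identity returns the corresponding entry of $A$, so $A$ is the Gram matrix with respect to this basis.

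For the last sentence, I will use elementary facts about Kronecker products. Symmetry follows from $A^{\mathsf T}=I_n\otimes G^{\mathsf T}$, so $A$ is symmetric if and only if every diagonal block is, that is, if and only if $G=G^{\mathsf T}$. Invertibility follows from $\det A=(\det G)^n$, since $A$ is block diagonal with $n$ copies of $G$; this determinant is nonzero exactly when $\det G\ne0$. Equivalently, one can take $A^{-1}=I_n\otimes G^{-1}$ in one direction and observe in the other that singularity of $G$ gives a nonzero kernel vector of $A$ placed in the first block.

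No step is a real obstacle; the only point requiring care is the bookkeeping of the block indexing, so that the coordinates of $\Psi$ line up with the diagonal blocks of $I_n\otimes G$. Although $G$ is already known to be symmetric (it is Hankel), the "if and only if" statements are proved as general facts about the block structure, independent of that.
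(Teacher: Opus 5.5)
Your proposal is correct and follows the paper's proof essentially line for line: apply Lemma~\ref{lem:gramR} in each coordinate, multiply out $\Psi(u)A\Psi(v)^{\mathsf T}$ blockwise against the block-diagonal $A$, and use $\det A=(\det G)^n$ for invertibility. Your treatment of the symmetry equivalence via $A^{\mathsf T}=I_n\otimes G^{\mathsf T}$ and of the Gram-matrix claim via standard basis vectors is slightly more explicit than the paper's. The paper only notes that symmetry of $G$ gives symmetry of $A$.
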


\begin{proof}
Write $u=(u_1,\dots,u_n)$ and $v=(v_1,\dots,v_n)$ with
$u_k=\sum_{i=0}^{m-1}a_{ki}x^i$ and $v_k=\sum_{j=0}^{m-1}b_{kj}x^j$, and
let $\alpha_k=(a_{k0},\dots,a_{k,m-1})$ and
$\beta_k=(b_{k0},\dots,b_{k,m-1})$ be the corresponding coordinate blocks,
so that $\Psi(u)=(\alpha_1,\dots,\alpha_n)$ and $\Psi(v)=(\beta_1,\dots,\beta_n).$

By Definition~\ref{def:form} and Lemma~\ref{lem:gramR},
\[
\ip{u}{v}=\sum_{k=1}^{n}\ipf{u_k}{v_k}
=\sum_{k=1}^{n}\alpha_k\,G\,\beta_k^{\mathsf T}.
\]
On the other hand, multiplying the row vector $\Psi(u)$ by the
block-diagonal matrix $A$ acts blockwise,
$\Psi(u)A=(\alpha_1G,\dots,\alpha_nG)$, and therefore
\[
\Psi(u)\,A\,\Psi(v)^{\mathsf T}
=(\alpha_1G,\dots,\alpha_nG)\,(\beta_1,\dots,\beta_n)^{\mathsf T}
=\sum_{k=1}^{n}\alpha_k G\beta_k^{\mathsf T}.
\]
Symmetry of $A$ follows from
symmetry of $G$, and $\det A=(\det G)^n$ gives the last statement.
\end{proof}



\begin{theorem}\label{thm:det}
The Hankel matrix $G$ is symmetric and invertible with determinant
\[
\det G=(-1)^{(m-1)(m-2)/2}\,f_0^{\,m-1}.
\]
\end{theorem}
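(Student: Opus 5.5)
Symmetry is immediate, since $G_{ij}=s_{i+j}$ depends only on $i+j$; and once $\det G\neq 0$ is established, invertibility follows from $f_0\ne0$. So the work is the determinant. The plan is to exploit the structure of the sequence $(s_\ell)$: $s_0=1$, $s_1=\dots=s_{m-1}=0$, $s_m=f_0$. In $G$ the entries with $i+j\le m-1$ are therefore zero except for $G_{00}=1$, and the entries with $i+j=m$ all equal $f_0$.

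I would expand along the first row. Since $G_{0j}=s_j$, the first row is $(1,0,\dots,0)$, so $\det G=\det G'$, where $G'=(s_{i+j})_{1\le i,j\le m-1}$. Reindexing, $G'=(s_{i+j+2})_{0\le i,j\le m-2}$ is an $(m-1)\times(m-1)$ Hankel matrix. Its entries vanish when $i+j+2\le m-1$, i.e.\ above the anti-diagonal $i+j=m-2$. On that anti-diagonal every entry equals $s_m=f_0$. Below it the entries are unknown, but they do not affect the determinant.

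So $G'$ is "anti-triangular" with constant anti-diagonal $f_0$. Reversing the column order ($j\mapsto m-2-j$) turns it into a lower triangular matrix with diagonal entries all equal to $f_0$. That matrix has determinant $f_0^{m-1}$. The column reversal of $m-1$ columns is a permutation with sign $(-1)^{\lfloor (m-1)/2\rfloor}$.

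The only real obstacle is the sign bookkeeping. The reversal of $k=m-1$ columns has $k(k-1)/2$ inversions, so its sign is $(-1)^{(m-1)(m-2)/2}$, which matches the stated formula exactly. I would also check the degenerate cases $m=1$ (where $G=(1)$ and both sides equal $1$) and $m=2$ (where $G=\begin{pmatrix}1&0\\0&f_0\end{pmatrix}$). Nonvanishing then follows from $f_0\neq0$, giving invertibility. This also re-proves the non-degeneracy of Proposition~\ref{prop:nondeg} via Lemma~\ref{lem:gramR}.
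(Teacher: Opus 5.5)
Your proof is correct and is essentially the paper's argument: the paper also uses $s_1=\dots=s_{m-1}=0$ and $s_m=f_0$ to turn a column-reversed copy of $G$ into a triangular matrix with diagonal $f_0$, and expands along the first row. The difference is only the order: the paper reverses all $m$ columns of $G$ first, so it must reconcile the cofactor sign $(-1)^{m+1}$ with $\det J=(-1)^{m(m-1)/2}$, whereas you expand along the row $(1,0,\dots,0)$ first and reverse only the $(m-1)\times(m-1)$ block, which gives the sign $(-1)^{(m-1)(m-2)/2}$ directly.
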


\begin{proof}
Since $G_{ij}=s_{i+j}$, we have $G_{ij}=G_{ji}$, and hence $G$ is symmetric.
To compute the determinant, let $J$ be the anti-identity matrix and set
$H=GJ$. Right multiplication by $J$ reverses the columns of $G$. Thus,
the $j$-th column of $H$ is the $(m-1-j)$-th column of $G$, and therefore
\[
H_{ij}=G_{i,m-1-j}=s_{m-1+i-j}.
\]
Also, reversing $m$ columns requires $m(m-1)/2$ transpositions, so
$\det J=(-1)^{m(m-1)/2}$.
\vskip 2pt
Recall that $s_0=1$, $s_1=\cdots=s_{m-1}=0$, and $s_m=f_0$. For $i=0$,
we have $H_{0j}=s_{m-1-j}$, so the first row of $H$ is
$(0,\ldots,0,1)$. Now let $i\geq 1$. If $j\geq i$, then
$m-1+i-j\leq m-1$, and hence $H_{ij}=0$. On the other hand,
$H_{i,i-1}=s_m=f_0$. Thus $H$ has the form
\[
H=
\begin{pmatrix}
0   & 0   & \cdots & 0   & 1\\
f_0 & 0   & \cdots & 0   & 0\\
*   & f_0 & \cdots & 0   & 0\\
\vdots & \vdots & \ddots & \vdots & \vdots\\
*   & *   & \cdots & f_0 & 0
\end{pmatrix}.
\]

Expanding $\det H$ along the first row, the only nonzero entry is the
last entry, which is $1$. Hence
\[
\det H
= (-1)^{1+m}
\det
\begin{pmatrix}
f_0 & 0 & \cdots & 0\\
*   & f_0 & \cdots & 0\\
\vdots & \vdots & \ddots & \vdots\\
*   & * & \cdots & f_0
\end{pmatrix}
      =(-1)^{m-1}f_0^{m-1}.
\]

Since $H=GJ$, we have $\det H=\det G\,\det J$. Therefore
\[
\det G
=\frac{\det H}{\det J}
=(-1)^{(m-1)(m-2)/2}f_0^{\,m-1}.
\]
Finally, $f_0\neq 0$, so $\det G\neq 0$. Hence $G$ is invertible.
\end{proof}

\begin{lemma}\label{lem:twisteddual}
Let $V\subseteq\mathbb{F}_q^{N}$ be a subspace, with
$V^{\perp_E}$ denoting its Euclidean dual. Suppose that the matrix
$A\in\mathrm{GL}_N(\mathbb{F}_q)$ is symmetric. Then
$
(VA)^{\perp_E}=V^{\perp_E}A^{-1}.
$ 
\end{lemma}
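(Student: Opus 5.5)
We prove the two inclusions $V^{\perp_E}A^{-1}\subseteq(VA)^{\perp_E}$ and then get equality from dimensions. Row vectors are used throughout, with the Euclidean pairing $\langle a,b\rangle_E=ab^{\mathsf T}$.

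\textbf{Inclusion.} Take $w\in V^{\perp_E}$ and $v\in V$. Then
\[
(vA)(wA^{-1})^{\mathsf T}=vA(A^{-1})^{\mathsf T}w^{\mathsf T}.
\]
Since $A$ is symmetric, $(A^{-1})^{\mathsf T}=(A^{\mathsf T})^{-1}=A^{-1}$. The middle factor is therefore $AA^{-1}=I_N$, and the expression reduces to $vw^{\mathsf T}=0$. Hence every element of $V^{\perp_E}A^{-1}$ is Euclidean-orthogonal to every element of $VA$, which gives $V^{\perp_E}A^{-1}\subseteq(VA)^{\perp_E}$.

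\textbf{Dimension count.} Right multiplication by an invertible matrix is an $\F_q$-linear automorphism of $\F_q^N$, so it preserves dimension. Thus $\dim VA=\dim V$, and $\dim V^{\perp_E}A^{-1}=\dim V^{\perp_E}=N-\dim V$. On the other side, the standard fact that the Euclidean form on $\F_q^N$ is non-degenerate gives $\dim (VA)^{\perp_E}=N-\dim VA=N-\dim V$. The two subspaces therefore have the same dimension, and together with the inclusion this yields equality.

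No step is a real obstacle. The only point needing care is where symmetry of $A$ enters: it is used precisely to identify $(A^{-1})^{\mathsf T}$ with $A^{-1}$. Without symmetry, the same argument gives $(VA)^{\perp_E}=V^{\perp_E}(A^{\mathsf T})^{-1}$ instead.
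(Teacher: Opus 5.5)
Your proof is correct. It differs from the paper's only in how you get the reverse inclusion.

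The paper works with an arbitrary row vector $x$ and uses symmetry to rewrite $(vA)x^{\mathsf T}=v(xA)^{\mathsf T}$. This gives $x\in(VA)^{\perp_E}\iff xA\in V^{\perp_E}\iff x\in V^{\perp_E}A^{-1}$, so both inclusions come out of one chain of equivalences, with no dimension count.

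You verify only $V^{\perp_E}A^{-1}\subseteq(VA)^{\perp_E}$ by plugging in $x=wA^{-1}$. You then close the gap with $\dim W^{\perp_E}=N-\dim W$ and the fact that right multiplication by an invertible matrix preserves dimension. This step is valid over any field, even when $W\cap W^{\perp_E}\neq\{0\}$.

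The paper's route is slightly more economical, since it never needs non-degeneracy of the Euclidean form. Your version isolates exactly where symmetry enters, namely $(A^{-1})^{\mathsf T}=A^{-1}$. Your closing remark that $(VA)^{\perp_E}=V^{\perp_E}(A^{\mathsf T})^{-1}$ for a general invertible $A$ is also correct. The paper's chain gives the same identity if you keep $A^{\mathsf T}$ instead of replacing it by $A$.
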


\begin{proof}
Let $x\in\F_q^N$ be a row vector. By definition of the Euclidean dual,
\[
x\in(VA)^{\dpe}
\iff (vA)\,x^{\mathsf T}=0\ \text{ for all }v\in V .
\]
Rewrite the scalar $(vA)x^{\mathsf T}=v\,(Ax^{\mathsf T})
=v\,(xA^{\mathsf T})^{\mathsf T}=v\,(xA)^{\mathsf T}$, using
$A^{\mathsf T}=A$. Hence the condition reads
$v(xA)^{\mathsf T}=0$ for all $v\in V$, i.e.\ $xA\in V^{\dpe}$. Since $A$
is invertible this is equivalent to $x\in V^{\dpe}A^{-1}$.
\end{proof}

\begin{corollary}\label{cor:twist}
For every QP code $C\subseteq\R^n$, with $A=I_n\otimes G$,
\[
\Psi(C^{\circ})=\bigl(\Psi(C)\,A\bigr)^{\dpe}
=\Psi(C)^{\dpe}\,A^{-1}
=\Psi(C)^{\dpe}\,(I_n\otimes G^{-1}).
\]
Thus the annihilator dual is the ordinary Euclidean dual twisted by the
explicit, symmetric, block-diagonal matrix $I_n\otimes G^{-1}$.
\end{corollary}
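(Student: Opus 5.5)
The corollary should follow from three results already proved: Proposition~\ref{prop:blockgram}, Lemma~\ref{lem:twisteddual}, and Theorem~\ref{thm:det}. The plan is to establish the first equality directly from the Gram-matrix description. We then obtain the second equality from the twisting lemma, and the third from the block-diagonal structure of $A$.

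For the first equality, take $v\in\R^n$. By definition, $v\in C^\circ$ if and only if $\ip{u}{v}=0$ for every $u\in C$. Proposition~\ref{prop:blockgram} rewrites this condition as $\Psi(u)\,A\,\Psi(v)^{\mathsf T}=0$ for all $u\in C$. This says precisely that $\Psi(v)$ is Euclidean-orthogonal to every vector of $\Psi(C)A$, that is, $\Psi(v)\in(\Psi(C)A)^{\dpe}$. Since $\Psi$ is a bijection $\R^n\to\F_q^{mn}$, every vector of $(\Psi(C)A)^{\dpe}$ is of the form $\Psi(v)$ for some $v$. The equivalence therefore gives $\Psi(C^\circ)=(\Psi(C)A)^{\dpe}$ as sets.

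For the second equality, Theorem~\ref{thm:det} shows that $G$ is symmetric and invertible. Proposition~\ref{prop:blockgram} then gives that $A=I_n\otimes G$ is symmetric and invertible. We may therefore apply Lemma~\ref{lem:twisteddual} with $V=\Psi(C)$ and $N=mn$, which yields $(\Psi(C)A)^{\dpe}=\Psi(C)^{\dpe}A^{-1}$.

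For the last equality, we use that the inverse of a block-diagonal matrix is computed blockwise, so $(I_n\otimes G)^{-1}=I_n\otimes G^{-1}$. This matrix is again symmetric, since the inverse of a symmetric matrix is symmetric, and it is explicit because $G$ is the explicit Hankel matrix of Section~\ref{sec:gram}.

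None of these steps is a real obstacle; the argument is bookkeeping. The one point needing care is the row-vector convention in the first step. We must check that $\Psi(u)A\Psi(v)^{\mathsf T}$ is the Euclidean product of the row vector $\Psi(u)A$ with $\Psi(v)$, so that the twist falls on $\Psi(C)$ and not on $\Psi(C^\circ)$. With that orientation fixed, we also need the symmetry of $A$ so that Lemma~\ref{lem:twisteddual} applies in the stated form.
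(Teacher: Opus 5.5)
Your proposal is correct and follows the paper's proof essentially step for step. The first equality comes from Proposition~\ref{prop:blockgram}, read as the Euclidean product of $\Psi(u)A$ with $\Psi(v)$; the second from Lemma~\ref{lem:twisteddual}, using that $A$ is symmetric and invertible by Proposition~\ref{prop:blockgram} and Theorem~\ref{thm:det}; and the third from $A^{-1}=I_n\otimes G^{-1}$, with your explicit use of the bijectivity of $\Psi$ being a small detail the paper leaves implicit.
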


\begin{proof}
By Proposition~\ref{prop:blockgram}, for $u,v\in\R^n$ we have
$\ip{u}{v}=\Psi(u)A\Psi(v)^{\mathsf T}=\bigl(\Psi(u)A\bigr)\Psi(v)^{\mathsf
T}$. Hence
\[
v\in C^{\circ}
\iff \bigl(\Psi(u)A\bigr)\Psi(v)^{\mathsf T}=0\ \ \forall u\in C
\iff \Psi(v)\in\bigl(\Psi(C)A\bigr)^{\dpe},
\]
which is the first equality. The second follows from
Lemma~\ref{lem:twisteddual} applied to $V=\Psi(C)$, since  $A$
is symmetric and invertible by Proposition~\ref{prop:blockgram} and
Theorem~\ref{thm:det}. The third is $A^{-1}=I_n\otimes G^{-1}$.
\end{proof}

\begin{definition}
    Let $\mathcal C$ and $\mathcal D$ be two $\mathbb F_q$-linear codes of length $N$, and let $A\in\mathrm{GL}_N(\mathbb{F}_q)$ be symmetric. If $\mathcal DA=\mathcal C^{\perp_E}$, then  we say that $\mathcal D$ is the $A$-\emph{twisted dual} of $\mathcal C$.
\end{definition}

The identity in Corollary~\ref{cor:twist} is the QP code version of  twisted-dual
description.  In contrast with monomial special cases, the block $G$ is
generally a non-monomial Hankel matrix; Section~\ref{sec:gray} studies
when a single change of basis on each block removes this twist.

\section{Coordinatewise duality-preserving Gray maps}\label{sec:gray}

Recall that $\psi:\R\to\F_q^m$ denotes the coefficient expansion with respect to the basis
$\mathcal B$.  For a matrix $S\in \mathrm{GL}_m(\mathbb{F}_q)$ and an element $a\in\R$, we define
$
   \phi_S(a)=\psi(a)S$,
and extend $\phi_S$ coordinatewise to
\[
   \Phi_S:\R^n\longrightarrow\F_q^{mn},\qquad
   \Phi_S(u)=\Psi(u)(I_n\otimes S).
\]
We use these maps as  the \emph{Gray maps}  in this section.  The
restriction to a common block matrix $S$ is essential: if arbitrary
matrices in $\mathrm{GL}_{mn}(\F_q)$ are allowed to mix different $\R$-coordinates,
the existence problem depends on the total dimension $mn$ rather than
only on the one-coordinate Gram matrix $G$.

\begin{definition}
The coordinatewise map $\Phi_S$ is \emph{duality-preserving} if
\[
   \Phi_S(C^\circ)=\Phi_S(C)^{\dpe}
\]
for every $\F_q$-subspace $C\subseteq\R^n$.
No Hamming-weight-preserving property is assumed.
\end{definition}

Duality-preserving basis maps over rings have been studied in a more general setting; see \cite{SU}. We also use the standard classification of non-degenerate symmetric bilinear forms over finite fields, recalled in the following lemma; see \cite{SL}.

\begin{lemma}\label{lem:congruence}
Let  $H\in \mathrm{GL}_m(\mathbb{F}_q)$ be symmetric and non-alternating.
\begin{enumerate}[label=\normalfont(\alph*),leftmargin=2.2em]
\item If $q$ is even, then $H\sim I_m$.
\item If $q$ is odd, then $H\sim I_m$ if and only if $\det H$ is a
square in $\F_q^\ast$.
\item If $q$ is odd, then $H\sim\mu I_m$ for some
$\mu\in\F_q^\ast$ if and only if either $m$ is odd, or $m$ is even and
$\det H$ is a square.
\end{enumerate}
\end{lemma}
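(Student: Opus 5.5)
The plan is to reduce everything to the two standard facts about finite fields: every symmetric form of the relevant type is \emph{diagonalizable}, and binary diagonal forms over $\F_q$ \emph{represent every nonzero scalar}. Throughout, $H\sim H'$ means $H'=P^{\mathsf T}HP$ with $P\in\mathrm{GL}_m(\F_q)$. The basic invariant is $\det(P^{\mathsf T}HP)=\det(P)^2\det H$, so the class of $\det H$ in $\F_q^\ast/(\F_q^\ast)^2$ is preserved by congruence.

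\emph{Case $q$ odd.} First diagonalize $H$ by the usual Gram--Schmidt induction. If $H\neq0$, polarization $2\,v^{\mathsf T}Hw=(v+w)^{\mathsf T}H(v+w)-v^{\mathsf T}Hv-w^{\mathsf T}Hw$ (valid since $2$ is invertible) gives a vector $v$ with $v^{\mathsf T}Hv\ne0$. We then split $\F_q^m=\langle v\rangle\oplus v^{\perp}$ and recurse, which yields $H\sim\mathrm{diag}(a_1,\dots,a_m)$ with all $a_i\ne0$.

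Next we use the counting fact. For $a,b\in\F_q^\ast$ and $c\in\F_q^\ast$, the sets $\{ax^2\}$ and $\{c-by^2\}$ each have $(q+1)/2$ elements, so they meet. Hence $ax^2+by^2=c$ is solvable, and in particular $\mathrm{diag}(a,b)$ represents $1$. Splitting off that vector of norm $1$ gives $\mathrm{diag}(a,b)\sim\mathrm{diag}(1,ab)$, where the second entry is forced by the determinant class. Applying this inductively to consecutive pairs gives
\[
H\sim\mathrm{diag}(1,\dots,1,d),\qquad d=\det H .
\]
Moreover $d$ may be replaced by any element of its square class. This proves (b): $H\sim I_m$ iff $d$ is a square, where necessity comes from the determinant invariant.

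For (c), note that $\det(\mu I_m)=\mu^m$. If $m$ is odd, $\mu^m\equiv\mu$ modulo squares, so choosing $\mu=\det H$ makes $\mu I_m$ and $H$ both congruent to $\mathrm{diag}(1,\dots,1,\det H)$ by (b)'s normal form. If $m$ is even, $\mu^m$ is always a square. So $H\sim\mu I_m$ forces $\det H$ to be a square, and conversely then $H\sim I_m$ by (b), so $\mu=1$ works.

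\emph{Case $q$ even.} Every element of $\F_q$ is a square because Frobenius is bijective. So once $H$ is diagonal with nonzero entries $a_i=b_i^2$, the substitution $P=\mathrm{diag}(b_i^{-1})$ gives $I_m$. The real content is diagonalizability, and this is the step I expect to be the main obstacle, since polarization fails in characteristic $2$.

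I would argue as follows. Non-alternating means $v^{\mathsf T}Hv\ne0$ for some $v$, so we split off $\langle v\rangle$ with $v^{\mathsf T}Hv=a$ as before. The complement $W=v^\perp$ may be alternating, however. In that case, take a hyperbolic pair $e,f\in W$ with $e^{\mathsf T}He=f^{\mathsf T}Hf=0$ and $e^{\mathsf T}Hf=1$. On $\langle v,e,f\rangle$ I would exhibit explicit vectors, built from combinations such as $v+e$, $v+f$ and $v+e+f$ rescaled by $\sqrt a$, that form an orthogonal basis of nonzero norms. This replaces $\langle a\rangle\perp\mathrm{hyp}$ by a diagonal $3\times3$ form.

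Repeating this absorbs all hyperbolic planes, so the whole form becomes diagonal; this is the classical Albert/Seroussi--Lempel argument \cite{SL}. The only delicate point is the explicit $3\times3$ change of basis, which I would verify directly by computing the $3\times3$ Gram matrix.
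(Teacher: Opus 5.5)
Your proof is correct, but it takes a different route from the paper. The paper does not prove the classification itself. For (a) and (b) it cites Theorems~1 and~2 of Seroussi--Lempel \cite{SL}, which give a factorization $H=BB^{\mathsf T}$ with $B\in\mathrm{GL}_m(\F_q)$: always when $q$ is even and $H$ is non-alternating, and exactly when $\det H$ is a square when $q$ is odd. It then gets (c) from (b) by the rescaling $H\sim\mu I_m\iff\mu^{-1}H\sim I_m$ and a square-class check of $\mu^{-m}\det H$.

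You instead prove the underlying classification from scratch, using three ingredients:
\begin{itemize}
\item diagonalization by polarization;
\item universality of binary forms, via the count $\tfrac{q+1}{2}+\tfrac{q+1}{2}>q$;
\item in characteristic $2$, the absorption identity $\langle a\rangle\perp\mathrm{hyp}\cong\langle a,a,a\rangle$.
\end{itemize}

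The absorption identity does hold. Let $v^{\mathsf T}Hv=a$ and let $(e,f)$ be a hyperbolic pair in $v^{\perp}$. Then $v+\sqrt a\,e$, $v+\sqrt a\,f$ and $v+\sqrt a\,(e+f)$ are pairwise orthogonal, each has norm $a$, and their sum is $v$, so they form a basis of $\langle v,e,f\rangle$. The rescaling by $\sqrt a$ is genuinely needed: $v+e$ and $v+f$ are orthogonal only when $a=1$.

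After one absorption, the orthogonal complement of the first new vector contains the second new vector, which has nonzero norm. So that complement is again nondegenerate and non-alternating of dimension $m-1$, and induction on $m$ closes the argument.

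What your route buys is a self-contained, constructive proof. It tells you how to build an $S$ with $SS^{\mathsf T}=\lambda G$, which the paper instead finds by computer. It also gives the explicit normal form $\mathrm{diag}(1,\dots,1,\det H)$ for $q$ odd. This shows that the square class of the determinant is a complete congruence invariant, and it makes (c) immediate. The paper's route is much shorter, but all the mathematical content sits in the citation.
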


\begin{proof}
\textbf{(a)}
Suppose that $q$ is even. Since $H$ is non-alternating, at least one
diagonal entry of $H$ is nonzero. Also, $H$ is invertible, so
$\operatorname{rank}(H)=m$. By Theorem~1 of \cite{SL}, $H$ has a factor
$B$ with $m$ columns; that is, $H=BB^{\mathsf T}$ with
$B\in\mathrm{GL}_m(\F_q)$. Hence
$B^{-1}H(B^{-1})^{\mathsf T}=I_m$, and therefore $H\sim I_m$.

\vskip 2pt
\textbf{(b)}
Now suppose that $q$ is odd. Since $H$ is invertible, we have
$\operatorname{rank}(H)=m$. By Theorem~2 of \cite{SL}, $H$ has a
factorization $H=BB^{\mathsf T}$ with $B\in\mathrm{GL}_m(\F_q)$ if and only if
$\det H$ is a square in $\F_q^\ast$. If $H=BB^{\mathsf T}$ with $B$ invertible, then
$B^{-1}H(B^{-1})^{\mathsf T}=I_m$, so $H\sim I_m$. 

Conversely, if
$H\sim I_m$, then there exists $P\in\mathrm{GL}_m(\F_q)$ such that
$PHP^{\mathsf T}=I_m$. Hence $H=P^{-1}(P^{-1})^{\mathsf T},$
so $\det H=(\det P^{-1})^2$ is a square in $\F_q^\ast$.
Therefore, $H\sim I_m$ if and only if $\det H$ is a square in $\F_q^\ast$.

\vskip 2pt
\textbf{(c)} Observe that $H\sim\mu I_m$ if and only if
$\mu^{-1}H\sim I_m$. By part (b), this is equivalent to
$\mu^{-m}\det H$ being a square in $\F_q^\ast$.

If $m$ is odd, then $m-1$ is even. Hence
\[
\mu^m=\mu\cdot\mu^{m-1}
      =\mu\left(\mu^{(m-1)/2}\right)^2.
\]
Thus 
$\mu^m$ is a square if and only if $\mu$ is a square. Therefore we may
choose $\mu$ to be a square or a non-square according as $\det H$ is a
square or a non-square. Then $\mu^{-m}\det H$ is a square, so such a
$\mu$ always exists.

If $m$ is even, then $\mu^m=(\mu^{m/2})^2$ is always a square. Hence
$\mu^{-m}\det H$ is a square if and only if $\det H$ is a square.
\end{proof}
\begin{theorem}\label{thm:gray}
Let $G$ be the one-coordinate Gram matrix in
Lemma~\ref{lem:gramR}, and let $\Phi_S$ be the map above.
\begin{enumerate}[label=\normalfont(\alph*),leftmargin=2.2em]
\item $\Phi_S$ is a duality-preserving map if and only if
$  SS^{\mathsf T}=\lambda G$ for some $\lambda\in\F_q^\ast$.
\item If $q$ is even, a coordinatewise duality-preserving map always
exists.
\item If $q$ is odd and $m$ is odd, such a map always exists.
\item If $q$ is odd and $m$ is even, such a map exists if and only if
\[
   \det G=(-1)^{(m-1)(m-2)/2}f_0^{\,m-1}
\]
is a square in $\F_q^\ast$; equivalently,
$(-1)^{(m-2)/2}f_0$ is a square in $\F_q^\ast$.
\end{enumerate}
\end{theorem}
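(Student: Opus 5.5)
The plan is to reduce part (a) to a statement about matrices acting on arbitrary subspaces of $\F_q^{mn}$, and then deduce (b)--(d) from Lemma~\ref{lem:congruence} applied to the Hankel matrix $G$.

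\textbf{Part (a): rewriting both sides.} First I note that the proof of the first equality in Corollary~\ref{cor:twist} never used that $C$ is a submodule. Hence $\Psi(C^\circ)=(\Psi(C)A)^{\dpe}$ holds for every $\F_q$-subspace $C\subseteq\R^n$, with $A=I_n\otimes G$.

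Next I need a slight generalization of Lemma~\ref{lem:twisteddual}: for any $B\in\mathrm{GL}_N(\F_q)$, not necessarily symmetric,
\[
(VB)^{\dpe}=V^{\dpe}(B^{\mathsf T})^{-1}.
\]
The proof is the same computation, since $(vB)x^{\mathsf T}=v(xB^{\mathsf T})^{\mathsf T}$.

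Writing $W=\Psi(C)^{\dpe}$, the two sides become
\[
\Phi_S(C^\circ)=W\,(I_n\otimes G^{-1}S),\qquad
\Phi_S(C)^{\dpe}=W\,(I_n\otimes (S^{\mathsf T})^{-1}).
\]
As $C$ ranges over all $\F_q$-subspaces of $\R^n$, $W$ ranges over all subspaces of $\F_q^{mn}$, because $\Psi$ is an isomorphism and Euclidean duality is an involution.

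\textbf{Part (a): reducing to a scalar matrix.} So $\Phi_S$ is duality-preserving iff $W M=W$ for every subspace $W$, where
\[
M=(I_n\otimes G^{-1}S)(I_n\otimes S^{\mathsf T})=I_n\otimes G^{-1}SS^{\mathsf T}.
\]
I would then use the standard fact that an invertible matrix fixing every one-dimensional subspace is a nonzero scalar. The argument: every vector is an eigenvector, and comparing $e_i$, $e_j$ and $e_i+e_j$ forces all the eigenvalues to be equal. Thus $M$ preserves all subspaces iff $G^{-1}SS^{\mathsf T}=\lambda I_m$ for some $\lambda\in\F_q^\ast$, i.e. $SS^{\mathsf T}=\lambda G$. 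The converse direction is immediate, since a scalar matrix fixes every subspace. The only delicate points are this "stabilizes all subspaces implies scalar" step and remembering to drop the QP hypothesis from Corollary~\ref{cor:twist}. Neither is a real obstacle.

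\textbf{Parts (b)--(d).} By (a), a duality-preserving map exists iff $SS^{\mathsf T}=\lambda G$ for some $\lambda\in\F_q^\ast$ and $S\in\mathrm{GL}_m(\F_q)$. This means $S^{-1}(\lambda G)(S^{-1})^{\mathsf T}=I_m$, i.e. $G\sim\lambda^{-1}I_m$. So existence is equivalent to $G\sim\mu I_m$ for some $\mu\in\F_q^\ast$.

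To apply Lemma~\ref{lem:congruence}, I need that $G$ is symmetric and invertible, which is Theorem~\ref{thm:det}. I also need that $G$ is non-alternating, which holds because $G_{00}=s_0=1\neq0$.

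\begin{itemize}
\item For $q$ even, Lemma~\ref{lem:congruence}(a) gives $G\sim I_m$, which settles (b).
\item For $q$ odd, Lemma~\ref{lem:congruence}(c) settles (c) when $m$ is odd.
\item For $q$ odd and $m$ even, Lemma~\ref{lem:congruence}(c) reduces (d) to $\det G$ being a square. The formula for $\det G$ comes from Theorem~\ref{thm:det}.
\end{itemize}

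\textbf{The equivalent form in (d).} For $m$ even, $m-1$ is odd, so $f_0^{m-1}=f_0\,(f_0^{(m-2)/2})^2$. Also $(m-1)(m-2)/2\equiv (m-2)/2 \pmod 2$, again because $m-1$ is odd. Hence
\[
\det G=(-1)^{(m-2)/2}f_0\cdot(\text{square}),
\]
so $\det G$ is a square iff $(-1)^{(m-2)/2}f_0$ is a square in $\F_q^\ast$.
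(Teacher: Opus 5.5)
Your proof is correct. Parts (b)--(d) match the paper's argument. In (d) you cite Lemma~\ref{lem:congruence}(c) directly, whereas the paper applies part (b) to $\lambda G$ and observes that $\lambda^m$ is a square; this is the same computation.

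Part (a) takes a different route.

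\textbf{The paper's route.} It never uses the twisted-dual formula in (a). It compares the forms $\Psi(u)A\Psi(v)^{\mathsf T}$ and $\Psi(u)B\Psi(v)^{\mathsf T}$, where $B=I_n\otimes SS^{\mathsf T}$. It tests duality preservation on the one-dimensional subspaces $C=\langle u\rangle$, which gives $\ker(xA)=\ker(xB)$ and hence $xB=\lambda_x\,xA$. It then shows by hand that $\lambda_x$ does not depend on $x$.

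\textbf{Your route.} You note that the first equality of Corollary~\ref{cor:twist} holds without the module hypothesis. This is correct, and it matters, because the definition of duality-preserving quantifies over all $\F_q$-subspaces. You combine it with the non-symmetric version of Lemma~\ref{lem:twisteddual}. This turns the condition into $W(I_n\otimes G^{-1}SS^{\mathsf T})=W$ for all subspaces $W\subseteq\F_q^{mn}$. In effect you test on the codimension-one subspaces $C$ (those with $\dim\Psi(C)^{\dpe}=1$) rather than on lines.

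\textbf{Comparison.} Both proofs end in the same fact: an invertible matrix for which every nonzero vector is an eigenvector is scalar. You apply it to $A^{-1}B$; the paper applies it, implicitly, to $BA^{-1}$. Your version is shorter and reuses the machinery of Section~\ref{sec:gram}. The paper's version is self-contained, and its forward direction exhibits the identity $\Phi_S(u)\Phi_S(v)^{\mathsf T}=\lambda\ip{u}{v}$ directly.
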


\begin{proof}
We set $A=I_n\otimes G$ and $B=I_n\otimes(SS^{\mathsf T})$.
By Proposition~\ref{prop:blockgram}, we have \[\ip{u}{v}=\Psi(u)A\Psi(v)^{\mathsf T}.\]

Note that $\Phi_S(u)=\Psi(u)(I_n\otimes S)$ and $\Phi_S(v)=\Psi(v)(I_n\otimes S).$
Therefore
\[
\begin{aligned}
\Phi_S(u)\Phi_S(v)^{\mathsf T}
&=\Psi(u)(I_n\otimes S)
   \bigl(\Psi(v)(I_n\otimes S)\bigr)^{\mathsf T}\\
&=\Psi(u)(I_n\otimes S)(I_n\otimes S^{\mathsf T})\Psi(v)^{\mathsf T}\\
&=\Psi(u)(I_n\otimes SS^{\mathsf T})\Psi(v)^{\mathsf T}\\
&=\Psi(u)B\Psi(v)^{\mathsf T}.
\end{aligned}
\]

\textbf{(a)} If $SS^{\mathsf T}=\lambda G$, then $B=\lambda A$. Hence, for any
$u,v$,
\[
\Phi_S(u)\Phi_S(v)^{\mathsf T}=\Psi(u)B\Psi(v)^{\mathsf T}=\lambda\Psi(u)A\Psi(v)^{\mathsf T}=\lambda\ip{u}{v}.
\]

Since $\lambda\neq0$, $\ip{u}{v}=0$ if and only if  $\Phi_S(u)\Phi_S(v)^{\mathsf T}=0.$
Thus, $u$ and $v$ are orthogonal if and only if $\Phi_S(u)$ and
$\Phi_S(v)$ are orthogonal.

\vskip 2pt

Conversely, suppose that $\Phi_S$ is duality-preserving. Let
$x\in\F_q^{mn}$ be nonzero, and choose $u$ such that $\Psi(u)=x$.
Consider the one-dimensional subspace $\langle u\rangle$.

Let $z=\Psi(v)$. Then
\[
z\in\ker(xA)
\iff xAz^{\mathsf T}=0 \iff \Psi(u)A\Psi(v)^{\mathsf T}=0 \iff \ip{u}{v}=0.
\]
Since $\Phi_S$ is duality-preserving,
$\ip{u}{v}=0$ if and only if
$\Phi_S(u)\Phi_S(v)^{\mathsf T}=0$. Using the definition of $B$, this gives
\[
\ip{u}{v}=0
\iff \Phi_S(u)\Phi_S(v)^{\mathsf T}=0\iff \Psi(u)B\Psi(v)^{\mathsf T}=0\iff xBz^{\mathsf T}=0\iff z\in\ker(xB).
\]
Therefore $\ker(xA)=\ker(xB)$.

Since $A$ and $B$ are invertible and $x\neq0$, both $xA$ and $xB$ are
nonzero linear functionals. Two nonzero linear functionals with the same
kernel differ by a nonzero scalar. Hence,  $xB=\lambda_x xA$ for some
$\lambda_x\in\F_q^\ast$.

It remains to show that the scalar $\lambda_x$ is independent of $x$.
Let $x$ and $y$ be nonzero and linearly independent. Then $xA$ and
$yA$ are also linearly independent, since $A$ is invertible. We have $xB=\lambda_x xA$ and  $yB=\lambda_y yA$
and, applying the same relation to $x+y$,
\[
(x+y)B=\lambda_{x+y}(x+y)A.
\]
But $(x+y)B=xB+yB
      =\lambda_x xA+\lambda_y yA$.
Comparing the two expressions, we get 
\begin{align*}
    \lambda_{x+y}(x+y)A=\lambda_{x+y}xA+\lambda_{x+y}yA\\
(\lambda_x-\lambda_{x+y})xA
+(\lambda_y-\lambda_{x+y})yA=0.
\end{align*}

Since $xA$ and $yA$ are linearly independent, we get
$\lambda_x=\lambda_y=\lambda_{x+y}$.
\vskip 2pt \noindent
If $x$ and $y$ are linearly dependent, say $y=cx$ with
$c\in\F_q^\ast$, then $yB=cxB=c\lambda_x xA=\lambda_x yA.$
On the other hand, $yB=\lambda_y yA$. Comparing the two expressions gives 
\[\lambda_x yA=\lambda_y yA \implies (\lambda_x-\lambda_y) yA=0.\]
Since $yA\neq0$, we obtain
$\lambda_y=\lambda_x$. Thus the same scalar $\lambda\in\F_q^\ast$ works for every nonzero
$x\in\F_q^{mn}$. Hence $xB=\lambda xA$ for every $x$, and therefore
$B=\lambda A$. 

\vskip 3pt

Since
$A=I_n\otimes G$ and
$B=I_n\otimes(SS^{\mathsf T})$, comparison of the diagonal blocks gives
$SS^{\mathsf T}=\lambda G.$
This proves (a).
\vskip 2pt
\textbf{(b)} Suppose that $q$ is even. 
Since $G$ is invertible, it is non-degenerate. Also,
$G_{00}=s_0=1\neq0$. Hence $G$ is non-alternating. Hence, by Lemma~\ref{lem:congruence}(a),
$G\sim I_m$. Thus there exists $T\in \mathrm{GL}_m(\F_q)$ such that
\[
TGT^{\mathsf T}=I_m
\implies
G=T^{-1}(T^{-1})^{\mathsf T}.
\]
Taking $S=T^{-1}$ gives
$SS^{\mathsf T}=G$. Thus the condition in part (a) holds with
$\lambda=1$, and hence a coordinatewise duality-preserving map exists.
\vskip 2pt
\textbf{(c)} Suppose that $q$ is odd and $m$ is odd. 
As above, $G$ is symmetric, non-degenerate, and non-alternating.
Since $q$ is odd and $m$ is odd, Lemma~\ref{lem:congruence}(c) gives
$\lambda\in\F_q^\ast$ such that
$\lambda G\sim I_m$. Hence there exists $T\in \mathrm{GL}_m(\F_q)$ such that
\[
T(\lambda G)T^{\mathsf T}=I_m\implies
\lambda G=T^{-1}(T^{-1})^{\mathsf T}.
\]
Taking $S=T^{-1}$ gives
$SS^{\mathsf T}=\lambda G$. Thus the condition in part (a) holds, and
a coordinatewise duality-preserving map exists.
\vskip 2pt
\textbf{(d)} Suppose that $q$ is odd and $m$ is even.
By part (a), a coordinatewise duality-preserving map exists if and
only if there exist $S\in \mathrm{GL}_m(\F_q)$ and
$\lambda\in\F_q^\ast$ such that $SS^{\mathsf T}=\lambda G.$
Since $S$ is invertible, $SS^{\mathsf T}=\lambda G
\iff
\lambda G\sim I_m.$
Since $q$ is odd, Lemma~\ref{lem:congruence}(b) gives
\[
\lambda G\sim I_m
\iff
\det(\lambda G)\text{ is a square in }\F_q^\ast.
\]

We know that $\det(\lambda G)=\lambda^m\det(G).$ Since $m$ is even,
$\lambda^m=(\lambda^{m/2})^2$ is a square in $\F_q^\ast$. Hence
\[
\det(\lambda G)\text{ is a square}
\iff
\det(G)\text{ is a square}.
\]
Thus, a coordinatewise duality-preserving map exists if and only if
$\det(G)$ is a square in $\F_q^\ast$.
By Theorem~\ref{thm:det}, we have 
\[
\det(G)
=
(-1)^{(m-1)(m-2)/2}f_0^{m-1}.
\]

Since $m$ is even, $m-1$ is odd, we note that $f_0^{m-1}
=
f_0\left(f_0^{(m-2)/2}\right)^2$.
Hence $f_0^{m-1}$ is a square if and only if $f_0$ is a square.
Also, $(-1)^{(m-1)(m-2)/2}
=
(-1)^{(m-2)/2},$
because $m-1$ is odd.

Thus, ignoring the square factor
$\left(f_0^{(m-2)/2}\right)^2$, we have
\[
\det(G)
\equiv
(-1)^{(m-2)/2}f_0
\pmod{(\mathbb F_q^\ast)^2}.
\]

Therefore, $\det(G)$ is a square if and only if $(-1)^{(m-2)/2}f_0$ is a square. Thus, when $q$ is odd and $m$ is even, a coordinatewise
duality-preserving map exists if and only if $(-1)^{(m-2)/2}f_0$
is a square in $\F_q^\ast.$
\end{proof}

\begin{example}{\em
Let $q=3$, $m=2$ and
$f(x)=x^2-(f_1x+2)$. Here $f_0=2$, and in $\R$ we have
$x^2=f_1x+2$. Hence
$s_0=1$, $s_1=0$ and $s_2=\eps(x^2)=2$. Therefore
\[
G=\begin{pmatrix}
1&0\\
0&2
\end{pmatrix},
\qquad
\det G=2.
\]
The only nonzero square in $\F_3$ is $1$, so $2$ is not a square in
$\F_3^\ast$. Since $q$ is odd and $m=2$ is even,
Theorem~\ref{thm:gray} shows that no coordinatewise
duality-preserving Gray map exists.}
\end{example}

\begin{remark}{\em
The word ``duality-preserving'' refers only to orthogonal complements.
An invertible block matrix $S$ need not preserve Hamming weight, so the
Hamming parameters of $\Phi_S(C)$ may depend on the chosen admissible
$S$.  This distinction is important in the CSS examples below.}
\end{remark}

\section{Structure of annihilator duals and self-orthogonality}
\label{sec:duals}

So far the annihilator dual has been described abstractly and through its
Gram matrix. In this section, we give two descriptions of $C^\circ$. 
First, we express it as a syzygy module over $\R$. Second, when $f$ is squarefree, we show that annihilator duality decouples completely across the Chinese-Remainder factors and becomes \emph{Euclidean} duality
on each factor. Together, these reduce questions
about self-orthogonal, self-dual, LCD and dual-containing QP codes to
classical questions over the residue fields, and in the one-generator case
to a single polynomial congruence.

\begin{theorem}
Let $C\subseteq\R^n$ be generated as an $\R$-module by the rows of a
matrix $M\in\R^{\rho\times n}$. Then
\[
C^\circ=\{v\in\R^n:Mv^{\mathsf T}=0\text{ in }\R^\rho\},
\]
the syzygy module of the rows of $M$.
\end{theorem}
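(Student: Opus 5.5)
This is essentially a restatement of Theorem~\ref{thm:basic}(c), with the generating set $S$ taken to be the set of rows of $M$. The plan is to reduce to that result and then rewrite the orthogonality conditions in matrix form.

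Let $r_1,\dots,r_\rho\in\R^n$ be the rows of $M$, and set $S=\{r_1,\dots,r_\rho\}$. By hypothesis $S$ generates $C$ as an $\R$-module. Theorem~\ref{thm:basic}(c) then gives
\[
C^\circ=\{v\in\R^n:\ r_k\cdot v=0\ \text{for } k=1,\dots,\rho\}.
\]
It remains to identify these $\rho$ conditions with the single matrix equation $Mv^{\mathsf T}=0$. For $v=(v_1,\dots,v_n)\in\R^n$, the $k$-th entry of the column $Mv^{\mathsf T}\in\R^\rho$ is computed in the commutative ring $\R$ as
\[
\sum_{i=1}^n M_{ki}v_i=r_k\cdot v .
\]
So $Mv^{\mathsf T}=0$ in $\R^\rho$ exactly when $r_k\cdot v=0$ for every $k$, which is the displayed equality.

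For a self-contained argument I would also recall the two inclusions briefly. If $Mv^{\mathsf T}=0$, then any $u=\sum_k a_kr_k\in C$ satisfies $u\cdot v=\sum_k a_k(r_k\cdot v)=0$. Hence $\varepsilon(u\cdot v)=0$, so $v\in C^\circ$. Conversely, if $v\in C^\circ$, Theorem~\ref{thm:basic}(b) gives $v\in C^{\perp_{\mathcal R}}$, and in particular $r_k\cdot v=0$ for each $k$.

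There is no real obstacle here. The only substantive input is the passage from the vanishing of the constant term to the vanishing of the whole product $u\cdot v$. That step is already contained in Theorem~\ref{thm:basic}(b), which rests on the non-degeneracy in Proposition~\ref{prop:nondeg} and the $\R$-module closure of $C$. Finally, the right-hand side is by definition the set of $\R$-linear relations among the columns of $M$, that is, the kernel of the $\R$-linear map $v\mapsto Mv^{\mathsf T}$, which justifies calling it the syzygy module.
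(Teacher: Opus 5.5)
Your proposal is correct and matches the paper's proof: take $S$ to be the set of rows of $M$, apply Theorem~\ref{thm:basic}(b),(c), and observe that the $k$-th entry of $Mv^{\mathsf T}$ is $r_k\cdot v$. Your closing remark describes the right-hand side as the kernel of $v\mapsto Mv^{\mathsf T}$, i.e.\ the module of relations among the columns of $M$; this is slightly more precise than the paper's phrase ``syzygy module of the rows.''
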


\begin{proof}
By Theorem~\ref{thm:basic}(b) we have $C^{\circ}=C^{\perp_{\mathcal R}}$, so
$v\in C^{\circ}$ if and only if $u\cdot v=0$ for every $u\in C$. By  Theorem~\ref{thm:basic}(c), it suffices to impose this for
the generators, i.e.\ for the rows $u_1,\dots,u_\rho$ of $M$. The
condition $u_t\cdot v=0$ for $t=1,\dots,\rho$ is precisely the statement
that the $t$-th entry of the column vector $Mv^{\mathsf T}$ vanishes.
\end{proof}


\begin{theorem}\label{thm:global}
Assume that $f=\prod_{j=1}^{k}p_j$ is squarefree, let $\R\cong\bigoplus_j\K_j$ be the decomposition \eqref{eq:crt} into fields $\K_j=\F_q[x]/\langle p_j\rangle$, and let $C\cong\bigoplus_j C_j$ be the corresponding decomposition  \eqref{eq:Cdecomp} of a QP code, with
$C_j\subseteq\K_j^{\,n}$. Then
\[
C^{\circ}\;\cong\;\bigoplus_{j=1}^{k}(C_j)^{\dpe},
\]
where $(C_j)^{\dpe}$ denotes the ordinary Euclidean dual of $C_j$ taken
over the field $\K_j$.
\end{theorem}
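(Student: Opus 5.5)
The plan is to reduce everything to Theorem~\ref{thm:basic}(b), which identifies $C^\circ$ with the $\R$-Euclidean orthogonal $C^{\perp_{\mathcal R}}$, and then transport this purely ring-theoretic statement through the CRT isomorphism. Let $\theta:\R\to\bigoplus_j\K_j$ be the $\F_q$-algebra isomorphism of \eqref{eq:crt}, $r\mapsto(r\bmod p_1,\dots,r\bmod p_k)$. Write $\pi_j$ for the $j$-th projection, and extend $\theta$ coordinatewise to $\Theta:\R^n\to\bigoplus_j\K_j^{\,n}$. Under $\Theta$, the code $C$ becomes $\bigoplus_j C_j$ with $C_j=\pi_j(C)$. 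This is a genuine internal direct sum: if $e_j\in\R$ denotes the primitive idempotent with $\theta(e_j)$ equal to the $j$-th unit vector, then $e_jC\subseteq C$ because $C$ is an $\R$-submodule. So $C=\bigoplus_j e_jC$, and $e_jC$ is identified with $C_j$.

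The key computation is that $\theta$ is multiplicative, so it carries the $\R$-valued dot product to the componentwise dot products:
\[
\theta(u\cdot v)=\bigl(\pi_1(u)\cdot\pi_1(v),\dots,\pi_k(u)\cdot\pi_k(v)\bigr),
\]
where the $j$-th entry is the ordinary bilinear dot product on $\K_j^{\,n}$. Since $\theta$ is injective, $u\cdot v=0$ in $\R$ if and only if $\pi_j(u)\cdot\pi_j(v)=0$ for every $j$.

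With this, I would prove both inclusions of $\Theta(C^{\perp_{\mathcal R}})=\bigoplus_j C_j^{\dpe}$.

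\emph{First inclusion.} Let $v\in C^{\perp_{\mathcal R}}$ and $c\in C_j$. Choose $u\in C$ with $\pi_j(u)=c$; replacing $u$ by $e_ju\in C$ does not change this. Then $0=\pi_j(u\cdot v)=c\cdot\pi_j(v)$, so $\pi_j(v)\in C_j^{\dpe}$.

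\emph{Reverse inclusion.} Suppose $\pi_j(v)\in C_j^{\dpe}$ for all $j$. For any $u\in C$ we have $\pi_j(u)\in C_j$ for every $j$, so all components of $\theta(u\cdot v)$ vanish. Hence $u\cdot v=0$.

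Combining the two inclusions with Theorem~\ref{thm:basic}(b) gives $C^\circ\cong\bigoplus_j C_j^{\dpe}$.

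The only point needing care is the decomposition itself: the projections $\pi_j(C)$ must be exactly the constituents $C_j$, and $C$ must equal the full product $\bigoplus_j C_j$ rather than a subdirect product. The first inclusion uses exactly this, since it needs every element of $C_j$ to lift to $C$. This is where the submodule hypothesis (via the idempotents $e_j$) enters, and it is the step I would spell out most carefully. Squarefreeness is used only so that each $\K_j$ is a field, which makes $C_j^{\dpe}$ the classical Euclidean dual. As a consistency check, comparing dimensions with Theorem~\ref{thm:basic}(d) gives
\[
\sum_j\bigl(n\deg p_j-\dim_{\F_q}C_j\bigr)=mn-\dim_{\F_q}C,
\]
as it should.
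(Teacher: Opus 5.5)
Your proof is correct and follows essentially the same route as the paper. Both arguments reduce to $C^\circ=C^{\perp_{\mathcal R}}$ via Theorem~\ref{thm:basic}(b), use that the CRT isomorphism carries $u\cdot v$ to the componentwise dot products over the fields $\K_j$, and read off both inclusions. Your idempotent argument makes explicit what the paper compresses into ``$u^{(j)}$ ranges over $C_j$ for each $j$ independently''. Note, though, that neither inclusion needs the replacement $u\mapsto e_ju$ once $C_j=\pi_j(C)$: the idempotents are needed only to identify $C$ with the full product $\bigoplus_j C_j$, which the statement already assumes.
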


\begin{proof}
By Theorem~\ref{thm:basic}(b), $v\in C^{\circ}$ if and only if $u\cdot
v=0$ in $\R$ for all $u\in C$. Write $u\leftrightarrow(u^{(1)},\dots,
u^{(k)})$ and $v\leftrightarrow(v^{(1)},\dots,v^{(k)})$ under
\eqref{eq:crt}, where $u^{(j)},v^{(j)}\in\K_j^{\,n}$. Under the ring isomorphism \eqref{eq:crt}, multiplication and addition are computed componentwise, so the $j$-th component of $u\cdot v$ is 
\[
(u\cdot v)^{(j)}=\sum_{i=1}^{n}u^{(j)}_iv^{(j)}_i
=u^{(j)}\cdot v^{(j)},
\]
the standard $\K_j$-bilinear dot product on $\K_j^{\,n}$. An element of
$\R$ vanishes if and only if all its components vanish, so
\[
u\cdot v=0
\iff
u^{(j)}\cdot v^{(j)}=0\ \text{ for every }j .
\]
Since $u$ ranges over $C$ exactly when $u^{(j)}$ ranges over $C_j$ for
each $j$ independently,  $v\in C^\circ$ if and
only if $v^{(j)}\in(C_j)^{\dpe}$ for every $j$.
\end{proof}

In the case $n=1$ this is compatible with the product-ring decomposition in
\cite{AS}; the argument above shows that the same componentwise
description holds for arbitrary index.  Standard self-orthogonal,
self-dual, LCD and dual-containing conditions may therefore be checked
constituent by constituent.

\begin{definition} A code  $C$ is \emph{annihilator
self-orthogonal} if $C\subseteq C^{\circ}$, \emph{annihilator self-dual}
if $C=C^{\circ}$, \emph{annihilator LCD} if $C\cap C^{\circ}=\{0\}$, and
\emph{annihilator dual-containing} if $C^{\circ}\subseteq C$.\end{definition}

\begin{corollary}\label{thm:selforth}
Let $C\subseteq\R^n$ be a QP code.
\begin{enumerate}[label=\normalfont(\alph*),leftmargin=2.2em]
\item If $C$ is generated by the rows of $M\in\R^{\rho\times n}$, then $C$
is annihilator self-orthogonal if and only if
\[
MM^{\mathsf T}=0\quad\text{in }\R^{\rho\times\rho}.
\]
\item If $C=\langle a\rangle$ is generated by the single polynomial vector
$a=(a_1,\dots,a_n)\in\R^n$, then $C$ is annihilator self-orthogonal if and
only if
\[
\sum_{i=1}^{n}a_i(x)^{2}\equiv0\pmod{f(x)} .
\]
\item Assume the polynomial $f$ is squarefree with irreducible factors $p_j$ and let $C_j$ be the
constituents of $C$. Then $C$ is annihilator self-dual if and only if each
$C_j$ is Euclidean self-dual over $\K_j$; annihilator dual-containing if
and only if $(C_j)^{\dpe}\subseteq C_j$ for every $j$; and annihilator LCD
if and only if each $C_j$ is Euclidean LCD over $\K_j$.
\end{enumerate}
\end{corollary}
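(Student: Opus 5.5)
For part (a), I will use the syzygy description of $C^\circ$ (the theorem just before Theorem~\ref{thm:global}): $C^\circ=\{v\in\R^n: Mv^{\mathsf T}=0\}$. Annihilator self-orthogonality $C\subseteq C^\circ$ means every $u\in C$ satisfies $Mu^{\mathsf T}=0$. Since $C$ is the $\R$-row span of $M$ and $v\mapsto Mv^{\mathsf T}$ is $\R$-linear, it is enough to check this on the rows $u_1,\dots,u_\rho$ of $M$. The condition $Mu_s^{\mathsf T}=0$ for all $s$ says exactly that every column of $MM^{\mathsf T}$ vanishes, i.e. $MM^{\mathsf T}=0$ in $\R^{\rho\times\rho}$. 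Conversely, if $MM^{\mathsf T}=0$, then each generator lies in $C^\circ$. Because $C^\circ$ is an $\R$-submodule (Theorem~\ref{thm:basic}(a)), all of $C$ lies in $C^\circ$.

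Part (b) is the case $\rho=1$, $M=a$. Then $MM^{\mathsf T}=a\cdot a=\sum_i a_i^2\in\R$. Vanishing in $\R=\F_q[x]/\langle f\rangle$ is the congruence $\sum a_i(x)^2\equiv0\pmod f$.

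For part (c), I will work through the CRT isomorphism $\R^n\cong\bigoplus_j\K_j^{\,n}$. It is $\R$-linear and bijective, so it respects inclusion, equality and intersection of submodules, and it sends $C$ to $\bigoplus_j C_j$ and, by Theorem~\ref{thm:global}, sends $C^\circ$ to $\bigoplus_j C_j^{\dpe}$. Both sides are internal direct sums of submodules supported on the disjoint component blocks. Therefore
\[
\bigoplus_j X_j\subseteq\bigoplus_j Y_j \iff X_j\subseteq Y_j \text{ for all } j,
\]
by projecting to each component; the reverse direction is immediate. Likewise $\bigl(\bigoplus_j X_j\bigr)\cap\bigl(\bigoplus_j Y_j\bigr)=\bigoplus_j(X_j\cap Y_j)$. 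Applying the first fact in both directions gives the self-dual and dual-containing statements. Applying the second with $X_j=C_j$ and $Y_j=C_j^{\dpe}$ shows that $C\cap C^\circ=\{0\}$ if and only if $C_j\cap C_j^{\dpe}=\{0\}$ for all $j$, which is the LCD statement.

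The only point needing care is in (c). Theorem~\ref{thm:global} gives an isomorphism, and I must use that it is the same CRT map that produces the decomposition of $C$, so the two decompositions are compatible. This is clear from its proof, which identifies $v\in C^\circ$ componentwise via the very same map.
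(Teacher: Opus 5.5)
Your proposal is correct and follows essentially the same route as the paper. For (a) you reduce $C\subseteq C^\circ=C^{\perp_{\mathcal R}}$ to the vanishing of the dot products of the rows of $M$; you phrase this through the syzygy description $Mv^{\mathsf T}=0$, which just repackages Theorem~\ref{thm:basic}(b),(c). Part (b) is the case $\rho=1$, and (c) is the componentwise comparison of $C\cong\bigoplus_j C_j$ with $C^\circ\cong\bigoplus_j C_j^{\dpe}$ from Theorem~\ref{thm:global}, where your remark that both decompositions come from the same CRT map is a detail the paper leaves implicit.
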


\begin{proof}
\begin{enumerate}
\item[(a)] By Theorem~\ref{thm:basic}(c), $C\subseteq C^{\circ}=C^{\perp_{\mathcal R}}$ means
that $u\cdot u'=0$ for all $u,u'\in C$. By $\R$-bilinearity of the dot
product it is enough to impose this on generators, i.e.\ on the rows
$u_1,\dots,u_\rho$ of $M$. The condition $u_t\cdot u_{t'}=0$ for all
$t,t'$ with $1\le t,t'\le\rho$ entails exactly that every entry of the matrix $MM^{\mathsf T}$
vanishes.
\item[(b)]  Apply (a) with $\rho=1$ and $M=a$, a single row. Then $MM^{\mathsf
T}$ is the $1\times1$ matrix with entry $a\cdot a=\sum_{i=1}^{n}a_i^{2}\in\R.$
This vanishes in $\R=\F_q[x]/\langle f\rangle$ if and only if $f$ divides
$\sum_i a_i(x)^{2}$ in $\F_q[x]$, which is the stated congruence.
\item[(c)] Immediate from Theorem~\ref{thm:global}: under the orthogonal
decomposition $C\cong\bigoplus_jC_j$ and
$C^{\circ}\cong\bigoplus_j(C_j)^{\dpe}$, an equality, inclusion or trivial
intersection holds for $C$ and $C^\circ$ if and only if it holds
componentwise.
\end{enumerate}
\end{proof}

Part~(b) gives the one-generator condition asked for in
\cite[\S6]{WSS}; parts (a) and (c) are direct consequences of
Theorem~\ref{thm:basic} and the CRT description.

\begin{remark}{\em
If $q$ is even, then
$\sum_i a_i^2=(\sum_i a_i)^2$.  Writing
$f=\prod_j p_j^{e_j}$, Corollary~\ref{thm:selforth}(b) is therefore
equivalent to
\[
   \prod_j p_j^{\lceil e_j/2\rceil}\ \Bigm|\ \sum_i a_i.
\]
This form is convenient for searching for one-generator
self-orthogonal QP codes.}
\end{remark}

\begin{proposition}\label{prop:dim}
Let $f$ be squarefree, and let $C_j\subseteq\K_j^n$ be the constituents of a given code $C$ as above. Then
\[
\dim_{\F_q}\Psi(C)=\sum_{j=1}^{k}\deg(p_j)\cdot\dim_{\K_j}C_j ,
\]
and the minimal number of $\R$-module generators  of $C$ equals $\max_j\dim_{\K_j}C_j$. In
particular the $\F_q$-dimension of $\Psi(C)$ is determined by the
constituent data, independent of the generator matrix.
\end{proposition}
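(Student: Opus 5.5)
The plan is to transport everything through the CRT isomorphism \eqref{eq:crt} applied coordinatewise, $\R^n\cong\bigoplus_{j=1}^k\K_j^{\,n}$. This is an isomorphism of $\F_q$-algebras (respectively of modules), so it preserves $\F_q$-dimension. Under it, $C$ corresponds to $\bigoplus_j C_j$ as in \eqref{eq:Cdecomp}. Since $\Psi$ is an $\F_q$-linear isomorphism, we have $\dim_{\F_q}\Psi(C)=\dim_{\F_q}C=\sum_j\dim_{\F_q}C_j$. Each $C_j$ is a $\K_j$-subspace of $\K_j^{\,n}$, and $\K_j\cong\F_{q^{\deg p_j}}$ has $\F_q$-dimension $\deg p_j$. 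Hence $\dim_{\F_q}C_j=\deg(p_j)\dim_{\K_j}C_j$, which gives the dimension formula. The final sentence of the statement then follows at once, because the right-hand side depends only on the constituents, and these are intrinsic to $C$.

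For the number of generators, write $r=\max_j\dim_{\K_j}C_j$. I will prove two inequalities.

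\emph{Lower bound.} Suppose $C$ is generated by $u_1,\dots,u_\rho$. Let $e_j\in\R$ be the primitive idempotents corresponding to the factors. Then $C_j=e_jC$ is generated over $\R$, and hence over $\K_j=e_j\R$, by the images $u_t^{(j)}$. So $\dim_{\K_j}C_j\le\rho$ for every $j$, and therefore $\rho\ge r$.

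\emph{Upper bound.} For each $j$, choose a $\K_j$-basis $b^{(j)}_1,\dots,b^{(j)}_{d_j}$ of $C_j$, where $d_j=\dim_{\K_j}C_j$. Pad it with zero vectors $b^{(j)}_t=0$ for $d_j<t\le r$. For $t=1,\dots,r$, define $u_t\in\R^n$ to be the element corresponding under CRT to $(b^{(1)}_t,\dots,b^{(k)}_t)$. Then $e_ju_t=b^{(j)}_t$, so the $\R$-span of $u_1,\dots,u_r$ contains $e_j\R\,u_t=\K_j b^{(j)}_t$ for all $j$ and $t$. This span therefore contains every $C_j$ and hence equals $\bigoplus_jC_j=C$. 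This gives $r$ generators.

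The main point requiring care is the identification of the constituent $C_j$ with $e_jC$. This identification is what makes the projection of a generating set generate $C_j$, and it makes the module structure over $\R$ act on the $j$-th component through $\K_j$. It holds because the CRT isomorphism is a ring isomorphism, so $\R$-submodules of $\R^n$ correspond exactly to tuples of $\K_j$-subspaces, as in \eqref{eq:Cdecomp}. Everything else is linear algebra over the fields $\K_j$.
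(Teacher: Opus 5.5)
Your proposal is correct and follows the same route as the paper. The dimension formula comes from the $\F_q$-isomorphism $C\cong\bigoplus_j C_j$ together with $\dim_{\F_q}C_j=\deg(p_j)\dim_{\K_j}C_j$, and the upper bound on the generator count comes from gluing constituent bases through the CRT idempotents. Your explicit lower bound, obtained by projecting an arbitrary generating set by $e_j$ to a $\K_j$-spanning set of $C_j$, supplies the step that the paper only asserts with the word ``needs''.
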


\begin{proof}
The decomposition in Equation \eqref{eq:Cdecomp} is an isomorphism of
$\F_q$-vector spaces, so dimensions add:
$\dim_{\F_q}C=\sum_j\dim_{\F_q}C_j$. Each $C_j$ is a $\K_j$-space and
$\K_j=\F_{q^{\deg p_j}}$, so $\dim_{\F_q}C_j=\deg(p_j)\dim_{\K_j}C_j$.
Adding up provides the formula, and $\dim_{\F_q}\Psi(C)=\dim_{\F_q}C$ because
$\Psi$ is an $\F_q$-isomorphism. 
\vskip 2pt
For the generator-number statement, a minimal
generating set of $C\cong\bigoplus_jC_j$ over
$\R\cong\bigoplus_j\K_j$ needs $\max_j\dim_{\K_j}C_j$ elements, since one
may generate all constituents simultaneously by combining generators
through the idempotents of \eqref{eq:crt}.
\end{proof}

Proposition~\ref{prop:dim} above is included  only  to record the parameters in
the equal-block setting considered here.  Related constituent and
multi-generator formulas for generalized QP codes appear in
\cite{SPD}.

\section{The MacWilliams transform} \label{sec:mw}
In this section, we show how the annihilator dual interacts with the Hamming weight enumerator.
The twisted description in Corollary~\ref{cor:twist}  gives the
corresponding relation between Hamming weight enumerators. For an
$\F_q$-linear code $D\subseteq\F_q^{mn}$, write
\begin{align*}
W_D(X,Y)=\sum_{c\in D}X^{mn-\wt(c)}Y^{\wt(c)}.
\end{align*}
The following theorem is the classical MacWilliams identity applied to
the twisted coefficient image of a QP code.

\begin{theorem}\label{thm:mw}
For every QP code $C\subseteq\R^n$,
\begin{align*}
W_{\Psi(C^{\circ})}(X,Y)
=\frac{1}{|C|}\,
W_{\Psi(C)(I_n\otimes G)}\bigl(X+(q-1)Y,\,X-Y\bigr).
\end{align*}
If a coordinatewise duality-preserving Gray map $\Phi$ exists, then
\begin{align*}
W_{\Phi(C^{\circ})}(X,Y)
=\frac{1}{|C|}\,
W_{\Phi(C)}\bigl(X+(q-1)Y,\,X-Y\bigr).
\end{align*}
\end{theorem}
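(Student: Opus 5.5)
The plan is to reduce both identities to the classical MacWilliams identity for $\F_q$-linear codes of length $N=mn$. That identity says that for any linear $D\subseteq\F_q^{N}$,
\[
W_{D^{\dpe}}(X,Y)=\frac{1}{|D|}\,W_D\bigl(X+(q-1)Y,\,X-Y\bigr).
\]
In each case the work is to find the right code $D$ whose Euclidean dual is the code on the left, and to check that $|D|=|C|$.

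For the first identity, take $D=\Psi(C)\,A$ with $A=I_n\otimes G$. By Corollary~\ref{cor:twist}, $\Psi(C^\circ)=\bigl(\Psi(C)A\bigr)^{\dpe}=D^{\dpe}$. Hence the left side is $W_{D^{\dpe}}$, and the MacWilliams identity gives it as $|D|^{-1}W_D(X+(q-1)Y,X-Y)$.

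It remains to compute $|D|$. The matrix $A$ is invertible, by Proposition~\ref{prop:blockgram} together with Theorem~\ref{thm:det}, and $\Psi$ is an $\F_q$-isomorphism. So right multiplication by $A$ is a bijection, and $|D|=|\Psi(C)|=|C|$.

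Note that $D=\Psi(C)(I_n\otimes G)$ is not literally $\Psi(C)$. A twisted code need not have the same weight distribution, which is why the statement keeps $G$ inside the enumerator. This needs no extra argument; we just keep it as written.

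For the second identity, suppose a coordinatewise duality-preserving Gray map $\Phi=\Phi_S$ exists. By definition $\Phi(C^\circ)=\Phi(C)^{\dpe}$, which applies since $C$ is in particular an $\F_q$-subspace. Apply the classical MacWilliams identity to $D=\Phi(C)$. Since $S\in\mathrm{GL}_m(\F_q)$, the matrix $I_n\otimes S$ is invertible, so $\Phi$ is injective and $|\Phi(C)|=|C|$. This gives the second formula.

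The only point needing care is which version of the classical identity to cite. It must be the Euclidean dual version for linear codes over $\F_q$, with normalizing factor $1/|D|$. Matching that factor to $|C|$ is the injectivity check above. There is no genuine obstacle: everything follows from Corollary~\ref{cor:twist}, the invertibility of $G$, and the definition of a duality-preserving map.
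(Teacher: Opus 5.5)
Your proposal is correct and follows the paper's proof essentially step for step. You take $D=\Psi(C)(I_n\otimes G)$ via the first equality of Corollary~\ref{cor:twist}, use invertibility of $I_n\otimes G$ to get $|D|=|C|$, and apply the classical MacWilliams identity. You then repeat the argument with $D=\Phi(C)$, using the definition of a duality-preserving map and the injectivity of $\Phi$.
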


\begin{proof}
By Corollary~\ref{cor:twist},
$\Psi(C^{\circ})=\bigl(\Psi(C)(I_n\otimes G)\bigr)^{\dpe}$.
Here we use the first equality in Corollary~\ref{cor:twist}. The
equivalent expression
$\Psi(C^{\circ})=\Psi(C)^{\dpe}(I_n\otimes G^{-1})$
places the inverse twist after taking the Euclidean dual. Thus the code
to which we apply the MacWilliams identity is
$D=\Psi(C)(I_n\otimes G)$. Since $I_n\otimes G$ is invertible,
$D$ is an $\F_q$-linear code of length $mn$ and
$|D|=|\Psi(C)|=|C|$. The classical MacWilliams identity gives
\begin{align*}
W_{D^{\dpe}}(X,Y)
=\frac{1}{|D|}W_D\bigl(X+(q-1)Y,\,X-Y\bigr).
\end{align*}
Substituting $D^{\dpe}=\Psi(C^{\circ})$ and $|D|=|C|$ gives the first
formula.

Now suppose that $\Phi$ is duality-preserving. Then
$\Phi(C^{\circ})=\Phi(C)^{\dpe}$. Since $\Phi$ is an invertible
$\F_q$-linear map, $|\Phi(C)|=|C|$, and applying the same MacWilliams
identity to $\Phi(C)$ gives the second formula.
\end{proof}

Theorem~\ref{thm:mw}  above provides a MacWilliams identity for the Hamming
weight enumerators of the corresponding $\F_q$-images. In what follows, we maintain  more information. For each coordinate of a vector
in $\R^n$, we record its Gray weight, that is, the Hamming weight of
its image in $\F_q^m$. Thus, instead of recording only the total Gray
weight, we record how many coordinates have Gray weight
$0,1,\ldots,m$. This gives a partition-level version of the
MacWilliams identity. The general character-transform framework for
partition weight distributions over finite Frobenius rings is standard;
see Honold and Landjev \cite{HL}.  Here we write the specialization
explicitly because the two partitions are determined by the Gram twist
appearing in our annihilator pairing.

For a fixed $S\in\mathrm{GL}_m(\F_q)$,  we define $\phi_S(a)=\psi(a)S$ for 
$a\in\R$, as in Section~\ref{sec:gray}.  We also define a second map $\tau_S(a)=\psi(a)GS^{-\mathsf T}.$ Since $S$ and $G$ are invertible, both $\phi_S$ and $\tau_S$ are
$\F_q$-linear bijections from $\R$ onto $\F_q^m$.
The reason for introducing these two maps is the following simple
relation  that arises using the symmetry of $G$ and Lemma~\ref{lem:gramR}, for $a,b\in\R$:
\begin{equation} \label{eq:pairingsplit}
    \eps(ab)=\psi(a)G\psi(b)^{\mathsf T} =\bigl(\psi(a)S\bigr)
  \bigl(\psi(b)GS^{-\mathsf T}\bigr)^{\mathsf T} =\phi_S(a)\cdot\tau_S(b).
\end{equation}

Thus the annihilator pairing on $\R$ becomes the ordinary Euclidean
pairing on $\F_q^m$, with $\phi_S$ applied to the first argument and
$\tau_S$ to the second. This is a reformulation of
Lemma~\ref{lem:gramR} under the coordinate maps $\phi_S$ and $\tau_S$.
For elements in $\R^n$, the same relation is applied separately in each
of the $n$ coordinates, as in Proposition~\ref{prop:blockgram}.

We now use these two maps to define two weights on $\R$. For $a\in\R$, we denote $\wt_S(a)=\wt\bigl(\phi_S(a)\bigr)$ and 
 $\widetilde{\wt}_S(a)=\wt\bigl(\tau_S(a)\bigr),$
where $\wt$ denotes the usual Hamming weight in $\F_q^m$.
These weights give two partitions of the ring $\R$:
\[
\mathcal P_S
 =\bigl(\{a\in\R:\wt_S(a)=i\}\bigr)_{i=0}^{m},
\hbox{ and }
\mathcal Q_S
 =\bigl(\{a\in\R:\widetilde{\wt}_S(a)=j\}\bigr)_{j=0}^{m}.
\]
Since both $\phi_S$ and $\tau_S$ are onto $\F_q^m$, each of the
weights $0,1,\ldots,m$ occurs. Hence both partitions have $m+1$
nonempty parts.

There is a simple case in which these two partitions are the same.
Suppose that
$SS^{\mathsf T}=\lambda G$
for some $\lambda\in\F_q^{*}$.  This is
the same  condition as  $\Phi_S$ being a duality-preserving map (see Theorem~\ref{thm:gray}) and, in this case, $G=\lambda^{-1}SS^{\mathsf T},$
and hence
\[
\tau_S(a)
 =\psi(a)GS^{-\mathsf T}
 =\lambda^{-1}\psi(a)S
 =\lambda^{-1}\phi_S(a).
\]
Notice that multiplication by the nonzero scalar $\lambda^{-1}$ does not change
Hamming weight and, therefore
$\widetilde{\wt}_S(a)=\wt_S(a)$
for every $a\in\R$, and consequently
$\mathcal P_S=\mathcal Q_S.$

 More generally, we can write $\tau_S(a)=\phi_S(a)M,$ where
$M=S^{-1}GS^{-\mathsf T}.$
Since $\phi_S$ is onto $\F_q^m$, the two partitions coincide precisely
when $\wt(zM)=\wt(z)$ for every $z\in\F_q^m.$

An invertible linear map preserves the Hamming weight if and only if its
matrix is monomial. Thus $\mathcal P_S=\mathcal Q_S$
if and only if $M=S^{-1}GS^{-\mathsf T}$ is a monomial matrix.
We observe that this condition is slightly weaker than the existence of a
duality-preserving Gray map. 

We next define the corresponding composition enumerators. Let
$u=(u_1,\ldots,u_n)\in\R^n.$ For each $0\le i\le m$, let
$$N_i(u)
 =\bigl|\{k:\wt_S(u_k)=i\}\bigr|.$$
 $N_i(u)$ counts the number of coordinates of $u$ having
$\wt_S$-weight $i$, and
\[
N_0(u)+\cdots+N_m(u)=n.
\]
Similarly, for $v=(v_1,\ldots,v_n)$, define 
$
\widetilde N_j(v)
 =\bigl|\{k:\widetilde{\wt}_S(v_k)=j\}\bigr|
$, for $0\le i\le m$.

Given  a code $C\subseteq\R^n$, we define
\begin{equation}\label{eq:A}
A_C(X_0,\ldots,X_m)
 =\sum_{u\in C}
   \prod_{i=0}^{m}X_i^{N_i(u)} =\sum_{u\in C}\prod_{k=1}^{n}X_{\wt_S(u_k)}.
\end{equation}
In the same way, we define
\begin{equation}\label{eq:B}
B_C(Y_0,\ldots,Y_m)
 =\sum_{v\in C}
   \prod_{j=0}^{m}Y_j^{\widetilde N_j(v)}
 =\sum_{v\in C}\prod_{k=1}^{n}
   Y_{\widetilde{\wt}_S(v_k)}.
\end{equation}

For example, the coefficient  corresponding to the term 
$
X_0^{w_0}X_1^{w_1}\cdots X_m^{w_m}
$ 
in $A_C$ is the number of codewords of $C$ having exactly $w_i$
coordinates of $\wt_S$-weight $i$, for $0\le i\le m$. Thus $A_C$
records more information than the ordinary two-variable Gray-weight
enumerator. 
 The following result gives the MacWilliams relation between the two
composition enumerators. We  use the $q$-ary Krawtchouk numbers of length $m$ \cite[Ch.~5]{MS},
\[
K_j(i)
 =
 \sum_{h=0}^{j}
 (-1)^h(q-1)^{j-h}
 \binom{i}{h}\binom{m-i}{j-h},
\qquad 0\le i,j\le m,
\]
with the usual convention $\binom{a}{b}=0$ when $b<0$ or $b>a$. The following result provides McWilliams identities for the annhihilator dual. 

\begin{theorem}\label{thm:mwpart}
For every QP code $C\subseteq\R^n$ and every
$S\in\mathrm{GL}_m(\F_q)$,
\begin{align*}
B_{C^{\circ}}(Y_0,\ldots,Y_m)
=\frac{1}{|C|}
A_C\biggl(
&\sum_{j=0}^{m}K_j(0)Y_j,\,
\sum_{j=0}^{m}K_j(1)Y_j,\,
\ldots,\sum_{j=0}^{m}K_j(m)Y_j
\biggr).
\end{align*}
\end{theorem}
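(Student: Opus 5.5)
The plan is a standard Poisson-summation (discrete Fourier) argument on the abelian group $\R^n$, using the pairing split \eqref{eq:pairingsplit}. Fix a nontrivial additive character $\chi$ of $\F_q$ and define, for $u,v\in\R^n$, $\langle u,v\rangle=\eps(u\cdot v)=\sum_{k=1}^n\phi_S(u_k)\cdot\tau_S(v_k)$. Because the annihilator form is non-degenerate (Proposition~\ref{prop:nondeg}), the map $v\mapsto\chi(\langle\cdot,v\rangle)$ is an isomorphism from $\R^n$ onto its character group. Since $C$ is an $\F_q$-subspace, the orthogonality relation then gives $\sum_{u\in C}\chi(\langle u,v\rangle)=|C|$ if $v\in C^\circ$ and $0$ otherwise.

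With this in hand, I would write
\[
B_{C^\circ}(Y)=\sum_{v\in\R^n}\frac{1}{|C|}\sum_{u\in C}\chi(\langle u,v\rangle)\prod_{k=1}^nY_{\widetilde{\wt}_S(v_k)}
=\frac{1}{|C|}\sum_{u\in C}\prod_{k=1}^n\Bigl(\sum_{b\in\R}\chi\bigl(\phi_S(u_k)\cdot\tau_S(b)\bigr)Y_{\widetilde{\wt}_S(b)}\Bigr),
\]
after swapping sums and factoring over coordinates, which is valid because both the character and the monomial are products over $k$.

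The key step is to evaluate the inner sum. Since $\tau_S$ is a bijection $\R\to\F_q^m$, I substitute $z=\tau_S(b)$, so the inner sum becomes $\sum_{z\in\F_q^m}\chi(y\cdot z)Y_{\wt(z)}$ with $y=\phi_S(u_k)$. Grouping by $\wt(z)=j$, this equals $\sum_j Y_j\sum_{\wt(z)=j}\chi(y\cdot z)$. The classical Krawtchouk evaluation \cite[Ch.~5]{MS} states that $\sum_{\wt(z)=j}\chi(y\cdot z)=K_j(\wt(y))$. I would verify this briefly by factoring over coordinates: positions where $y$ is zero contribute $(q-1)$ per nonzero $z$-entry, and positions where $y$ is nonzero contribute $-1$ per nonzero $z$-entry, since a nontrivial character sums to zero over $\F_q$. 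Choosing $h$ nonzero positions inside the support of $y$ then yields the stated formula for $K_j(i)$. With $i=\wt_S(u_k)$, the inner factor is therefore $\sum_jK_j(\wt_S(u_k))Y_j$.

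Substituting back, the product over $k$ becomes $\prod_k X_{\wt_S(u_k)}$ evaluated at $X_i=\sum_jK_j(i)Y_j$, and summing over $u\in C$ gives exactly $A_C$ at those arguments, divided by $|C|$. The main point requiring care is the character-orthogonality step, specifically that the characters $\chi(\langle\cdot,v\rangle)$ are trivial on $C$ exactly when $v\in C^\circ$. This depends only on $C$ being an $\F_q$-subspace, so that $u\mapsto\langle u,v\rangle$ is either zero on $C$ or surjective onto $\F_q$. The remaining bookkeeping, keeping $\phi_S$ on the $C$ side and $\tau_S$ on the dual side, is dictated by \eqref{eq:pairingsplit}.
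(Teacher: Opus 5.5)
Your proposal is correct and follows the paper's proof essentially step for step. Both use the character-sum indicator for $C^\circ$, obtained from the surjectivity of $u\mapsto\eps(u\cdot v)$ on the $\F_q$-subspace $C$, then interchange the sums, factor over coordinates, substitute $w=\tau_S(v_k)$ via \eqref{eq:pairingsplit}, and apply the Krawtchouk evaluation. Your remark on non-degeneracy and your direct check of the Krawtchouk character sum are extras that the paper omits or cites, and they do not change the route.
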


\begin{proof}
Fix a nontrivial additive character
$\chi:\F_q\longrightarrow\mathbb C^{*}.$
For a fixed $v\in\R^n$, consider the map $L_v:C\longrightarrow\F_q,$ defined by 
$L_v(u)=\eps(u\cdot v).$
This is an $\F_q$-linear map.
If $v\in C^{\circ}$, then  $\eps(u\cdot v)=0$
for every $u\in C.$
Hence
\[
\sum_{u\in C}\chi\bigl(\eps(u\cdot v)\bigr)
=\sum_{u\in C}\chi(0)
=|C|.
\]
Now suppose that $v\notin C^{\circ}$. Then $L_v$ is a nonzero
$\F_q$-linear map from $C$ to the one-dimensional space $\F_q$ and
hence $L_v(C)=\F_q$. For any $\alpha\in\F_q$, choose $u_\alpha\in C$ such that
$L_v(u_\alpha)=\alpha$. Then the solutions of $L_v(u)=\alpha$ are
exactly the vectors $u_\alpha+w$ for  $w\in\ker L_v$.
Thus every $\alpha\in\F_q$ has the same number of preimages, namely
$|\ker L_v|=|C|/q$. Therefore
\[
\sum_{u\in C}\chi\bigl(\eps(u\cdot v)\bigr)
=
\frac{|C|}{q}\sum_{\alpha\in\F_q}\chi(\alpha)
=0,
\]
since $\chi$ is a nontrivial additive character.
Therefore, we have shown that
\begin{equation}\label{eq:charindicator}
\sum_{u\in C}\chi\bigl(\eps(u\cdot v)\bigr)
=
\begin{cases}
|C|, & v\in C^{\circ},\\
0,   & v\notin C^{\circ}.
\end{cases}
\end{equation}
By Equations~(\ref{eq:B}),(\ref{eq:charindicator}) and  extending the sum for  all $v\in\R^n$, we get 
\begin{align*}
&\sum_{v\in\R^n}
\left(
\sum_{u\in C}\chi\bigl(\eps(u\cdot v)\bigr)
\right)
\prod_{k=1}^{n}
Y_{\widetilde{\wt}_S(v_k)}\\
&~~~~~~~~~~~=
\sum_{v\in C^\circ}
\left(
\sum_{u\in C}\chi\bigl(\eps(u\cdot v)\bigr)
\right)
\prod_{k=1}^{n}
Y_{\widetilde{\wt}_S(v_k)}
+
\sum_{v\notin C^\circ}
\left(
\sum_{u\in C}\chi\bigl(\eps(u\cdot v)\bigr)
\right)
\prod_{k=1}^{n}
Y_{\widetilde{\wt}_S(v_k)}\\\\
&~~~~~~~~~~~=
|C|B_{C^\circ}(Y_0,\ldots,Y_m) + 0.
\end{align*}
Interchanging the two finite sums, we obtain
\begin{align*}
|C|\,B_{C^{\circ}}(Y_0,\ldots,Y_m)
&=
\sum_{u\in C}
\sum_{v\in\R^n}
\chi\bigl(\eps(u\cdot v)\bigr)
\prod_{k=1}^{n}
Y_{\widetilde{\wt}_S(v_k)}.
\end{align*}
Hence, for fixed $u=(u_1,\ldots,u_n)\in C$, since $\eps$ is
$\F_q$-linear and $\chi$ is an additive character,
\[
\chi\bigl(\eps(u\cdot v)\bigr)
=
\chi\left(\sum_{k=1}^{n}\eps(u_kv_k)\right)
=
\prod_{k=1}^{n}\chi\bigl(\eps(u_kv_k)\bigr).
\]
Therefore
\begin{align*}
&\sum_{v\in\R^n}
\chi\bigl(\eps(u\cdot v)\bigr)
\prod_{k=1}^{n}Y_{\widetilde{\wt}_S(v_k)}
\\
&=
\sum_{v_1\in\R}\cdots\sum_{v_n\in\R}
\prod_{k=1}^{n}
\left[
\chi\bigl(\eps(u_kv_k)\bigr)
Y_{\widetilde{\wt}_S(v_k)}
\right]
\\
&=
\prod_{k=1}^{n}
\left(
\sum_{v_k\in\R}
\chi\bigl(\eps(u_kv_k)\bigr)
Y_{\widetilde{\wt}_S(v_k)}
\right),
\end{align*}
and 
\begin{align*}
|C|\,B_{C^{\circ}}(Y_0,\ldots,Y_m)
=
\sum_{u\in C}
\prod_{k=1}^{n}
\left(
 \sum_{v_k\in\R}
 \chi\bigl(\eps(u_kv_k)\bigr)
 Y_{\widetilde{\wt}_S(v_k)}
\right).
\end{align*}
Now we consider the inner sum for a fixed $u_k\in\R$. Set
$w=\tau_S(v_k)$, Since $\tau_S:\R\to\F_q^m$ is a bijection, as
$v_k$ runs through $\R$, the vector $w$ runs through all of $\F_q^m$, and 
by the definition of $\widetilde{\wt}_S$,
$\widetilde{\wt}_S(v_k)=\wt(\tau_S(v_k))=\wt(w)$.
Also, by Equation~\eqref{eq:pairingsplit},
$\eps(u_kv_k)=\phi_S(u_k)\cdot\tau_S(v_k)=\phi_S(u_k)\cdot w$.
Therefore, after the change of variable $w=\tau_S(v_k)$,
\[
\sum_{v_k\in\R}
\chi\bigl(\eps(u_kv_k)\bigr)
Y_{\widetilde{\wt}_S(v_k)}
=
\sum_{w\in\F_q^m}
\chi\bigl(\phi_S(u_k)\cdot w\bigr)
Y_{\wt(w)}.
\]
If we group the vectors $w\in\F_q^m$ according to their Hamming
weight we get
\begin{align*}
\sum_{w\in\F_q^m}
\chi\bigl(\phi_S(u_k)\cdot w\bigr)
Y_{\wt(w)}
&=
\sum_{j=0}^{m}
Y_j
\sum_{\substack{w\in\F_q^m\\ \wt(w)=j}}
\chi\bigl(\phi_S(u_k)\cdot w\bigr).
\end{align*}
The classical character sum over vectors of Hamming weight $j$ gives
\cite[Ch.~5]{MS}
$$
\sum_{\substack{w\in\F_q^m\\ \wt(w)=j}}
\chi(z\cdot w)
=
K_j(\wt(z)).
$$
Taking $z=\phi_S(u_k)$ and using
$
\wt(\phi_S(u_k))=\wt_S(u_k),
$ 
we obtain
\[
\sum_{v_k\in\R}
\chi\bigl(\eps(u_kv_k)\bigr)
Y_{\widetilde{\wt}_S(v_k)}
=
\sum_{j=0}^{m}
K_j\bigl(\wt_S(u_k)\bigr)Y_j,
\]
and 
substituting this into the previous product gives
\begin{align*}
|C|\,B_{C^{\circ}}(Y_0,\ldots,Y_m)
=
\sum_{u\in C}
\prod_{k=1}^{n}
\left(
\sum_{j=0}^{m}
K_j\bigl(\wt_S(u_k)\bigr)Y_j
\right).
\end{align*}
By Equation~\eqref{eq:A}, and letting
$ 
X_i=\sum_{j=0}^{m}K_j(i)Y_j$, for $0\le i\le m$,
we have that for each coordinate $u_k$,
\[
X_{\wt_S(u_k)}
=
\sum_{j=0}^{m}
K_j\bigl(\wt_S(u_k)\bigr)Y_j.
\]
Therefore,
\[
|C|\,B_{C^{\circ}}(Y_0,\ldots,Y_m)
=
A_C(X_0,\ldots,X_m),
\]
with the above substitutions, which provides the required
identity.
\end{proof}

Theorem~\ref{thm:mwpart} is a refinement of Theorem~\ref{thm:mw}. 
For example, let $m=2$ and $n=3$. Suppose two codewords have
coordinate Gray-weight patterns $(2,0,0)$ and $(1,1,0)$. Both have
total Gray weight $2$, so the ordinary two-variable enumerator records
both by the same monomial $X^4Y^2$. In the composition enumerator,
however, the first contributes $X_0^2X_2$, whereas the second
contributes $X_0X_1^2$.

\begin{remark}{\em
The identities in Theorem~\ref{thm:mw} are recovered from
Theorem~\ref{thm:mwpart} by the substitutions $X_i\longmapsto X^{m-i}Y^i$ and 
$Y_j\longmapsto X^{m-j}Y^j.$
}
\end{remark} 
\section{Quantum-code constructions}\label{sec:css}
The results of the previous sections provide a natural way to obtain
CSS-type quantum codes from QP codes. The main point is that the
annihilator dual remains within the QP family, so dual-containing
conditions can be checked directly over $\R$. When a duality-preserving
coordinate map is available, the same information can be transferred to
ordinary Euclidean duality over $\F_q$. We give the ring and field versions below. Related annihilator-CSS constructions for index one appear in
\cite{BM,BMcorr,AS}; the statements below use the arbitrary-index
duality developed above. The ring version does not require a Gray map,
whereas the field version uses a coordinatewise duality-preserving map
from Section~\ref{sec:gray}.
\vskip 3pt
Let $C_1,C_2\subseteq\R^n$ be two QP codes. We  use  the
equivalence $C_2^\circ\subseteq C_1$ if and only if
$C_1^\circ\subseteq C_2$. Indeed, applying $(\cdot)^\circ$ reverses
inclusions, and $(C^\circ)^\circ=C$ by Theorem~\ref{thm:basic}(e).
\vskip 3pt
Since $\R$ is a finite commutative Frobenius ring, its character module
$\widehat{\R}=\Hom((\R,+),\mathbb C^\ast)$ is isomorphic to $\R$ as an
$\R$-module. In particular, $\R$ admits a generating additive character
$\chi:\R\longrightarrow\mathbb C^\ast$. We use the following
standard consequence of the generating-character property.

\begin{lemma}\label{lem:generating-character}
Let $\chi$ be a generating additive character of $\R$. If $t\in\R$
satisfies $\chi(rt)=1$ for every $r\in\R$, then $t=0$.
\end{lemma}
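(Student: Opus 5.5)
The plan is to use the definition of a generating character: $\chi$ is generating when the map $\R\to\widehat{\R}$, $a\mapsto\chi_a$ with $\chi_a(r)=\chi(ar)$, is an isomorphism of $\R$-modules. Equivalently, $\ker\chi$ contains no nonzero ideal of $\R$. We will reduce the statement to this injectivity, so the proof is a short unwinding of definitions.

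First, let $t\in\R$ satisfy $\chi(rt)=1$ for every $r\in\R$. Then the character $\chi_t:r\mapsto\chi(tr)$ is the trivial character of $(\R,+)$, which is $\chi_0$. Because $a\mapsto\chi_a$ is injective, $\chi_t=\chi_0$ forces $t=0$. An equivalent way to see this is that the principal ideal $t\R=\{rt:r\in\R\}$ lies inside $\ker\chi$, and a generating character's kernel contains no nonzero ideal. Hence $t\R=0$, and so $t=t\cdot1=0$.

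The only real point is which definition of ``generating character'' to take as primary. If one starts from the ``no nonzero ideal in the kernel'' formulation, the argument is the one-liner above. If one starts from the isomorphism $\widehat{\R}\cong\R$ stated before the lemma, one has to note that this isomorphism can be taken to be $a\mapsto\chi_a$ for the chosen $\chi$. For a finite commutative Frobenius ring this is the standard equivalence: a character $\chi$ is generating if and only if $a\mapsto\chi_a$ is injective, and injectivity implies surjectivity because $|\widehat{\R}|=|\R|$.

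For a concrete check in our setting, take $\chi=\chi_0\circ\eps$ with $\chi_0$ a nontrivial additive character of $\F_q$. Suppose $\chi_0(\eps(rt))=1$ for all $r$. Replacing $r$ by $cr$ with $c\in\F_q$, the value $\eps(rt)$ lies in the kernel of the nontrivial character $c'\mapsto\chi_0(c'\,\eps(rt))$ of $\F_q$, so $\eps(rt)=0$ for all $r$. Proposition~\ref{prop:nondeg} with $n=1$ then gives $t=0$. This shows that such a generating $\chi$ exists and that the lemma holds for it.
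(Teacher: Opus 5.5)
Your argument is correct and matches the paper's proof: $\chi(rt)=1$ for all $r$ puts the ideal $\R t$ inside $\ker\chi$, which for a generating character forces $\R t=\{0\}$ and hence $t=t\cdot 1=0$. Your extra check that $\chi_0\circ\eps$ is generating (via Proposition~\ref{prop:nondeg}) is a valid addition that the paper does not include.
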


\begin{proof}
The condition $\chi(rt)=1$ for every $r\in\R$ gives
$Rt\subseteq\ker\chi$. Since the kernel of a generating character
contains no nonzero ideal of $\R$, we have $Rt=\{0\}$. As $1\in\R$,
it follows that $t=0$.
\end{proof}

For $u=(a|b)$ and $v=(a'|b')$ in $\R^{2n}$, where
$a,b,a',b'\in\R^n$, define the $\R$-valued symplectic form by
$\ip{u}{v}_s=b\cdot a'-b'\cdot a$. If
$\mathcal C\subseteq\R^{2n}$ is an $\R$-submodule, we write
\begin{align*}
\mathcal C^{\perp_s}
&=\{v\in\R^{2n}:\ip{u}{v}_s=0\text{ for every }u\in\mathcal C\}
\end{align*}
for its symplectic orthogonal. We also define its character orthogonal by
\begin{align*}
\mathcal C^{\perp_\chi}
&=\{v\in\R^{2n}:\chi(\ip{u}{v}_s)=1
   \text{ for every }u\in\mathcal C\}.
\end{align*}
The value $1$ appears in the second definition because $\chi$ takes
values in the multiplicative group $\mathbb C^\ast$. This is the
multiplicative-character version of the orthogonality used in
\cite{Nadella}.
\vskip 3pt
For $(a|b)\in\R^{2n}$, its symplectic weight is
$\swt(a|b)=\#\{i:(a_i,b_i)\ne(0,0)\}$. The corresponding equality
between the symplectic and character orthogonals appears in
\cite[Lemma~7]{Nadella}. We include the short proof in our notation.

\begin{lemma}\label{lem:symplectic-character}
Let $\mathcal C\subseteq\R^{2n}$ be an $\R$-submodule. Then
$\mathcal C^{\perp_\chi}=\mathcal C^{\perp_s}$.
\end{lemma}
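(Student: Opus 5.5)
The plan is to prove the two inclusions separately; the harder one rests on Lemma~\ref{lem:generating-character} and on $\mathcal C$ being an $\R$-submodule.

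\textbf{The inclusion $\mathcal C^{\perp_s}\subseteq\mathcal C^{\perp_\chi}$.} If $v\in\mathcal C^{\perp_s}$, then $\ip{u}{v}_s=0$ for every $u\in\mathcal C$. Hence $\chi(\ip{u}{v}_s)=\chi(0)=1$, because an additive character sends $0$ to $1$. So $v\in\mathcal C^{\perp_\chi}$.

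\textbf{The inclusion $\mathcal C^{\perp_\chi}\subseteq\mathcal C^{\perp_s}$.} Let $v=(a'|b')\in\mathcal C^{\perp_\chi}$ and fix $u=(a|b)\in\mathcal C$. Set $t=\ip{u}{v}_s\in\R$.

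First I would record the $\R$-linearity of the symplectic form in its first argument. For $r\in\R$ we have $ru=(ra|rb)$, and
\[
\ip{ru}{v}_s=(rb)\cdot a'-b'\cdot(ra)=r\,(b\cdot a'-b'\cdot a)=rt.
\]
This uses only that $\R$ is commutative and that the $\R$-valued dot product is $\R$-bilinear, as in the computation in Lemma~\ref{lem:selfadj}.

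Since $\mathcal C$ is an $\R$-submodule, $ru\in\mathcal C$ for every $r\in\R$. The hypothesis $v\in\mathcal C^{\perp_\chi}$ then gives $\chi(rt)=\chi(\ip{ru}{v}_s)=1$ for all $r\in\R$. Lemma~\ref{lem:generating-character} now yields $t=0$. As $u\in\mathcal C$ was arbitrary, $v\in\mathcal C^{\perp_s}$, which completes the proof.

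The only step needing care is the scaling identity $\ip{ru}{v}_s=r\ip{u}{v}_s$, since the form is defined with the ``crossed'' terms $b\cdot a'-b'\cdot a$. It holds because each term is linear in the first argument. Everything else is immediate from the earlier lemmas.
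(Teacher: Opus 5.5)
Your proof is correct and follows the paper's argument exactly. The inclusion $\mathcal C^{\perp_s}\subseteq\mathcal C^{\perp_\chi}$ comes from $\chi(0)=1$, and the reverse inclusion comes from $\chi(rt)=\chi(\ip{ru}{v}_s)=1$ for all $r\in\R$ (using that $\mathcal C$ is a submodule) followed by Lemma~\ref{lem:generating-character}. Your explicit verification of $\ip{ru}{v}_s=r\ip{u}{v}_s$ for the crossed form supplies a step the paper uses without comment.
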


\begin{proof}
If $v\in\mathcal C^{\perp_s}$, then $\ip{u}{v}_s=0$ for every
$u\in\mathcal C$, and therefore
$\chi(\ip{u}{v}_s)=\chi(0)=1$. Hence
$\mathcal C^{\perp_s}\subseteq\mathcal C^{\perp_\chi}$.

Conversely, let $v\in\mathcal C^{\perp_\chi}$ and fix
$u\in\mathcal C$. Put $t=\ip{u}{v}_s$. Since $\mathcal C$ is an
$\R$-submodule, $ru\in\mathcal C$ for every $r\in\R$. Thus
$1=\chi(\ip{ru}{v}_s)=\chi(r\ip{u}{v}_s)=\chi(rt)$ for every
$r\in\R$. By Lemma~\ref{lem:generating-character}, $t=0$. Thus
$\ip{u}{v}_s=0$. Since
$u\in\mathcal C$ was arbitrary, $v\in\mathcal C^{\perp_s}$.
\end{proof}

We now state  the ring version of the CSS construction in this
notation.

\begin{theorem}\label{thm:cssR}
Let $C_1,C_2\subseteq\R^n$ be QP codes satisfying
$C_2^\circ\subseteq C_1$, and put
$\mathcal C=C_1^\circ\times C_2^\circ\subseteq\R^{2n}$. Then
$\mathcal C^{\perp_s}=\mathcal C^{\perp_\chi}=C_2\times C_1$ and,
in particular, $\mathcal C\subseteq\mathcal C^{\perp_\chi}$.
Consequently, there exists an $((n,K,d))_{|\R|}$ stabilizer code, where
$K=|C_1||C_2|/|\R|^n$.
\vskip 2pt \noindent
If $K>1$, then
\begin{align*}
d=\min\bigl\{\,\swt(a|b):\ &(a|b)\in
(C_2\times C_1)\setminus(C_1^\circ\times C_2^\circ)\,\bigr\}.
\end{align*}
If $K=1$, then
\begin{align*}
d=\min\bigl\{\,\swt(a|b):\ &0\ne(a|b)\in C_2\times C_1\,\bigr\}.
\end{align*}
\end{theorem}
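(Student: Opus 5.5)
The plan is to first compute $\mathcal C^{\perp_s}$ directly from the definition of the symplectic form, then invoke Lemma~\ref{lem:symplectic-character} for the character orthogonal, and finally appeal to the standard stabilizer-code construction over a finite Frobenius ring with a generating character.

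\emph{Step 1: computing $\mathcal C^{\perp_s}$.} Let $v=(a'|b')\in\R^{2n}$, and let $u=(a|b)$ range over $\mathcal C=C_1^\circ\times C_2^\circ$. Then
\[
\ip{u}{v}_s=b\cdot a'-b'\cdot a .
\]
Since $\mathcal C$ is a product, I will take $u=(a|0)$ and $u=(0|b)$ separately. This shows that $v\in\mathcal C^{\perp_s}$ if and only if two conditions hold: $b'\cdot a=0$ for all $a\in C_1^\circ$, and $b\cdot a'=0$ for all $b\in C_2^\circ$. By Theorem~\ref{thm:basic}(b), these say $b'\in (C_1^\circ)^{\perp_{\mathcal R}}=(C_1^\circ)^\circ$ and $a'\in(C_2^\circ)^\circ$. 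Theorem~\ref{thm:basic}(e) then gives $b'\in C_1$ and $a'\in C_2$. Hence
\[
\mathcal C^{\perp_s}=C_2\times C_1 .
\]
Since $\mathcal C$ is an $\R$-submodule, Lemma~\ref{lem:symplectic-character} yields $\mathcal C^{\perp_\chi}=\mathcal C^{\perp_s}$.

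\emph{Step 2: $\mathcal C\subseteq\mathcal C^{\perp_\chi}$.} This amounts to $C_1^\circ\subseteq C_2$ and $C_2^\circ\subseteq C_1$. The second is the hypothesis. The first follows from it by the inclusion-reversing equivalence noted at the start of this section.

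\emph{Step 3: parameters of the stabilizer code.} I would invoke the stabilizer formalism over finite Frobenius rings with a generating character (as in \cite{Nadella}). A submodule $\mathcal C\subseteq\R^{2n}$ that is self-orthogonal for the character pairing defines an abelian stabilizer group of error operators. The resulting code has dimension
\[
K=\frac{|\R|^n}{|\mathcal C|},
\]
because the symplectic character form is non-degenerate and $|\mathcal C||\mathcal C^{\perp_\chi}|=|\R|^{2n}$. By Theorem~\ref{thm:basic}(d), $|C_i^\circ|=|\R|^n/|C_i|$, so $|\mathcal C|=|\R|^{2n}/(|C_1||C_2|)$. Substituting gives
\[
K=\frac{|C_1||C_2|}{|\R|^n}.
\]
The minimum distance is the least symplectic weight of an undetectable nontrivial error. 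For $K>1$ these are the elements of $\mathcal C^{\perp_\chi}\setminus\mathcal C=(C_2\times C_1)\setminus(C_1^\circ\times C_2^\circ)$. For $K=1$ the usual convention takes the minimum over all nonzero elements of $\mathcal C^{\perp_\chi}=C_2\times C_1$.

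\emph{Main obstacle.} The delicate point is Step 3. It relies on the general Frobenius-ring stabilizer theorem being applicable, and in particular on the counting identity $|\mathcal C||\mathcal C^{\perp_\chi}|=|\R|^{2n}$. I would justify this identity from the fact that the pairing $(u,v)\mapsto\chi(\ip{u}{v}_s)$ identifies $\R^{2n}$ with its character group. That identification holds because $\chi$ is generating, using the same argument as Lemma~\ref{lem:generating-character} applied coordinatewise. Alternatively, the identity can be checked directly from the computation in Step 1 together with Theorem~\ref{thm:basic}(d).
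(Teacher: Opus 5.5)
Your proposal is correct and follows the paper's proof. You compute $\mathcal C^{\perp_s}$ by testing against $(a|0)$ and $(0|b)$ with Theorem~\ref{thm:basic}(b),(e), pass to $\mathcal C^{\perp_\chi}$ by Lemma~\ref{lem:symplectic-character}, get $\mathcal C\subseteq\mathcal C^{\perp_\chi}$ from inclusion reversal, and read off $K$ and $d$ from \cite{Nadella} with Theorem~\ref{thm:basic}(d). Your one addition, the check of $|\mathcal C|\,|\mathcal C^{\perp_\chi}|=|\R|^{2n}$, is left by the paper to the citation of \cite{Nadella}; it follows at once from $\mathcal C^{\perp_\chi}=C_2\times C_1$ and Theorem~\ref{thm:basic}(d).
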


\begin{proof}
First, we compute  the symplectic orthogonal of $\mathcal C$. Let
$(x|y)\in\R^{2n}$. By definition,
$(x|y)\in\mathcal C^{\perp_s}$ if and only if
$b\cdot x-y\cdot a=0$ for every $a\in C_1^\circ$ and
$b\in C_2^\circ$.
\vskip 3pt
Taking $a=0$ gives $b\cdot x=0$ for every $b\in C_2^\circ$. Hence
$x\in(C_2^\circ)^{\perp_{\mathcal R}}$. By
Theorem~\ref{thm:basic}(b) and (e),
$(C_2^\circ)^{\perp_{\mathcal R}}
=(C_2^\circ)^\circ=C_2$, so $x\in C_2$.
Similarly, taking $b=0$ gives $y\cdot a=0$ for every
$a\in C_1^\circ$, and therefore
$y\in(C_1^\circ)^{\perp_{\mathcal R}}
=(C_1^\circ)^\circ=C_1$. Thus
$\mathcal C^{\perp_s}\subseteq C_2\times C_1$.
\vskip 3pt
Conversely, let $(x|y)\in C_2\times C_1$. For
$a\in C_1^\circ$ and $b\in C_2^\circ$,
Theorem~\ref{thm:basic}(b) gives $b\cdot x=0$ and $y\cdot a=0$.
Hence $\ip{(a|b)}{(x|y)}_s=b\cdot x-y\cdot a=0$. Therefore
$C_2\times C_1\subseteq\mathcal C^{\perp_s}$, and we obtain
$\mathcal C^{\perp_s}=C_2\times C_1$. By
Lemma~\ref{lem:symplectic-character},
$\mathcal C^{\perp_\chi}=\mathcal C^{\perp_s}=C_2\times C_1$.
\vskip 3pt
Next, we verify the self-orthogonality condition. The hypothesis
$C_2^\circ\subseteq C_1$ implies
$C_1^\circ\subseteq(C_2^\circ)^\circ=C_2$. Together with the original
inclusion, this gives
$\mathcal C=C_1^\circ\times C_2^\circ
\subseteq C_2\times C_1=\mathcal C^{\perp_\chi}$.
The stabilizer criterion of \cite{Nadella} therefore applies.
\vskip 3pt
It remains to compute $K$. By Theorem~\ref{thm:basic}(d),
$|C_i|\,|C_i^\circ|=|\R|^n$ for $i=1,2$, and hence
$|C_i^\circ|=|\R|^n/|C_i|$. Therefore
\begin{align*}
|\mathcal C|
&=|C_1^\circ|\,|C_2^\circ|
 =\frac{|\R|^{2n}}{|C_1||C_2|}.
\end{align*}

The  dimension of the stabilizer code attached to $\mathcal C$ is
$K=|\R|^{n}/|\mathcal C|$, where
\[
K=\frac{|\R|^{n}}{|\mathcal C|}
=|\R|^{n}\cdot\frac{|C_1||C_2|}{|\R|^{2n}}
=\frac{|C_1||C_2|}{|\R|^{n}}.
\]

Finally, if $K>1$, the stabilizer criterion gives
$d=\min\{\swt(z):z\in
\mathcal C^{\perp_\chi}\setminus\mathcal C\}$. Using
$\mathcal C^{\perp_\chi}=C_2\times C_1$ and
$\mathcal C=C_1^\circ\times C_2^\circ$ gives the stated formula.
If $K=1$, \cite{Nadella}  gives
$d=\min\{\swt(z):0\ne z\in\mathcal C^{\perp_\chi}\}$, which gives
the second formula.
\end{proof}

\begin{corollary}
Let $C\subseteq\R^n$ satisfy $C^\circ\subseteq C$, and put
$K=|C|^2/|\R|^n$. Then there exists an $((n,K,d))_{|\R|}$
stabilizer code. If $K>1$, then $d=\min\bigl\{\,\swt(a|b) : (a|b)\in
(C\times C)\setminus(C^\circ\times C^\circ)\,\bigr\}.$
whereas if $K=1$, then $d=\min\bigl\{\,\swt(a|b) : 0\ne(a|b)\in C\times C\,\bigr\}.$
\end{corollary}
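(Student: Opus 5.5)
This corollary is the special case $C_1=C_2=C$ of Theorem~\ref{thm:cssR}, so the plan is to substitute and check that each piece of that theorem reduces to the stated one.

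\textbf{Hypothesis.} Setting $C_1=C_2=C$, the hypothesis $C_2^\circ\subseteq C_1$ becomes exactly the assumed inclusion $C^\circ\subseteq C$. Hence Theorem~\ref{thm:cssR} applies with $\mathcal C=C^\circ\times C^\circ$.

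\textbf{Conclusions of the theorem.} With this choice, Theorem~\ref{thm:cssR} gives
\[
\mathcal C^{\perp_s}=\mathcal C^{\perp_\chi}=C\times C
\qquad\text{and}\qquad
\mathcal C\subseteq\mathcal C^{\perp_\chi}.
\]
It follows that an $((n,K,d))_{|\R|}$ stabilizer code exists with $K=|C_1||C_2|/|\R|^n=|C|^2/|\R|^n$.

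\textbf{Minimum distance.} The two distance formulas are obtained by the same substitution.
\begin{itemize}
\item[(i)] If $K>1$, the set $(C_2\times C_1)\setminus(C_1^\circ\times C_2^\circ)$ becomes $(C\times C)\setminus(C^\circ\times C^\circ)$.
\item[(ii)] If $K=1$, the formula becomes the minimum of $\swt(a|b)$ over nonzero $(a|b)\in C\times C$.
\end{itemize}

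\textbf{Obstacles.} There is no real obstacle: every needed fact, namely Theorem~\ref{thm:basic}(d),(e) and the stabilizer criterion, is already used inside Theorem~\ref{thm:cssR}. The only point to be careful about is that the order of factors $C_2\times C_1$ versus $C_1^\circ\times C_2^\circ$ becomes irrelevant once $C_1=C_2$. It is also worth remarking that $K\ge1$ automatically. Indeed, $C^\circ\subseteq C$ together with $|C||C^\circ|=|\R|^n$ gives $|C|^2\ge|C||C^\circ|=|\R|^n$.
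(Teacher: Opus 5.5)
Your proposal is correct and follows the paper's own argument, which is the one-line specialization of Theorem~\ref{thm:cssR} to $C_1=C_2=C$; the hypothesis, the value of $K$, and both distance formulas reduce exactly as you describe. Note that $C$ must implicitly be a QP code (an $\R$-submodule) for Theorem~\ref{thm:cssR} to apply, and your side remark that $K\ge1$ follows from $|C^\circ|\le|C|$ and $|C|\,|C^\circ|=|\R|^n$ is correct.
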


\begin{proof}
Apply Theorem~\ref{thm:cssR} with $C_1=C_2=C$.
\end{proof}

We now move to $\F_q$ using a coordinatewise duality-preserving Gray
map, which converts the annihilator-dual inclusion into the usual
Euclidean-dual inclusion required in the CSS construction. 

\begin{theorem}\label{thm:cssFq}
Assume a coordinatewise duality-preserving Gray map $\Phi$ exists. Let $C_1,C_2\subseteq\R^n$ be
QP codes with $C_2^{\circ}\subseteq C_1$, and set
$D_i=\Phi(C_i)\subseteq\F_q^{mn}$, $k_i=\dim_{\F_q}D_i$. Then
$D_2^{\dpe}\subseteq D_1$ and there is a quantum code with parameters $[[\,mn,\;k_1+k_2-mn,\;d\,]]_q,$ where $d=\min\bigl\{\wt(c):c\in(D_1\setminus D_2^{\dpe})\cup
(D_2\setminus D_1^{\dpe})\bigr\}.$
\end{theorem}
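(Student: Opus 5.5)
The argument has two parts. First we transport the annihilator-dual inclusion to a Euclidean inclusion over $\F_q$ using the duality-preserving property of $\Phi$. Then we invoke the classical CSS construction over $\F_q$ for the pair $D_1,D_2$.

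\textbf{Step 1 (duality transfer).} By definition of a coordinatewise duality-preserving map, $\Phi(C^\circ)=\Phi(C)^{\dpe}$ for every $\F_q$-subspace $C\subseteq\R^n$. In particular this holds for the QP code $C_2$, giving $D_2^{\dpe}=\Phi(C_2^\circ)$. Since $\Phi=\Phi_S$ is induced by an invertible matrix, it is an injective $\F_q$-linear map and therefore preserves inclusions. Hence $C_2^\circ\subseteq C_1$ yields $D_2^{\dpe}=\Phi(C_2^\circ)\subseteq\Phi(C_1)=D_1$.

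Next we use the equivalence recorded at the start of Section~\ref{sec:css}: $C_2^\circ\subseteq C_1$ if and only if $C_1^\circ\subseteq C_2$. Applying $\Phi$ again gives $D_1^{\dpe}\subseteq D_2$. Alternatively, this follows directly from $D_2^{\dpe}\subseteq D_1$ by taking Euclidean duals over $\F_q$. The dimensions are $k_i=\dim_{\F_q}C_i$, and by Theorem~\ref{thm:basic}(d), $\dim D_i^{\dpe}=mn-k_i$.

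\textbf{Step 2 (CSS).} We apply the standard CSS construction over $\F_q$, for instance as in \cite{CRSS,KKKS}. Given linear codes $D_1,D_2\subseteq\F_q^{N}$ with $D_2^{\dpe}\subseteq D_1$, it produces an $[[N,k_1+k_2-N,d]]_q$ stabilizer code. Concretely, we take the stabilizer $D_1^{\dpe}\times D_2^{\dpe}$ inside $\F_q^{2N}$ with the symplectic form. This is the field case of the computation in Theorem~\ref{thm:cssR}, carried out with Euclidean duals in place of annihilator duals.

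\begin{itemize}
\item \emph{Self-orthogonality.} The symplectic orthogonal of $D_1^{\dpe}\times D_2^{\dpe}$ is $D_2\times D_1$, which contains $D_1^{\dpe}\times D_2^{\dpe}$ by Step 1.
\item \emph{Dimension.} The stabilizer has $\F_q$-dimension $(N-k_1)+(N-k_2)$, so the code dimension is $N-(2N-k_1-k_2)=k_1+k_2-N$.
\item \emph{Minimum distance.} The distance is the minimum symplectic weight over $(D_2\times D_1)\setminus(D_1^{\dpe}\times D_2^{\dpe})$. A pure $X$-type or $Z$-type error achieves $\min\{\wt(c):c\in(D_1\setminus D_2^{\dpe})\cup(D_2\setminus D_1^{\dpe})\}$.
\end{itemize}

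Setting $N=mn$ completes the argument.

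\textbf{Main obstacle.} The delicate point is the distance. Under the usual convention, the CSS distance is the minimum of the two separate minima stated in the theorem. However, a mixed pair $(a|b)$ with $a\in D_2$ and $b\in D_1$ has symplectic weight at least $\max(\wt(a),\wt(b))$. If $(a|b)$ lies outside the stabilizer, then either $a\notin D_1^{\dpe}$ or $b\notin D_2^{\dpe}$. In the first case $a\in D_2\setminus D_1^{\dpe}$, and in the second $b\in D_1\setminus D_2^{\dpe}$. Either way the symplectic weight is at least the stated minimum. So the formula is exact, with the degenerate case $K=1$ handled by convention as in Theorem~\ref{thm:cssR}. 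I would write out this inequality explicitly rather than merely cite the CSS theorem.
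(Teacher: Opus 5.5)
Your proposal is correct and follows the paper's proof: you transfer $C_2^\circ\subseteq C_1$ to $D_2^{\dpe}\subseteq D_1$ via $\Phi(C_i^\circ)=D_i^{\dpe}$ and injectivity of $\Phi$, compute the quantum dimension as $k_1-(mn-k_2)$, and invoke the classical CSS construction over $\F_q$. Your explicit stabilizer $D_1^{\dpe}\times D_2^{\dpe}$, with symplectic orthogonal $D_2\times D_1$, and your $\max(\wt(a),\wt(b))$ argument for the exact distance spell out what the paper takes directly from the standard CSS theorem.
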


\begin{proof}
Since \(\Phi\) is duality-preserving, we have
\(\Phi(C_i^\circ)=D_i^{\perp_E}\) for \(i=1,2\).
Applying \(\Phi\) to the inclusion \(C_2^\circ\subseteq C_1\), we get
\[
    D_2^{\perp_E}\subseteq D_1.
\]
Thus \(D_1\) and \(D_2\) satisfy the usual CSS condition over
\(\mathbb F_q\).

Let \(k_i=\dim_{\mathbb F_q}D_i\). Since \(D_2\) has length \(mn\),
we have \(\dim D_2^{\perp_E}=mn-k_2\). Hence the dimension of the
resulting quantum code is
\[
    k_1-\dim D_2^{\perp_E}
    =k_1-(mn-k_2)
    =k_1+k_2-mn.
\]

Therefore, by the classical CSS construction, there exists a quantum
code with parameters
\[
    [[mn,k_1+k_2-mn,d]]_q,
\]
where $ d=
    \min\bigl\{
        \operatorname{wt}(c):
        c\in
        (D_1\setminus D_2^{\perp_E})
        \cup
        (D_2\setminus D_1^{\perp_E})
    \bigr\}.$
\end{proof}

\begin{example}{\em
We take
$f(x)=x^{4}+x+1$, which is primitive over $\F_2$, so $\R\cong\F_{16}$ is a
field. QP codes of index $5$ are therefore just $\F_{16}$-linear
codes of length $5$, and since $f$ is irreducible Theorem~\ref{thm:global}
identifies the annihilator dual with the ordinary Euclidean dual over
$\F_{16}$. Take the two-generator code $C=\langle a,b\rangle$ with
\[
\begin{aligned}
a&=\bigl(x+x^{3},\;1,\;1+x+x^{3},\;x^{2},\;x^{2}\bigr),\\
b&=\bigl(1,\;1+x^{2},\;x+x^{3},\;x+x^{2}+x^{3},\;0\bigr).
\end{aligned}
\]
One checks $a\cdot a=a\cdot b=b\cdot b=0$ in $\R$, so $C$ is annihilator
self-orthogonal by Corollary~\ref{thm:selforth}(a); as an $\F_{16}$-code $C$
is a self-orthogonal $[5,2]$ code, so $C^{\circ}$ is a $[5,3]$ code and
$\dim_{\F_2}\Psi(C)=8$, $\dim_{\F_2}\Psi(C^{\circ})=12$.
Theorem~\ref{thm:cssFq} with $C_1=C_2=C^{\circ}$ therefore gives
\[
k=2\cdot12-20=4 ,
\]
and exhaustive enumeration over the $48$ Gray maps
$\Phi=\Psi\cdot(I_5\otimes S)$ with $SS^{\mathsf T}=G$ shows that the
minimum weight of $\Phi(C^{\circ})\setminus\Phi(C)$ equals $4$ for
\emph{every} such $S$. Thus this choice gives a $[[20,4,4]]_2$ binary
stabilizer code. By contrast, for the self-orthogonal code generated by
$\bigl(1+x,\,x+x^{3},\,x+x^{3},\,x^{2}+x^{3},\,1+x+x^{2}+x^{3}\bigr)$ and
$\bigl(1+x^{2},\,x^{2}+x^{3},\,x^{3},\,1+x+x^{3},\,x+x^{3}\bigr)$ the same
computation returns minimum weight $2$ or $3$ depending on $S$, so the
choice of the self-orthogonal code, and not only of the Gray map, affects
the resulting distance.}
\end{example}

\begin{example}\label{ex:nonfield}{\em
We take
\[
f(x)=x^{2}+1=(x+1)^{2}\quad\text{over }\F_2,\qquad
\R=\F_2[x]/\langle x^{2}+1\rangle ,
\]
a chain ring of order $4$ with maximal ideal $\langle 1+x\rangle$. Here
$G=I_2$, so $S=\left(\begin{smallmatrix}0&1\\1&0\end{smallmatrix}\right)$
satisfies $SS^{\mathsf T}=G$ and $\Phi_S$ is duality-preserving.

\smallskip\noindent
\textbf{(a)} For $n=8$ let $B=\langle g_1,g_2,g_3\rangle\le\R^{8}$ with
\[
\begin{aligned}
g_1&=(x,\,x,\,1,\,1,\,1,\,1,\,x,\,x),\\
g_2&=(x,\,0,\,x,\,1,\,1+x,\,x,\,1+x,\,0),\\
g_3&=(0,\,x,\,0,\,1+x,\,x,\,1+x,\,x,\,1).
\end{aligned}
\]
All six products $g_i\cdot g_j$ vanish in $\R$, so $B\subseteq B^{\circ}$ by
Corollary~\ref{thm:selforth}(a). Here $\dim_{\F_2}\Psi(B)=5$ and
$\dim_{\F_2}\Psi(B^{\circ})=11$, and Theorem~\ref{thm:cssFq} with
$C_1=C_2=B^{\circ}$ gives a $[[16,6,4]]_2$ stabilizer code.

\smallskip\noindent
\textbf{(b)} For $n=10$ let $B=\langle g_1,g_2,g_3\rangle\le\R^{10}$ with
\[
\begin{aligned}
g_1&=(1,\,1,\,1,\,1,\,1,\,x,\,x,\,x,\,x,\,1),\\
g_2&=(1,\,x,\,1,\,0,\,x,\,0,\,1+x,\,1+x,\,0,\,0),\\
g_3&=(1+x,\,0,\,x,\,1,\,x,\,1,\,x,\,1,\,1+x,\,0).
\end{aligned}
\]
Again $g_i\cdot g_j=0$ for all $i,j$, $\dim_{\F_2}\Psi(B)=6$,
$\dim_{\F_2}\Psi(B^{\circ})=14$, and one obtains a $[[20,8,4]]_2$
stabilizer code.

\smallskip
Both codes meet the best
known lower bound on the minimum distance for their length and dimension \cite{Grassl}. Note that in (a) one has $\dim_{\F_2}\Psi(B)=5$, which is
\emph{not} a multiple of $m=2$: the module $B$ is not free. Over a field
alphabet every submodule of $\R^n$ is free, so $\dim_{\F_2}\Psi(B)$ would
be a multiple of $m$ and the quantum dimension
$k=mn-2\dim_{\F_2}\Psi(B)$ could only lie in $mn-2m\mathbb Z$. The pair
$(16,6)$ is thus unattainable from a field alphabet with $m=2$ through
this one-code specialization $C_1=C_2$ of Theorem~\ref{thm:cssFq}; it is
reached here because $\R$ is not a field, so that $\R^n$ has non-free
submodules. For comparison, in our random searches over the field
$\F_4=\F_2[x]/\langle x^2+x+1\rangle$ with $n=10$, the largest minimum
distance found for the parameters $[[20,8]]_2$ was $3$.}
\end{example}

\subsection{Steane enlargement over $\R$}
We first recall why an enlargement is useful in the present setting.
Suppose that $B\subseteq\R^n$ satisfies $B\subseteq B^\circ$, and we take
 $C_1=C_2=B^\circ$ in the CSS construction from
Theorem~\ref{thm:cssFq}. Since $\Phi$ is duality-preserving,
$
\Phi(B^\circ)=\Phi(B)^{\perp_E}$.
If $r=\dim_{\F_q}\Phi(B)$, then
$
\dim_{\F_q}\Phi(B^\circ)=mn-r$, and hence the corresponding CSS quantum dimension is
$ 
2(mn-r)-mn=mn-2r$.
In particular, this dimension has the same parity as $mn$.

Steane's enlargement \cite{Steane} gives a way to increase the quantum
dimension by replacing one dual-containing classical code by a larger
nested code. The $q$-ary version was established by Ling, Luo and Xing
\cite{LLX}. We use that result here; the point specific to the present
paper is that the required nested pair over $\F_q$ is obtained from
nested QP codes over $\R$ by means of the annihilator dual and a
duality-preserving Gray map.
\vskip 3pt
Let $B'\subsetneq B\subseteq\R^n$
be QP codes and assume that $B\subseteq B^\circ.$ Since annihilator duality reverses inclusions, $B'\subseteq B$ gives
$B^\circ\subseteq B'^\circ.$ Thus we have the chain
$B'\subseteq B\subseteq B^\circ\subseteq B'^\circ.$
After applying a duality-preserving Gray map, this becomes the nested
pair needed for Steane enlargement.

\begin{proposition}\label{thm:steaneR}
Assume that a coordinatewise duality-preserving Gray map $\Phi$ exists.
Let $B'\subsetneq B\subseteq\R^n$ be QP codes satisfying
$B\subseteq B^\circ$. We denote $C=\Phi(B^\circ)$ and $C'=\Phi(B'^\circ),$
with $k=\dim_{\F_q}C$ and $k'=\dim_{\F_q}C'$.
If $k'\ge k+2$, then the $q$-ary Steane enlargement  gives a quantum code with parameters  $\Bigl[\Bigl[\,mn,\;k+k'-mn,\;d\,\Bigr]\Bigr]_q$ with
\[
d\ge
\min\left\{
d(C),\;
\left\lceil\frac{q+1}{q}\,d(C')\right\rceil
\right\}.
\]
\end{proposition}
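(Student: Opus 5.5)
The plan is to reduce the statement to the $q$-ary Steane enlargement theorem of Ling, Luo and Xing \cite{LLX}. That theorem says: if $C\subseteq C'\subseteq\F_q^N$ are linear codes with $C^{\dpe}\subseteq C$ and $\dim C'\ge\dim C+2$, then there is a stabilizer code $[[N,\dim C+\dim C'-N,d]]_q$ with $d\ge\min\{d(C),\lceil\frac{q+1}{q}d(C')\rceil\}$. So it suffices to produce, inside $\F_q^{mn}$, the chain $C^{\dpe}\subseteq C\subseteq C'$ from the data over $\R$. Then we substitute $N=mn$, $\dim C=k$ and $\dim C'=k'$.

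First, transport the chain $B'\subseteq B\subseteq B^\circ\subseteq B'^\circ$ recorded just before the proposition. The map $\Phi=\Phi_S$ is an injective $\F_q$-linear map on $\R^n$, so it preserves inclusions. This gives $\Phi(B')\subseteq\Phi(B)\subseteq C\subseteq C'$. The inclusion $B^\circ\subseteq B'^\circ$ follows from $B'\subseteq B$, because orthogonality to every element of $B$ implies orthogonality to every element of $B'$.

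Second, identify $C^{\dpe}$. Since $\Phi$ is duality-preserving, $\Phi(B^\circ)=\Phi(B)^{\dpe}$, so $C=\Phi(B)^{\dpe}$. Taking Euclidean duals over $\F_q$ gives $C^{\dpe}=\Phi(B)$. Alternatively, apply duality preservation to the subspace $B^\circ$ and use Theorem~\ref{thm:basic}(e): $\Phi((B^\circ)^\circ)=\Phi(B)$. Combined with $B\subseteq B^\circ$, this yields $C^{\dpe}=\Phi(B)\subseteq\Phi(B^\circ)=C$. Hence $C$ is Euclidean dual-containing, and $C\subseteq C'$.

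Third, check the dimension condition and read off the parameters. The hypothesis $k'\ge k+2$ is exactly the dimension requirement in \cite{LLX}. The quantum dimension is $k+k'-mn$, and the distance bound is the one from that theorem with $d(C)$ and $d(C')$ the Hamming minimum distances of the $\F_q$-images. No Hamming isometry property of $\Phi$ is needed, since the bound is stated directly in terms of $C$ and $C'$.

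The only point requiring care is the second step: one must use duality preservation for the specific subspaces involved and $(B^\circ)^\circ=B$. That step is a direct consequence of Theorem~\ref{thm:gray} and Theorem~\ref{thm:basic}. The rest is quoting \cite{LLX} with its hypotheses matched precisely, namely $C^{\dpe}\subseteq C\subseteq C'$ with the dimension gap at least $2$.
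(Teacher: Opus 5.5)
Your proposal is correct and follows the paper's proof essentially step for step. Duality preservation gives $C^{\perp_E}=\Phi(B)$; pushing the chain $B\subseteq B^\circ\subseteq B'^\circ$ through the injective map $\Phi$ gives $C^{\perp_E}\subseteq C\subseteq C'$, after which the $q$-ary enlargement theorem of \cite{LLX} is quoted with $k'\ge k+2$. Your second derivation of $C^{\perp_E}=\Phi(B)$ via $(B^\circ)^\circ=B$ is also valid, but it is not needed.
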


\begin{proof}
We only need to verify that the classical codes obtained from $B$ and
$B'$ satisfy the hypotheses of the $q$-ary enlargement theorem.
Since $\Phi$ is duality-preserving, we have $\Phi(B^\circ)=\Phi(B)^{\perp_E}.$ 
By the definition of $C$, this gives
$C^{\perp_E}=\Phi(B).$
Similarly,
$C'^{\perp_E}=\Phi(B').$

Now $B'\subseteq B$ implies
$B^\circ\subseteq B'^\circ.$
Together with the hypothesis $B\subseteq B^\circ$, we obtain $B\subseteq B^\circ\subseteq B'^\circ.$
Applying the injective map $\Phi$ gives $\Phi(B)\subseteq\Phi(B^\circ)\subseteq\Phi(B'^\circ).$
Using the definitions of $C$ and $C'$, this is
$C^{\perp_E}\subseteq C\subseteq C'.$
Thus $C$ is dual-containing and $C'$ is an enlargement of $C$.
The assumption $k'\ge k+2$ is exactly the required dimension
condition. The $q$-ary Steane enlargement of 
\cite{LLX} therefore gives the stated quantum dimension and distance
bound.
\end{proof}

\subsection{Codes meeting the best known bounds}
Table~\ref{tab:bestknown} gives the parameters obtained from
Theorem~\ref{thm:cssFq} and Proposition~\ref{thm:steaneR}. 
In each case, the listed distance reaches the best known lower bound in
\cite{Grassl}. When the lower and upper bounds in \cite{Grassl} are equal,
the code is optimal. The matrices defining the duality-preserving Gray maps for the codes in Table~1 are listed in Table~\ref{tab:structures}, while the corresponding generators of $B$ and, for the Steane constructions, of $B'$, are given in Table~\ref{tab:generators}.
 All computations were carried out in
\textsc{Magma} \cite{Bosma}.

\begin{table}[ht!]
\centering
\small
\begin{tabular}{llllll}
\toprule
Parameters & Construction & $\R$ & $n$ & Ref. \cite{Grassl} & Status\\
\midrule
$[[16,6,4]]_2$   & CSS    & $\F_2[x]/\langle x^2+1\rangle$          & 8  & 4   & optimal\\
$[[20,8,4]]_2$   & CSS    & $\F_2[x]/\langle x^2+1\rangle$          & 10 & 4--5 & meets lower bound\\
$[[21,12,3]]_2$  & Steane & $\F_8=\F_2[x]/\langle x^3+x+1\rangle$   & 7  & 3--4 & meets lower bound\\
$[[24,4,6]]_2$   & Steane & $\F_2[x]/\langle (x^2+x+1)^2\rangle$    & 6  & 6--8 & meets lower bound\\
\midrule
$[[12,5,3]]_3$   & Steane & $\F_3[x]/\langle (x-1)^3\rangle$        & 4  & 3--4 & meets lower bound\\
$[[12,6,3]]_3$   & Steane & $\F_3[x]/\langle (x-1)^3\rangle$        & 4  & 3   & optimal\\
$[[14,8,3]]_3$   & Steane & $\F_3[x]/\langle x^2-1\rangle$          & 7  & 3   & optimal\\
$[[15,8,3]]_3$   & Steane & $\F_3[x]/\langle (x-1)^3\rangle$        & 5  & 3--4 & meets lower bound\\
$[[16,10,3]]_3$  & Steane & $\F_3[x]/\langle x^2-1\rangle$          & 8  & 3   & optimal\\
$[[18,12,3]]_3$  & Steane & $\F_3[x]/\langle x^2-1\rangle$          & 9  & 3   & optimal\\
$[[20,14,3]]_3$  & Steane & $\F_9=\F_3[x]/\langle x^2+2x+2\rangle$  & 10 & 3   & optimal\\
\bottomrule
\end{tabular}
\caption{Quantum codes obtained from QP codes meeting the best-known
lower bound in \cite{Grassl}.}
\label{tab:bestknown}
\end{table}

For $[[12,5,3]]_3$ and $[[15,8,3]]_3$, we have $m=3$. Over a field,
the dimensions occurring in the CSS and Steane constructions are
multiples of $3$, and hence the resulting quantum dimension is also a
multiple of $3$. Therefore, the dimensions $5$ and $8$ cannot occur in
the field case. These two examples are possible because $\R$ is not a field and the
corresponding codes $B$ are non-free $\R$-modules with
$\dim_{\F_q}\Psi(B)=5.$

In each case, we write $B=\langle g_1,\dots,g_r\rangle$, and for the
Steane constructions we take $B'=\langle g_1,\dots,g_{r-1}\rangle$.
The generators are chosen so that $g_i\cdot g_j=0$ for all $i,j$,
and hence $B\subseteq B^\circ$.
For the Gray map, we use $\Phi_S$ with
$SS^{\mathsf T}=\lambda G$. The choices of $S$ used for the codes in
Table~\ref{tab:bestknown} were checked in \textsc{Magma} \cite{Bosma}, and they are in Table~\ref{tab:structures}. 

\begin{table}[ht!]
\centering
\footnotesize
\begin{tabular}{cccc}
\toprule
Ring & $G$ & $S$ & $\lambda$ \\
\midrule
$\mathbb{F}_2[x]/\langle x^2+1\rangle$, $\mathbb{F}_3[x]/\langle x^2-1\rangle$, and $\mathbb{F}_9$ 
& $I_m$ 
& Coordinate-reversing permutation matrix 
& $1$ \\
\addlinespace
$\mathbb{F}_8$  (over $\mathbb{F}_2$)
& $\begin{pmatrix} 1 & 0 & 0 \\ 0 & 0 & 1 \\ 0 & 1 & 0 \end{pmatrix}$
& $\begin{pmatrix} 1 & 1 & 1 \\ 0 & 1 & 1 \\ 1 & 0 & 1 \end{pmatrix}$
& $1$ \\
\addlinespace
 $\mathbb{F}_3[x]/\langle x^3-1\rangle$
& $\begin{pmatrix} 1 & 0 & 0 \\ 0 & 0 & 1 \\ 0 & 1 & 0 \end{pmatrix}$
& $\begin{pmatrix} 0 & 1 & 1 \\ 1 & 1 & 2 \\ 1 & 2 & 1 \end{pmatrix}$
& $2$ \\
\addlinespace
$\mathbb{F}_2[x]/\langle x^4+x^2+1\rangle$
& $\begin{pmatrix} 1 & 0 & 0 & 0 \\ 0 & 0 & 0 & 1 \\ 0 & 0 & 1 & 0 \\ 0 & 1 & 0 & 1 \end{pmatrix}$
& $\begin{pmatrix} 0 & 0 & 0 & 1 \\ 0 & 1 & 1 & 0 \\ 1 & 0 & 0 & 0 \\ 0 & 0 & 1 & 0 \end{pmatrix}$
& $1$ \\
\bottomrule
\end{tabular}
\caption{Choices for the codes in Table~\ref{tab:bestknown}.}
\label{tab:structures}
\end{table}

 \begin{table}[ht!]
\centering
\footnotesize
\begin{tabular}{lccl}
\toprule
Parameters & $\mathcal{R}$ & $n$ & Generators \\
\midrule
$[[16,6,4]]_2$ & $\mathbb{F}_2[x]/\langle x^2+1\rangle$ & 8 & $
\begin{aligned}[t]
g_1&=(x,\,x,\,1,\,1,\,1,\,1,\,x,\,x),\\
g_2&=(x,\,0,\,x,\,1,\,1+x,\,x,\,1+x,\,0),\\
g_3&=(0,\,x,\,0,\,1+x,\,x,\,1+x,\,x,\,1).  
\end{aligned} Example~\ref{ex:nonfield}
$ \\
\addlinespace
$[[20,8,4]]_2$ & $\mathbb{F}_2[x]/\langle x^2+1\rangle$ & 10 & $
\begin{aligned}[t]
g_1&=(1,\,1,\,1,\,1,\,1,\,x,\,x,\,x,\,x,\,1),\\
g_2&=(1,\,x,\,1,\,0,\,x,\,0,\,1+x,\,1+x,\,0,\,0),\\
g_3&=(1+x,\,0,\,x,\,1,\,x,\,1,\,x,\,1,\,1+x,\,0).
\end{aligned}
$ Example~\ref{ex:nonfield} \\
\addlinespace
$[[21,12,3]]_2$ & $\mathbb{F}_2[x]/\langle x^3+x+1\rangle$ & 7 & 
$\begin{aligned}[t]
g_1 &= (x+1,\ x^2+x,\ 1,\ x^2+1,\ x^2+x+1,\ x^2+1,\ x+1) \\
g_2 &= (x^2+1,\ x+1,\ 0,\ x+1,\ x^2+1,\ x,\ x)
\end{aligned}$ \\
\addlinespace
$[[24,4,6]]_2$ & $\mathbb{F}_2[x]/\langle x^4+x^2+1\rangle$ & 6 & 
$\begin{aligned}[t]
g_1 &= (x,\ x^2+x,\ x^3+x^2+x+1,\ 0,\ x^3+x^2,\ x^3+1) \\
g_2 &= (x^3+x^2+x+1,\ x^3+x+1,\ x,\ x^2,\ x^3,\ x^2) \\
g_3 &= (x^3+x^2,\ x^3+x^2+x+1,\ x+1,\ x^3+x^2+x,\ x^2,\ x+1)
\end{aligned}$ \\
\addlinespace
$[[12,5,3]]_3$ & $\mathbb{F}_3[x]/\langle x^3-1\rangle$ & 4 & 
$\begin{aligned}[t]
g_1 &= (2x^2+x,\ 2x^2+2x+2,\ x^2+2x,\ x+2) \\
g_2 &= (2x^2+x+2,\ x^2+x+2,\ 2x^2,\ x^2+2)
\end{aligned}$ \\
\addlinespace
$[[12,6,3]]_3$ & $\mathbb{F}_3[x]/\langle x^3-1\rangle$ & 4 & 
$\begin{aligned}[t]
g_1 &= (2x+1,\ x^2+2x,\ 2x^2+2x+2,\ 2x^2+x) \\
g_2 &= (0,\ 2x^2+x,\ x^2+2x,\ 2x+1)
\end{aligned}$ \\
\addlinespace
$[[14,8,3]]_3$ & $\mathbb{F}_3[x]/\langle x^2-1\rangle$ & 7 & 
$\begin{aligned}[t]
g_1 &= (1,\ 2x+2,\ x,\ 1,\ x,\ 2x,\ 2x+1) \\
g_2 &= (2x+2,\ 1,\ 2,\ x+2,\ x,\ 2,\ x)
\end{aligned}$ \\
\addlinespace
$[[15,8,3]]_3$ & $\mathbb{F}_3[x]/\langle x^3-1\rangle$ & 5 & 
$\begin{aligned}[t]
g_1 &= (2x+1,\ x^2+x+1,\ x+2,\ x^2+x+1,\ 2x^2+1) \\
g_2 &= (x^2,\ 2x^2,\ x^2+2,\ x^2+2,\ 2x+2)
\end{aligned}$ \\
\addlinespace
$[[16,10,3]]_3$ & $\mathbb{F}_3[x]/\langle x^2-1\rangle$ & 8 & 
$\begin{aligned}[t]
g_1 &= (x+2,\ 2x+2,\ 2x,\ 2x,\ x+1,\ x,\ 2,\ x+2) \\
g_2 &= (2x,\ x+2,\ 1,\ 0,\ 1,\ 1,\ x+1,\ x)
\end{aligned}$ \\
\addlinespace
$[[18,12,3]]_3$ & $\mathbb{F}_3[x]/\langle x^2-1\rangle$ & 9 & 
$\begin{aligned}[t]
g_1 &= (2x+2,\ x+1,\ x,\ 1,\ 2,\ x,\ 2x,\ 1,\ x+1) \\
g_2 &= (x,\ 2x+1,\ 2x,\ 2x+2,\ 2x+2,\ 0,\ 2x+1,\ x,\ 1)
\end{aligned}$ \\
\addlinespace
$[[20,14,3]]_3$ & $\mathbb{F}_9 = \mathbb{F}_3[x]/\langle x^2+2x+2\rangle$ & 10 & 
$\begin{aligned}[t]
g_1 &= (x+1,\ 2x,\ 1,\ x+2,\ x,\ x+1,\ 1,\ x+2,\ x+2,\ x) \\
g_2 &= (0,\ x+2,\ 2x,\ x,\ x+1,\ x+2,\ 0,\ 2x+2,\ 1,\ 1)
\end{aligned}$ \\
\bottomrule
\end{tabular}
\caption{Generators for various quantum codes.}
\label{tab:generators}
\end{table}

\section{Conclusion}\label{sec:conclusion}

In this paper, we studied annihilator duality for QP codes over
$\R=\F_q[x]/\langle f\rangle$ with $f(0)\ne0$. We showed that
$C^\circ=C^{\perp_{\mathcal R}}$, so the annihilator dual of a QP code
is again QP. We also described this duality in coefficient coordinates
through a symmetric Hankel Gram matrix and obtained the corresponding
twisted Euclidean description. For squarefree $f$, the Chinese Remainder decomposition reduces
annihilator duality to Euclidean duality on the constituents. This gives
criteria for self-orthogonal, self-dual, dual-containing and LCD QP
codes. We also obtained the MacWilliams
transform for the Gray image and its refinement to Gray-weight
compositions. We applied the duality results to CSS and
Steane-enlarged quantum-code constructions over $\R$ and, when a
coordinatewise duality-preserving Gray map exists, over $\F_q$. The
codes obtained include eleven binary and ternary stabilizer codes whose
 minimum distance meets the best known lower bound, six of
them optimal. 

\section*{Acknowledgment}
E. Martínez-Moro is partially supported by Grant {\small SAGACT-1
MCIN/AEI/10.13039/501100011033} y FEDER ``Una manera de hacer Europa''
PID2022-138906NB-C21 (2023/2027). Also financial support of the Department of Education of the Junta de Castilla y León and FEDER Funds is gratefully acknowledged (Reference: CLU-2025-1-02- IMUVA).
 D. Panario was funded by the Natural Sciences and Engineering Research Council of Canada (NSERC), reference number RPGIN-2024-05341.


\begin{thebibliography}{99}

\bibitem{ADLS}
A.~Alahmadi, S.~T.~Dougherty, A.~Leroy, P.~Sol\'e,
On the duality and the direction of polycyclic codes,
\emph{Adv. Math. Commun.} \textbf{10}(4) (2016), 921--929.

\bibitem{BMS}
M.~Bajalan, E.~Mart\'inez-Moro, S.~Szabo,
A transform approach to polycyclic and serial codes over rings,
\emph{Finite Fields Appl.} \textbf{80} (2022), 102014.


\bibitem{BMSSY}
M.~Bajalan, E.~Mart\'inez-Moro, R.~Sobhani, S.~Szabo,
G.~G.~Y{\i}lmazg\"u\c{c},
On the structure of repeated-root polycyclic codes over local rings,
\emph{Discrete Math.} \textbf{347}(1) (2024), 113715.




\bibitem{BM}
M.~Bajalan, E.~Mart\'inez-Moro,
Polycyclic codes over serial rings and their annihilator CSS
construction,
\emph{Cryptogr. Commun.} \textbf{17}(2) (2025), 283--306.

\bibitem{BMcorr}
M.~Bajalan, E.~Mart\'inez-Moro,
Correction to: Polycyclic codes over serial rings and their annihilator
CSS construction,
\emph{Cryptogr. Commun.} \textbf{17}(3) (2025), 661--664.

\bibitem{BP}
T.~Bag, D.~Panario,
Quasi-polycyclic and skew quasi-polycyclic codes over $\F_q$,
\emph{Finite Fields Appl.} \textbf{101} (2025), 102536.

\bibitem{BEA}
T.~P.~Berger, N.~El~Amrani,
Codes over finite quotients of polynomial rings,
\emph{Finite Fields Appl.} \textbf{25} (2014), 165--181.

\bibitem{Bosma}
W. Bosma, J. Cannon, and C. Playoust,
\newblock The {M}agma algebra system. {I}. {T}he user language,
\newblock \emph{J. Symbolic Comput.} \textbf{24} (1997), no. 3--4, 235--265,
\newblock Computational algebra and number theory (London, 1993).

\bibitem{CRSS}
A.~R.~Calderbank, E.~M.~Rains, P.~W.~Shor, N.~J.~A.~Sloane,
Quantum error correction via codes over $GF(4)$,
\emph{IEEE Trans. Inf. Theory} \textbf{44}(4) (1998), 1369--1387.

\bibitem{FMB}
A.~Fotue-Tabue, E.~Mart\'inez-Moro, J.~T.~Blackford,
On polycyclic codes over a finite chain ring,
\emph{Adv. Math. Commun.} \textbf{14}(3) (2020), 455--466.

\bibitem{Grassl}
M.~Grassl, Bounds on the minimum distance of linear codes and quantum
codes, online available at \url{http://www.codetables.de}, accessed  on 12 September 2026.

\bibitem{HL}
T.~Honold, I.~Landjev,
MacWilliams identities for linear codes over finite Frobenius rings,
in: D.~Jungnickel, H.~Niederreiter (Eds.),
\emph{Finite Fields and Applications (Augsburg, 1999)},
Springer, Berlin--Heidelberg, 2001, pp.~276--292.

\bibitem{KKKS}
A.~Ketkar, A.~Klappenecker, S.~Kumar, P.~K.~Sarvepalli,
Nonbinary stabilizer codes over finite fields,
\emph{IEEE Trans. Inf. Theory} \textbf{52}(11) (2006), 4892--4914.

\bibitem{LLX}
S.~Ling, J.~Luo, C.~Xing,
Generalization of Steane's enlargement construction of quantum codes and
applications,
\emph{IEEE Trans. Inf. Theory} \textbf{56}(8) (2010), 4080--4084.

\bibitem{LPS}
S.~R.~L\'opez-Permouth, B.~R.~Parra-Avila, S.~Szabo,
Dual generalizations of the concept of cyclicity of codes,
\emph{Adv. Math. Commun.} \textbf{3}(3) (2009), 227--234.

\bibitem{MS}
F.~J.~MacWilliams, N.~J.~A.~Sloane,
\emph{The Theory of Error-Correcting Codes},
North-Holland, Amsterdam, 1977.

\bibitem{Nadella}
S.~Nadella, A.~Klappenecker,
Stabilizer codes over Frobenius rings,
in \emph{Proc. IEEE Int. Symp. Inf. Theory} (2012), 165--169.

\bibitem{ONA}
H.~Ou-azzou, M.~Najmeddine, N.~Aydin,
On the algebraic structure of quasi-polycyclic codes and new quantum
codes,
\emph{Quantum Inf. Process.} \textbf{23} (2024), Article 95.


\bibitem{SL}
G.~Seroussi, A.~Lempel,
Factorization of symmetric matrices and trace-orthogonal bases in finite
fields,
\emph{SIAM J. Comput.} \textbf{9}(4) (1980), 758--767.

\bibitem{Steane}
A.~M.~Steane,
Enlargement of Calderbank--Shor--Steane quantum codes,
\emph{IEEE Trans. Inf. Theory} \textbf{45}(7) (1999), 2492--2495.

\bibitem{SPD}
K.~Suxena, O.~Prakash, I.~Debnath, P.~Sol\'e,
Optimal GQPC codes over the finite field $\F_q$,
\emph{Mathematics} \textbf{13}(22) (2025), 3655.

\bibitem{SU}
S.~Szabo, F.~Ulmer,
Duality preserving Gray maps for codes over rings,
\emph{J. Algebra Appl.} \textbf{16}(9) (2017), 1750161.


\bibitem{AS}
A.~Tiwari, R.~Sarma,
Polycyclic codes over the $\F_q$-algebra $\F_q^{\ell}$ and their
annihilator dual,
\emph{Comput. Appl. Math.} \textbf{44}(7) (2025), Article 358.

\bibitem{WSS}
R.~Wu, M.~Shi, P.~Sol\'e,
On the structure of $1$-generator quasi-polycyclic codes over finite
chain rings,
\emph{J. Appl. Math. Comput.} \textbf{68}(5) (2022), 3491--3503.

\end{thebibliography}
\end{document}